\documentclass[11pt]{llncs}

\usepackage{color}
\usepackage{url}

\usepackage[usenames,dvipsnames]{xcolor}
\usepackage{multirow}
\usepackage{graphics}
\usepackage{a4wide}
\usepackage{pdfpages}
\usepackage{enumerate}

\newcommand{\TBR}{{\rm TBR}}

\newcommand{\MAF}{{\rm MAF}}

\newtheorem{observation}[theorem]{Observation}

\newcommand{\blueblue}[1]{\textcolor{black}{#1}}

\newcommand{\blue}[1]{\textcolor{black}{#1}}

\newcommand{\newblue}[1]{\textcolor{black}{#1}}

\definecolor{applegreen}{rgb}{0.55, 0.71, 0.0}
\definecolor{antiquefuchsia}{rgb}{0.57, 0.36, 0.51}

\newcommand{\polish}[1]{\textcolor{black}{#1}}

\newcommand{\green}[1]{\textcolor{black}{#1}}
\newcommand{\fuchsia}[1]{\textcolor{black}{#1}}

\newcommand{\redred}[1]{\textcolor{black}{#1}}

\usepackage{amsmath}
\usepackage{graphicx}

\pagestyle{plain}

\title{New reduction rules for the tree bisection and reconnection distance}
\author{Steven Kelk\inst{1}, Simone Linz\inst{2}}

\institute{Department of Data Science and Knowledge Engineering (DKE),\\ Maastricht University, The Netherlands,\\ \email{steven.kelk@maastrichtuniversity.nl}
\and \fuchsia{School} of Computer Science, University of Auckland, New Zealand,\\
\email{s.linz@auckland.ac.nz}}

\providecommand{\keywords}[1]{\textit{Keywords:} #1}

\begin{document}
\maketitle

\begin{abstract}
Recently it was shown that, if the subtree and chain reduction rules have been applied exhaustively to two
unrooted phylogenetic trees, the reduced trees will have at most $15k-9$ taxa where $k$ is the TBR (Tree Bisection and Reconnection) distance between the two trees, and that this bound is tight. Here we propose five new reduction
rules and show that these further reduce the bound to $11k-9$. The new rules combine the ``unrooted
generator'' approach introduced in \cite{tightkernel} with a careful analysis of agreement forests to identify (i) situations when chains of length 3 can be further shortened without reducing the TBR distance, and (ii) situations when \green{small subtrees} can be identified whose deletion is guaranteed to reduce the TBR
distance by 1. To the best of our knowledge these are the first reduction rules that strictly enhance the reductive power of the subtree and chain reduction rules.  
\end{abstract}

\keywords{fixed-parameter tractability, tree bisection and reconnection, generator, kernelization, agreement forest, phylogenetic network, phylogenetic tree, hybridization number.}

\section{Introduction}
A phylogenetic tree is a tree whose leaves are bijectively labelled by a set of species (or, more generically, a set of \emph{taxa}) $X$ \cite{SempleSteel2003}. These trees are ubiquitous in the systematic study of evolution: \fuchsia{the} leaves represent contemporary species and the internal vertices of the tree represent hypothetical common ancestors. Over the years many techniques have been developed for inferring phylogenetic trees from (incomplete) biological data and under a range of different objective functions \cite{felsenstein2004inferring}. Here we are not concerned with inferring phylogenetic trees, but rather with quantifying the ``distance'' between two phylogenetic trees. Such a goal is well-motivated, since different methodologies for inferring phylogenetic trees sometimes yield trees with differing topologies, and reticulate evolutionary phenomena such as hybridization can cause different genes in the same genome to have different evolutionary histories \cite{HusonRuppScornavacca10}.\\

We focus on the Tree Bisection and Reconnection (TBR) distance, which is NP-hard to compute \cite{AllenSteel2001,hein1996complexity}. Informally, the TBR distance between two trees $T$ and $T'$, denoted $d_\TBR(T,T')$, is the minimum number of topological rearrangement moves that need to be applied to transform $T$ into $T'$, where such a move involves detaching a subtree and attaching it elsewhere. It was proven in 2001 \cite{AllenSteel2001} that the question ``Is $d_\TBR(T,T') \leq k?$'' can be answered in time $f(k) \cdot \text{poly}(|X|)$, where $f$ is a computable function that depends only on $k$. In other words: the problem is \emph{fixed parameter tractable} \cite{cygan2015parameterized}. Specifically, the authors proved that the \fuchsia{two} polynomial-time \emph{subtree} and \emph{chain} reduction rules preserve the TBR distance and reduce the number of taxa \fuchsia{to at most $28 \cdot d_\TBR(T,T')$ for any two unrooted phylogenetic trees $T$ and $T'$. The} reduced instance, known as a \emph{kernel}, can then be solved with any exact algorithm, yielding the $f(k) \cdot \text{poly}(|X|)$ running time \cite{kernelization2019}. The analysis in  \cite{AllenSteel2001} made  heavy use of a powerful abstraction known as an \emph{agreement forest}, which in a nutshell partitions the two trees into smaller, non-overlapping fragments which do have the same topology in both trees (see e.g. \cite{shi2018parameterized,whidden2013fixed} for overviews of algorithmic results and \cite{extremal2019} for extremal results). Via a different technique, based on bounded search trees,  running times of $O( 4^k \cdot \text{poly}(|X|) )$ \cite{whidden2013fixed} and
then $O( 3^k \cdot \text{poly}(|X|) )$ were later obtained \cite{chen2015parameterized}. However, the question remained whether a kernel with fewer than $28 \cdot d_\TBR(T,T')$ taxa could be obtained.

Recently, in \cite{tightkernel}, it was shown that the subtree and chain reduction rules actually reduce the instance to size $15\cdot d_\TBR(T,T')-9$, and that there are instances for which this bound is tight. Interestingly, the sharpened analysis does not leverage agreement forests at all, but instead recasts the computation of TBR distance as a phylogenetic \emph{network} inference problem, where phylogenetic networks are essentially the generalization of phylogenetic trees to graphs. Namely, the TBR distance of $T$ and $T'$ is equal to the minimum value of $|E|-(|V|-1)$, ranging over all phylogenetic networks $N=(V,E)$ that embed $T$ and $T'$ \cite{van2018unrooted}. The backbone topology of such minimal networks can be represented by \emph{unrooted generators}, and when viewed from this static perspective it becomes much easier to \fuchsia{analyze} the role of common chains in the trees.

In the present article we combine the agreement forest perspective of \cite{AllenSteel2001} with the network/generator perspective of \cite{tightkernel}, to obtain a new suite of five polynomial-time reduction rules. When applied alongside the subtree and chain reduction rules, these reduce the size of the kernel to $11\cdot d_\TBR(T,T')-9$. To leverage agreement forests, we first prove a general theorem which states that, given any disjoint set of common chains, there exists an optimal agreement forest in which all the chains are simultaneously preserved i.e. none of the chains are divided across two or more components of the forest. Crucially, this also holds for chains containing only 2 taxa, as long as the two taxa in the chain have a common parent in at least one of the trees. Such very short chains have not received a lot of attention in the literature, since the standard chain reduction \fuchsia{rule, which truncates long common chains to length 3, does not always preserve the TBR distance} if the chains are truncated to length 2. Nevertheless, the weaker ``simultaneous preservation'' property that we prove in this article \fuchsia{(see Theorem~\ref{thm:allchainsintact})}, \green{and which we believe to be of independent interest}, turns out to be quite powerful when combined with networks/generators. The fact that chains are preserved allows us to determine specific situations when it \emph{is} actually safe to reduce a chain to length 2 \green{(and sometimes to length 1)}, or even to identify an entire component of an optimal agreement forest (which can then be deleted, reducing the TBR distance by exactly 1). These insights directly inspire the new reduction rules presented in this article. To the best of our knowledge these new reduction rules are the first reduction rules for a phylogenetic distance problem which strictly improve upon the reductive power of the subtree and chain reduction rules. Other reduction rules, such as the \emph{cluster} reduction \fuchsia{\cite{baroni2006hybrids,bordewich2017fixed}}, tend to be very effective in practice, but do not yield improved (i.e. smaller) bounds on kernel size \cite{tightkernel}.

After presenting the main results, we show a family of tight examples i.e. tree pairs that, after applying all \fuchsia{seven} reduction rules, have exactly $11\cdot d_\TBR(T,T')-9$ taxa. We then conclude with a short reflection on potential avenues for further improving the $11\cdot d_\TBR(T,T')-9$ bound, and discuss a number of insights flowing from our analysis which might be useful when considering non-kernelization approaches for computing \fuchsia{the} TBR distance.

\section{Preliminaries}\label{sec:prelim}

Throughout this paper, $X$ denotes a finite set \polish{(of \emph{taxa})} with $|X|\geq 4$. \\

\noindent{\bf Unrooted phylogenetic trees and networks.} An {\it unrooted binary phylogenetic network} $N$ on $X$  is a  simple, connected, and undirected graph whose leaves are bijectively labeled with $X$ and whose other vertices \green{all have} degree 3.  Let $E$ and $V$ be the edge and vertex set of $N$, respectively. We define the {\it reticulation number} of $N$ as the number of edges in $E$ that need to be deleted from $N$ to obtain a spanning tree. More formally, we have $r(N) = |E|-(|V|-1)$. If $r(N)=0$, then $N$ is called an {\it unrooted binary phylogenetic tree} on $X$. An example of two unrooted binary phylogenetic trees is shown in Figure~\ref{fig:trees}. Now, let $T$ be an unrooted binary phylogenetic tree on $X$.  Two leaves, say $a$ and $b$, of $T$ are called a {\it cherry} $\{a,b\}$ of $T$ if they \fuchsia{are adjacent to a common vertex. We say that a vertex $v$ is the (unique) {\it parent} of a leaf $a$ in $N$ if $v$ is adjacent to  $a$.} For $X' \subset X$, we write $T[X']$ to denote the unique, minimal subtree of $T$ that connects all elements in $X'$. For brevity we call $T[X']$ the \emph{embedding} of  $X'$ in $T$. 
\newblue{If $X''$ is also a subset of $X$, we denote by $T[X']\cap T[X'']$ the set of \textcolor{black}{vertices} in $T$ that are contained in $T[X']$ and $T[X'']$.}
%\marginpar{SK. I think we need vertices rather than edges here, because Obs \ref{obs:disjoint} uses vertex-disjointness.} 
Furthermore, we refer to the unrooted phylogenetic tree on $X'$ obtained from $T[X']$ by suppressing degree-2 vertices as  the {\it restriction of $T$ to $X'$}  \green{and we denote this by} $T|X'$.  \\

\begin{figure}[t]
\center
\scalebox{1}{\input{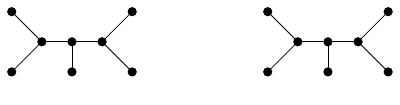_t}}
\caption{\blue{Two unrooted binary phylogenetic trees on $X=\{a,b,c,d,e\}$.}}
\label{fig:trees}
\end{figure}

Let $T$ be an unrooted binary phylogenetic tree on $X$. A {\it quartet} is an unrooted binary phylogenetic tree with exactly four leaves. For example, if $\{a,b,c,d\}\subseteq X$ , we say that $ab|cd$ is a quartet of $T$ if  the path from $a$ to $b$ does not intersect the path from $c$ to $d$. Note that, if $ab|cd$ is not a quartet of $T$, then either $ac|bd$ or $ad|bc$ is a quartet of $T$.  \\

Lastly, let $N$ be an unrooted binary phylogenetic network on $X$ and let $T$ be an unrooted binary phylogenetic tree on $X$. We say that $N$ {\it displays} $T$, if  $T$ can be obtained from a subtree of $N$ by suppressing degree-2 vertices. \\

%Now let $T$ be an unrooted binary phylogenetic tree on $X$, and let $X'\subseteq X$. The {\it restriction} of $T$ to $X'$, denoted by $T|X'$, is the phylogenetic tree on $X'$ obtained from the minimal subtree of $T$ that connects all the leaves in $X'$ by suppressing all degree-2 vertices.\\

\noindent{\bf Tree bisection and reconnection.} Let $T$ be an unrooted binary phylogenetic tree on \green{$X$}. Apply the following three-step operation to $T$:
\begin{enumerate}
\item Delete an edge in $T$ and suppress any resulting degree-2 vertex. Let $T_1$ and $T_2$ be the two resulting unrooted binary phylogenetic trees.
\item If $T_1$ (resp. $T_2$) has at least one edge, subdivide an edge in $T_1$ (resp. $T_2$) with a new vertex $v_1$ (resp. $v_2$) and otherwise set $v_1$ (resp. $v_2$) to be the single isolated vertex of $T_1$ (resp. $T_2$).
\item Add a new edge \polish{$\{v_1,v_2\}$} to obtain a new unrooted binary phylogenetic tree $T'$ on $X$.
% it was (v1,v2), I changed it to {v1, v2}
\end{enumerate}
We say that $T'$ has been obtained from $T$ by a single  {\it tree bisection and reconnection} (TBR) operation. \blueblue{An example of a TBR operation is illustrated in Figure~\ref{fig:tbr}.} Furthermore, we define the TBR {\it distance}  between two  unrooted binary phylogenetic trees $T$ and $T'$ on $X$,  denoted by $d_\TBR(T,T')$, to be the minimum number of TBR operations that is required to transform $T$ into $T'$. It is well known that $d_\TBR$ is a metric~\cite{AllenSteel2001}. By building on an earlier result by Hein et al.~\cite[Theorem 8]{hein1996complexity}, Allen and Steel~\cite{AllenSteel2001} showed that computing the TBR distance is an NP-hard problem.\\

\begin{figure}[t]
\center
\scalebox{1}{\input{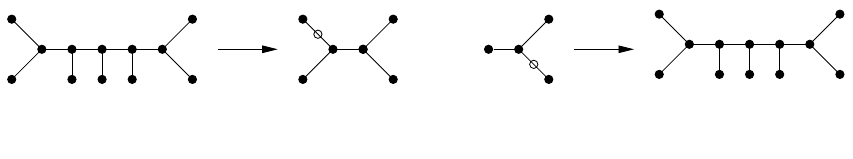_t}}
\caption{\blueblue{A single TBR operation that transforms $T$ into $T'$. First, $T_1$ and $T_2$ are obtained from $T$ by deleting the edge $\{u_1,u_2\}$ in $T$. Second, $T'$ is obtained from $T_1$ and $T_2$ by subdividing an edge in both trees as indicated by the open circles $v_1$ and $v_2$ and adding a new edge $\{v_1,v_2\}$.}}
\label{fig:tbr}
\end{figure}

\noindent{\bf \blue{Unrooted} minimum hybridization.} In \cite{van2018unrooted}, it was shown that computing the TBR distance for a pair of unrooted binary phylogenetic trees $T$ and $T'$ on $X$ is equivalent to computing the minimum number of \polish{extra edges} required to simultaneously explain $T$ and $T'$. More precisely, we set
$$h^u(T, T') = \min_ N\{r(N)\},$$ where the minimum is taken over all unrooted binary phylogenetic networks $N$ on $X$ that display $T$ and $T'$. \polish{The value $h^u(T, T')$ is known as the \emph{(unrooted) hybridization number} of $T$ and $T'$ \cite{van2018unrooted}.} \\
%Historically (in the context of two rooted phylogenetic trees), this number is referred to as the {\it minimum hybridization number} for $T$ and $T'$ (see e.g. \cite{bordewich2007computing,vanIersel20161075,ierselLinz2013,whidden2013fixed} and references therein). \\

%\noindent{\sc Unrooted-Hybridization-Number (UHN)}\\
%\noindent{\bf Input.} Two unrooted binary phylogenetic trees $T$ and $T'$ on $X$.\\
%\noindent{\bf Output.} An unrooted binary phylogenetic network $N$ on $X$ that displays $T$ and $T'$ and such that $r(N)=h^u(T,T')$.\\

\noindent The aforementioned equivalence is given in the next theorem that was established in~\cite[Theorem 3]{van2018unrooted}.

\begin{theorem}\label{t:tbr-equiv}
Let $T$ and  $T'$ be two unrooted binary phylogenetic trees on $X$. Then $$d_\TBR(T,T')=h^u(T,T').$$
\end{theorem}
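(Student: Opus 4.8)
\medskip
\noindent\emph{Proof plan.} The plan is to establish the two inequalities $h^u(T,T') \leq d_\TBR(T,T')$ and $d_\TBR(T,T') \leq h^u(T,T')$ separately, in each case by induction on the quantity on the right-hand side, using at each step a ``local surgery'' that converts one TBR operation into one extra edge of a network and, conversely, removes one edge of a network at the cost of one TBR operation.

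For $h^u(T,T') \leq d_\TBR(T,T')$ I would induct on $k := d_\TBR(T,T')$. The base case $k = 0$ is trivial, since then $T = T'$ and the tree $N := T$ displays both with $r(N) = 0$. For $k \geq 1$, the fact that $d_\TBR$ is a metric lets me pick a tree $S$ with $d_\TBR(T,S) = 1$ and $d_\TBR(S,T') = k-1$; by the inductive hypothesis there is a network $N_S$ displaying $S$ and $T'$ with $r(N_S) \leq k-1$. Since $T$ arises from $S$ by a single TBR operation --- delete an edge of $S$, obtaining two pieces, then join them by a new edge incident to a subdivision point of each piece --- I would mimic this inside $N_S$: fix a subtree $W$ of $N_S$ that suppresses to $S$, add to $N_S$ one new edge joining fresh degree-$3$ vertices placed on two edges of $W$ lying over the two $S$-edges subdivided by the operation (with a minor variant when a piece is a single leaf), and call the result $N$. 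Then $r(N) = r(N_S) + 1 \leq k$; $N$ still displays $T'$, since the subtree of $N_S$ witnessing this is affected at most by subdividing two of its edges; and deleting from $W$ the edges lying over the edge removed by the TBR operation and rerouting through the new edge yields a subtree of $N$ that suppresses to $T$. Hence $h^u(T,T') \leq k$.

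For $d_\TBR(T,T') \leq h^u(T,T')$ I would choose $N$ realising $h^u(T,T') =: r$, i.e.\ displaying both $T$ and $T'$ with $r(N) = r$ minimum, and induct on $r$. If $r = 0$ then $N$ is a binary phylogenetic tree on $X$ with no degree-$2$ vertices, so $T = N = T'$ and $d_\TBR(T,T') = 0$. If $r \geq 1$, fix subtrees $S_T$ and $S_{T'}$ of $N$ witnessing that $N$ displays $T$ and $T'$. As $r(N) \geq 1$, $N$ contains a cycle $C$, and as $S_{T'}$ is a tree some edge $e = \{x,y\}$ of $C$ lies outside $S_{T'}$; by minimality of $r$, such an $e$ must lie in $S_T$, for otherwise deleting $e$ and suppressing the two resulting degree-$2$ vertices would produce a network of reticulation number $r-1$ still displaying both trees. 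Let $N'$ be obtained from $N$ by deleting $e$ and suppressing; it is connected, has $r(N') = r-1$, and still displays $T'$. Deleting $e$ from $S_T$ splits it into subtrees $S_A \ni x$ and $S_B \ni y$ whose leaf sets partition $X$ and which suppress to the two-component forest obtained from $T$ by the bisection step of a TBR operation; since one edge at $x$ in $S_T$ was $e$, the vertex $x$ has degree at most $2$ in $S_A$, hence is a leaf or an interior point of an edge of $T|X_A$ (with $X_A$ the leaf set of $S_A$), and symmetrically for $y$. Using the connectivity of $N'$ I would then reconnect $S_A$ and $S_B$ into a single subtree of $N'$ suppressing to a binary phylogenetic tree $\hat T$ on $X$ obtained from $T|X_A$ and $T|X_B$ by one reconnection; then $\{X_A, X_B\}$ is a two-component agreement forest of $T$ and $\hat T$, whence $d_\TBR(T,\hat T) \leq 1$, and since $N'$ witnesses $h^u(\hat T, T') \leq r-1$ the inductive hypothesis together with the triangle inequality give $d_\TBR(T,T') \leq 1 + d_\TBR(\hat T, T') \leq 1 + (r-1) = r$.

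I expect the reconnection step in the second part to be the main obstacle: one must exhibit inside the connected graph $N'$ a path joining the two pieces that is internally disjoint from $S_A \cup S_B$ and whose two endpoints sit at suppressible (degree-at-most-$2$) positions of $S_A$ and $S_B$ --- ideally at $x$ and $y$ themselves --- so that splicing it in creates neither a cycle nor a degree-$4$ vertex and therefore really does give a subtree suppressing to a binary tree on $X$. I would attack this by choosing $C$ so as to minimise the number of vertices of $S_T$ lying on it, truncating the reconnecting path at the last vertex of $S_A$ and the first vertex of $S_B$ that it meets, and invoking minimality of $r$ once more to force those endpoints to be suppressible. The remaining ingredients --- that a single TBR operation corresponds exactly to a common two-component agreement forest, and the routine bookkeeping of reticulation numbers under addition or deletion of edges and suppression of degree-$2$ vertices --- are straightforward.
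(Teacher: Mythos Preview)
The paper does not prove this theorem at all; it is quoted verbatim as \cite[Theorem~3]{van2018unrooted} with no argument given. So there is no in-paper proof to compare against, and the relevant question is simply whether your plan is sound.

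Your argument for $h^u(T,T')\le d_\TBR(T,T')$ is fine.

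For the converse, your inductive strategy is also correct, and the ``main obstacle'' you flag is in fact easier than you anticipate: neither a clever choice of cycle nor a second appeal to minimality of $r$ is needed. Since $e$ lay on a cycle, $N-e$ is connected; take any \emph{shortest} path $P$ in $N-e$ from a vertex of $S_A$ to a vertex of $S_B$, so only its endpoints $a\in S_A$, $b\in S_B$ meet $S_A\cup S_B$. The first edge of $P$ then has its other end outside $S_A\cup S_B$, so it is not an edge of $S_A$; since $a$ has degree at most $3$ in $N$, at most two of its incident edges lie in $S_A$, i.e.\ $a$ has degree $\le 2$ in $S_A$. Symmetrically $b$ has degree $\le 2$ in $S_B$. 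Hence $S_A\cup P\cup S_B$ is a subtree of $N-e$ of maximum degree $3$ whose labelled leaves are exactly $X$ (only $x$ or $y$ can arise as extra unlabelled leaves, and these are simply pruned), and it suppresses to your $\hat T$ with $\{X_A,X_B\}$ an agreement forest for $T$ and~$\hat T$.

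One technicality you should address explicitly: after deleting $e$ and suppressing the two resulting degree-$2$ vertices, the graph need not be \emph{simple} (parallel edges may appear), so it may fail the paper's definition of an unrooted binary phylogenetic network. This does not damage the argument --- any such artefact can itself be deleted, only lowering $r$ further --- but the induction should be phrased either over cubic multigraphs with labelled leaves, or with an explicit cleanup step.

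A non-inductive alternative, likely closer in spirit to the cited proof, bypasses the surgery entirely: extend $S_{T'}$ to a spanning tree $S'$ of $N$, delete from $S_T$ the at most $r(N)$ edges of $E(N)\setminus E(S')$, and check that the resulting $\le r(N)+1$ leaf-sets form an agreement forest for $T$ and $T'$; combined with $d_\TBR = d_\MAF-1$ this gives $d_\TBR\le h^u$ in one step. Your route is more hands-on but equally valid once the two points above are handled.
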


%\steven{If necessary, mention that \cite[Theorem 3]{van2018unrooted} actually shows that if you have an AF with at most
%$k$ components, then we can find a network with hybridization number at most $k-1$ that displays both trees; and that if you have a network with hybridization number at most $k-1$, then you can build an AF with at most $k$ components. Not sure yet whether we'll need this. If we do use it, it would be nice to talk about TBR moves rather than AFs...}\\

\noindent{\bf Subtrees and chains.} Let $N$ be an unrooted binary phylogenetic network on $X$. A {\it pendant subtree} of $N$ is an unrooted binary phylogenetic tree on a proper subset of $X$ that can be obtained from $N$ by deleting a single edge. For $n\geq 1$, let $C=(\ell_1,\ell_2,\ldots,\ell_n)$ be a sequence of distinct taxa in $X$ and, for each $i\in\{1,2,\ldots,n\}$, let $p_i$ denote the unique \fuchsia{parent} of $\ell_i$ in $N$. We call $C$ an {\it $n$-chain} of $N$, where $n$ is referred to as the {\it length} of $C$,  if there exists a walk $p_1,p_2,\ldots,p_n$ in $N$ and the elements in $\{p_2,p_3,\ldots,p_{n-1}\}$ are all pairwise distinct. Hence  $\ell_1$ and $\ell_2$ may have a common parent \green{(i.e. $p_1 = p_2$)} and/or $\ell_{n-1}$ and $\ell_{n}$ may have a common parent in $N$ \green{(i.e. $p_{n-1} = p_n$)}.  \green{Note that, if one of $p_1 = p_2$ and $p_{n-1}=p_n$ holds, then $C$ is pendant in $N$.} Since we require that $X$ contains at least four elements, note that a 3-chain of $N$ cannot consist of three leaves that all have the same parent. Furthermore, by definition, each element in $X$ is a chain of length 1 in $N$. To ease reading, we sometimes write $C$ to denote the set $\{\ell_1,\ell_2,\ldots,\ell_n\}$. It will always be clear from the context whether $C$ refers to the associated sequence or set of leaves.
\\

If a pendant subtree $S$ (resp. chain $C$) exists in two unrooted binary phylogenetic trees $T$ and $T'$, we say that $S$ (resp. $C$) is a \emph{common subtree} \blueblue{(resp. \emph{chain})} of $T$ and $T'$. To illustrate, the 3-chain $(b,c,d)$ is a common chain of the two phylogenetic trees shown in Figure~\ref{fig:trees}. Moreover, if $C$ is a common \blueblue{$n$-chain} of $T$ and $T'$, \blueblue{\emph{reducing $C$ to a chain of  length $k$ with $1\leq k< n$}} yields the two new trees $T_r = T|X \setminus \{ \ell_{k+1}, \ldots, \ell_{n} \}$ and $T'_r = T'|X \setminus \{ \ell_{k+1}, \ldots, \ell_{n} \}$.\\

We will later make use of the following simple, but fundamental, observation, which holds due to the definition of an $n$-chain and the fact that we are working with unrooted binary trees.

\begin{observation}
\label{obs:disjoint}
\fuchsia{Let $T$ be an unrooted binary phylogenetic tree  on $X$, and let $\{C_1,C_2,\ldots,C_m\}$ be a set of mutually taxa-disjoint chains of $T$. Then the embeddings $$\{T[C_1],T[C_2],\ldots,T[C_m]\}$$  are mutually vertex disjoint in $T$.}
%Let $T$ and $T'$ be two unrooted binary phylogenetic trees  on $X$. If a set $\{C_1,C_2,\ldots,C_m\}$ of chains common to $T$ and $T'$ are mutually taxa disjoint, then the embeddings in $$\{T[C_1],T[C_2],\ldots,T[C_m]\}$$  are mutually vertex-disjoint in $T$, and the embeddings in $$\{T'[C_1],T'[C_2],\ldots,T'[C_m]\}$$ are mutually vertex-disjoint in $T'$.
\end{observation}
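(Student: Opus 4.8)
The plan is to argue directly: assume two of the embeddings, say $T[C_i]$ and $T[C_j]$ with $i\ne j$, share a vertex $v$, and derive that $C_i$ and $C_j$ must share a taxon. Everything rests on a tight description of which vertices an embedded chain can occupy. If $C_i$ has taxa $\ell_1,\dots,\ell_{n_i}$ with parents $p_1,\dots,p_{n_i}$, then $V(T[C_i])\subseteq\{\ell_1,\dots,\ell_{n_i}\}\cup\{p_1,\dots,p_{n_i}\}$: since $p_1,p_2,\dots,p_{n_i}$ is a walk in $T$ (consecutive entries either coincide or are adjacent), the unique path in the tree $T$ between any $p_a$ and $p_b$ uses only vertices among $p_a,\dots,p_b$, so the path between any two taxa $\ell_a,\ell_b$ of $C_i$ uses only $\{\ell_a,\ell_b\}\cup\{p_a,\dots,p_b\}$, and $T[C_i]$ is the union of all such paths. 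I would also record two elementary facts used repeatedly: every $p_a$ has degree $3$ in $T$ (it is adjacent to the leaf $\ell_a$, and $|X|\ge 4$ forbids $p_a$ from being a leaf), and consequently no vertex of $T$ has three leaf neighbours (otherwise $T$ would be a star on three taxa).

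Now consider the shared vertex $v$. If $v$ is a leaf of $T$, then by the structural fact $v$ equals one of the taxa of $C_i$ and, likewise, one of the taxa of $C_j$ — parent vertices have degree $3$, not $1$ — which already contradicts taxa-disjointness. So assume $v$ has degree $3$. Then $v$ is a parent vertex of $C_i$, say $v=p_a$, with $\ell_a\in C_i$ a neighbour of $v$, and symmetrically $v=p'_b$ for $C_j$ with neighbour $\ell'_b\in C_j$; note $\ell_a$ and $\ell'_b$ are distinct. The next step is to rule out $v$ being an \emph{interior} vertex of the walk $p_1,\dots,p_{n_i}$ of $C_i$: if $2\le a\le n_i-1$, then going through the few positions where a coincidence $p_1=p_2$ or $p_{n_i-1}=p_{n_i}$ is permitted (and using that a $3$-chain cannot have all three parents equal) one checks that \emph{every} leaf-neighbour of $v$ is a taxon of $C_i$, contradicting that $\ell'_b\in C_j$ is a neighbour of $v$. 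Hence $v$ must be a walk-endpoint of $C_i$, and, by the same reasoning, a walk-endpoint of $C_j$.

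I expect this last configuration — $v$ being a walk-endpoint of both chains — to be the main obstacle. Here $v$ has exactly one non-leaf neighbour $w$ (two non-leaf neighbours would again force $v$'s single leaf-neighbour to lie in both chains), so the neighbours of $v$ are $\ell_a\in C_i$, $\ell'_b\in C_j$ and $w$; a short degree count at $v$ then forces $w$ to be simultaneously the "second parent" of $C_i$ and of $C_j$, and a degree count at $w$ forces its remaining two neighbours to be the second taxa of the two chains. Walking one further step along $C_i$ (when $n_i\ge 3$) places the third parent of $C_i$ on either $v$ or $w$, and in both cases the third taxon of $C_i$ is pinned to a neighbour of that vertex that already belongs to $C_j$ — the desired contradiction — with the symmetric step handling $C_j$. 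This unwinding of the walk definition at the two endpoints, keeping track of the permitted parent-coincidences and invoking $|X|\ge 4$ to exclude the degenerate local pictures, is the fiddly heart of the argument; the rest is bookkeeping around the two structural facts above.
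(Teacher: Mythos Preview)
The paper gives no proof of this observation --- it is asserted with the remark that it ``holds due to the definition of an $n$-chain and the fact that we are working with unrooted binary trees'' --- so there is no paper argument to compare against. Your plan is careful and most of the case analysis is sound, but there is a genuine gap at the very end: you handle the walk-endpoint configuration by taking a third step along $C_i$ ``when $n_i\ge 3$'' (and symmetrically for $C_j$), but you never dispatch the case $n_i=n_j=2$. Your own degree-chasing shows that in that case the neighbours of $v$ are $\ell_1\in C_i$, $\ell'_1\in C_j$, and an internal vertex $w$, while the neighbours of $w$ are $v$, $\ell_2\in C_i$, $\ell'_2\in C_j$. That is not a contradiction: it is simply a quartet on four taxa, and the standing hypothesis $|X|\ge 4$ does \emph{not} exclude it. Your closing remark about ``invoking $|X|\ge 4$ to exclude the degenerate local pictures'' is exactly where the argument slips.

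This gap cannot be closed, because the observation as stated is false. Let $T$ be the quartet with cherries $\{1,2\}$ and $\{3,4\}$, and set $C_1=(1,3)$ and $C_2=(2,4)$. Both are $2$-chains of $T$ (in each case the two parents are the two internal vertices $v$ and $w$, which are adjacent), and they are taxa-disjoint; yet $T[C_1]$ and $T[C_2]$ each consist of the internal edge $\{v,w\}$ together with two pendant leaves, so they share $\{v,w\}$. What your argument \emph{does} correctly establish is the observation under the stronger hypothesis $|X|\ge 5$, since the unresolved configuration forces the entire tree to be a quartet; equivalently, it goes through whenever at least one of the two chains under consideration has length $\ge 3$, or whenever one of the two $2$-chains is pendant in $T$ (then $p_1=p_2$ and $v$ would acquire three leaf neighbours, forcing $|X|=3$). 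The paper's concrete applications of the observation all carry enough extra structure to avoid the quartet obstruction, but the statement in the generality written does not hold.
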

%\begin{proof}
%This follows from the definition of (common) chain and the fact that $T$ and $T'$ are \emph{binary} trees. \qed
%More specifically: suppose $C$ and $C'$ are two taxa-disjoint common chains and, for the sake of contradiction, assume (wlog) that $T[C] \cap T[C'] \neq \emptyset$.  Then $C$ and $C'$ are non-singleton chains. Also, due to the degree constraints and taxa disjointness, $T[C]$ and $T[C']$ must then intersect on some edge $e$ that is not adjacent to a taxon. So $C$ and $C'$ both contain at least 3 taxa. So... \emph{This requires some kind of formalization....or does it? Is it obvious?}
%\end{proof}

%\\
%\newrules{\emph{Mention somewhere that we assume $|X| \geq 4$ to avoid the case of a 3-chain where all 3 taxa have the same, common parent.}}\\
%\\
%\newrules{\emph{We somehow need to introduce all this generator stuff again without repeating huge chunks of \cite{tightkernel}. The main thing to transmit is why
%sides with 2,1,0 breakpoints can have at most 9,6,3 taxa respectively.}}\\

For two unrooted binary phylogenetic trees $T$ and $T'$, we next state two \fuchsia{reduction rules} that were first introduced in~\cite{AllenSteel2001} and crucial in establishing fixed-parameter tractability of computing the TBR distance (see Section~\ref{sec:new-red} for details).\\

\noindent {\bf Subtree reduction.} Replace a maximal pendant subtree with at least two leaves that is common to $T$ and $T'$ by a single leaf with a new label.\\

\noindent {\bf Chain reduction.} Reduce a maximal $n$-chain with $n\geq 4$ that is common to $T$ and $T'$ to a chain of length three.\\

\noindent The next lemma shows that the subtree and chain reduction \fuchsia{are both \emph{TBR-preserving}}, \blueblue{i.e. applying one of the two reductions to $T$ and $T'$ results in a pair of new trees whose TBR distance is equal to $d_\TBR(T,T')$}.

\begin{lemma}\label{t:safe}
Let $T$ and $T'$ be two  unrooted binary phylogenetic trees on $X$, and let $T_r$ and $T_r'$ be two trees obtained from $T$ and $T'$, respectively, by applying a single subtree or chain reduction. Then $d_\TBR(T,T')=d_\TBR(T_r,T_r')$.
\end{lemma}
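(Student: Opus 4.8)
The plan is to prove the two reduction rules separately, and in each case to establish the two inequalities $d_\TBR(T,T') \geq d_\TBR(T_r,T_r')$ and $d_\TBR(T,T') \leq d_\TBR(T_r,T_r')$. For the easy direction, note that for both rules $T_r$ and $T_r'$ are restrictions of $T$ and $T'$ to a common subset $X' \subseteq X$; since restricting both trees to a common taxon set cannot increase the TBR distance (a TBR sequence realising $d_\TBR(T,T')$ can be ``projected'' onto $X'$, with moves that become trivial simply discarded), we get $d_\TBR(T_r,T_r') \leq d_\TBR(T,T')$. I would state this monotonicity-under-restriction fact as a small auxiliary observation, or alternatively argue it through Theorem~\ref{t:tbr-equiv}: any network $N$ displaying $T$ and $T'$ also, after restriction to $X'$ and suppression of degree-2 vertices, displays $T_r$ and $T_r'$ without increasing the reticulation number.

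For the reverse inequality $d_\TBR(T,T') \leq d_\TBR(T_r,T_r')$, I would use the network/hybridization characterisation of Theorem~\ref{t:tbr-equiv} and argue that an optimal network $N_r$ for the reduced pair can be ``expanded back'' into a network $N$ for the original pair with the same reticulation number. For the \textbf{subtree reduction}: let $S$ be the common pendant subtree that was collapsed to a new leaf $x$. Take an optimal network $N_r$ displaying $T_r$ and $T_r'$; the leaf $x$ is pendant, hanging off a single edge. Replace $x$ by a copy of the subtree $S$ attached at that same point. Since $S$ is pendant in both $T$ and $T'$ in exactly the same way (as a subtree), one checks that the resulting network $N$ displays $T$ and $T'$, and $r(N) = r(N_r)$ because attaching a tree to a leaf adds equally many vertices and edges. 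Hence $d_\TBR(T,T') = h^u(T,T') \leq r(N) = r(N_r) = h^u(T_r,T_r') = d_\TBR(T_r,T_r')$. For the \textbf{chain reduction}: let $C = (\ell_1,\ldots,\ell_n)$ with $n \geq 4$ be the common chain, reduced to $(\ell_1,\ell_2,\ell_3)$. Starting from an optimal network $N_r$ displaying the reduced trees, I would locate how the 3-chain sits in $N_r$ and re-insert the leaves $\ell_4,\ldots,\ell_n$ by subdividing appropriate edges along the ``spine'' used by the chain, arguing that this can be done so that both embeddings of $T$ and $T'$ are restored and no new reticulation edge is needed (the extra leaves are inserted into tree-like portions of the structure that the chain already traverses in both trees). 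Again this gives $r(N) = r(N_r)$ and the desired inequality.

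The \textbf{main obstacle} will be the chain reduction case, specifically making precise why re-inserting $\ell_4,\ldots,\ell_n$ into $N_r$ does not force an increase in reticulation number. The subtlety is that in $N_r$ the path realising the 3-chain in $T_r$ and the path realising it in $T_r'$ may interact with reticulation edges, so one must argue carefully about where along these paths the new pendant leaves can be grafted so that both tree-embeddings simultaneously remain valid. One clean way to handle this is to avoid networks for the chain case and instead give a direct TBR-sequence argument: show that any TBR sequence transforming $T_r$ into $T_r'$ can be mimicked on $T$ and $T'$ by treating the whole chain $(\ell_1,\ldots,\ell_n)$ as a rigid block that moves wherever $(\ell_1,\ell_2,\ell_3)$ moves, using the fact (from the definition of an $n$-chain, cf. Observation~\ref{obs:disjoint}) that the chain occupies a contiguous path in each intermediate tree. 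I would present whichever of these two routes turns out to be less technical; in either case the heart of the argument is a bookkeeping lemma that the chain, once present as length 3, ``carries along'' its longer version for free.
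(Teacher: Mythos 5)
Your easy direction (restriction monotonicity) and your subtree-reduction expansion argument are fine, but the chain-reduction half contains a genuine gap, and it is exactly at the point you flag as ``the main obstacle''. Neither of your two proposed routes closes it as described. For the network route: in an optimal $N_r$ displaying $T_r$ and $T_r'$, the embeddings of the two trees share the parents $p_1,p_2,p_3$ of the retained chain leaves, but beyond $p_3$ (away from $\ell_2$) they may diverge immediately, and the path from $p_2$ to $p_3$ may even approach $p_3$ through different edges in the two embeddings; then there is no single edge of $N_r$ that can be subdivided to host $\ell_4,\ldots,\ell_n$ so that the embeddings of $T$ and $T'$ are \emph{simultaneously} restored. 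Ruling this out, or re-routing the embeddings/choosing a better optimal network, is precisely the nontrivial content of the lemma, and your sketch just asserts it. For the sequence route: the supporting ``fact'' you invoke is false. In an optimal TBR sequence from $T_r$ to $T_r'$ the intermediate trees need not contain $(\ell_1,\ell_2,\ell_3)$ as a chain at all (a single move can detach $\ell_2$ and park it elsewhere), so there is no rigid block to carry $\ell_4,\ldots,\ell_n$ along, and it is not determined which edge of an intermediate tree should receive the reinserted leaves. Observation~\ref{obs:disjoint} says nothing about intermediate trees of a TBR sequence.

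For comparison, the paper does not reprove this lemma; it is quoted from Allen and Steel, and, as remarked after Theorem~\ref{thm:allchainsintact}, that proof rests on the agreement-forest characterisation $d_\TBR = d_\MAF - 1$ together with a chain-preservation statement (a weaker precursor of Theorem~\ref{thm:allchainsintact}): one shows there is a maximum agreement forest of the reduced pair in which the common $3$-chain lies inside a single component, and then extends that component by the deleted taxa $\ell_4,\ldots,\ell_n$ to obtain an agreement forest of $T$ and $T'$ of the same size, with a case analysis (as in the appendix here: inside-outside and bypass components) handling components whose embeddings pass through or enter the chain region. That forest-based formulation is what makes the reinsertion step clean, and it sidesteps both of the obstructions above; if you want a self-contained proof, that is the route to take rather than networks or move-by-move mimicry.
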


\noindent Lastly, if $T$ and $T'$ cannot be reduced any further under the subtree (resp. chain) reduction, we say that $T$ and $T'$ are {\it subtree} (resp. {\it chain}) {\it reduced}. \\

\noindent {\bf Agreement forests.}
Let $T$ and $T'$ be two unrooted binary phylogenetic trees  on $X$. Furthermore, let $F = \{B_0, B_1,B_2,\ldots,B_k\}$ be a partition of $X$, where each block $B_i$ with $i\in\{0,1,2,\ldots,k\}$ is  referred to as a \emph{component} of $F$. We say that $F$ is an \emph{agreement forest} for $T$ and $T'$ if the following hold. 
\begin{enumerate}
\item For each $i\in\{0,1,2,\ldots,k\}$, we have $T|B_i = T'|B_i$.
\item For each pair $i,j\in\{0,1,2,\ldots,k\}$ with $i \neq j$, we have that
$T[B_i]$ and $T[B_j]$ are vertex-disjoint in $T$, and $T'[B_i]$ and $T'[B_j]$ are vertex-disjoint in $T'$. 
\end{enumerate}

Let $F=\{B_0,B_1,B_2,\ldots,B_k\}$ be an agreement forest for $T$ and $T'$.
%We say that a vertex $v$ (respectively, edge $e$) of $T$  is \emph{unused} 
%\marginpar{SL. ``unused'' may not be used}
%by \textcolor{blue}{$F$} if there is no component $B_i \in F$ such that $v$ (respectively, $e$) is contained in $T[B_i]$. 
The \emph{size} of $F$ is simply \fuchsia{its} number of components; i.e. $k+1$. Moreover, an agreement forest with the minimum number of components (over all agreement forests for $T$ and $T'$) is called a \emph{maximum agreement forest} for $T$ and $T'$. The number of components of a maximum agreement forest for $T$ and $T'$ is denoted by $d_\MAF(T,T')$. The following theorem is well known.

\begin{theorem}\cite[Theorem 2.13]{AllenSteel2001}
Let $T$ and $T'$ be two unrooted binary phylogenetic trees  on $X$. Then $$d_\TBR(T,T') = d_\MAF(T,T') - 1.$$
\end{theorem}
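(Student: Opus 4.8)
The plan is to prove the two inequalities $d_\TBR(T,T') \le d_\MAF(T,T') - 1$ and $d_\TBR(T,T') \ge d_\MAF(T,T') - 1$ separately, each by induction, the unifying observation being that a single TBR operation detaches one pendant subtree and regrafts it, and that such an atomic operation changes the number of blocks needed in an agreement forest by at most one. In what follows I freely pass between an embedding $T[X']$ and the restriction $T|X'$, noting only that the translation is routine modulo suppression of degree-2 vertices.

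For the inequality $d_\MAF(T,T') \le d_\TBR(T,T') + 1$ I would induct on $d := d_\TBR(T,T')$, the base case $d=0$ being the partition $\{X\}$. For the step, let $T^{\ast}$ be a tree obtained from $T$ by one TBR move with $d_\TBR(T^{\ast},T') = d-1$, and let $F^{\ast}$ be an agreement forest for $T^{\ast}$ and $T'$ with $|F^{\ast}| \le d$ (induction hypothesis). The move producing $T^{\ast}$ from $T$ deletes an edge splitting $X$ into $A$ and $X\setminus A$ and reconnects the two sides; crucially $T|A = T^{\ast}|A$, $T|(X\setminus A) = T^{\ast}|(X\setminus A)$, and in each of $T$ and $T^{\ast}$ the embeddings of $A$ and of $X\setminus A$ are vertex-disjoint. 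Since $T^{\ast}$ is the $A$-side glued to the $(X\setminus A)$-side along the reconnection edge, at most one block $B^{\dagger}$ of $F^{\ast}$ has an embedding in $T^{\ast}$ using that edge, hence at most one block meets both $A$ and $X\setminus A$. I would then set $F := (F^{\ast}\setminus\{B^{\dagger}\})\cup\{B^{\dagger}\cap A,\ B^{\dagger}\setminus A\}$ (and $F := F^{\ast}$ if no such $B^{\dagger}$ exists), so $|F| \le |F^{\ast}|+1 \le d+1$, and verify the agreement-forest axioms for $T$ and $T'$: every block of $F$ lies wholly in $A$ or wholly in $X\setminus A$, so $T|C = T^{\ast}|C$ for each block $C$, and since $F$ refines $F^{\ast}$ this gives $T|C = T'|C$; vertex-disjointness in $T$ follows from disjointness of the two sides together with $T|A = T^{\ast}|A$; and vertex-disjointness in $T'$ follows because $F$ refines $F^{\ast}$, the only new point being that the two halves of $B^{\dagger}$ have disjoint embeddings in $T'$, which one reads off from $T'|B^{\dagger} = T^{\ast}|B^{\dagger}$ and the fact that they already do in $T^{\ast}$.

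For the reverse inequality $d_\TBR(T,T') \le d_\MAF(T,T') - 1$ I would induct on $m := d_\MAF(T,T')$, with base case $m=1$ (so $T=T'$). For $m \ge 2$, let $F$ be a maximum agreement forest. Contracting the vertex-disjoint embeddings of the blocks of $F$ in $T$ yields a tree every leaf of which is a contracted block, so there is a block $B$ whose embedding $T[B]$ is a pendant subtree of $T$; likewise, in $T'$ one finds a block $B'$ that is ``nearest'' to $B$, in the sense that the path from $T'[B]$ to $T'[B']$ in $T'$ meets no other embedding. Now perform one TBR move on $T$: cut the edge attaching $T[B]$, detaching the subtree $T|B$, and regraft $T|B$ onto the edge of the remainder corresponding to the direction in which $T'$ connects $T'[B]$ to $B'$; call the result $T''$. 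Then $F'' := (F\setminus\{B,B'\})\cup\{B\cup B'\}$ has $m-1$ blocks, and I would check it is an agreement forest for $T''$ and $T'$: the remaining blocks are untouched, $T''|(B\cup B') = T'|(B\cup B')$ by the regrafting choice, and $B\cup B'$ has disjoint embeddings from the remaining blocks in both $T''$ (by construction) and $T'$ (by the choice of $B'$). The induction hypothesis applied to $T''$ and $T'$ yields $d_\TBR(T'',T') \le m-2$, whence $d_\TBR(T,T') \le 1+(m-2) = m-1$.

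The main obstacle is the constructive direction: one must argue that a single TBR move genuinely suffices to move the chosen block $B$ into a position that is simultaneously compatible with $T'$ (so that $B\cup B'$ can be merged) and leaves the embeddings of all other blocks intact as an agreement forest for the new pair; making the phrase ``the edge corresponding to the direction in which $T'$ connects $T'[B]$ to $B'$'' precise requires a careful case analysis of how the blocks sit relative to one another in $T$ versus $T'$. A secondary, pervasive nuisance in both directions is the bookkeeping of suppressed degree-2 vertices when arguing about vertex-disjointness of subtrees. One could alternatively derive the statement from the network reformulation of Theorem~\ref{t:tbr-equiv} together with an agreement-forest characterisation of $h^u$, but that route is of comparable difficulty.
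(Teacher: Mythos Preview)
The paper does not prove this theorem: it is quoted verbatim as Theorem~2.13 of Allen and Steel (2001) and used as a black box throughout. So there is no ``paper's own proof'' to compare against; your proposal is effectively a reconstruction of the original Allen--Steel argument.

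On substance: your first direction ($d_\MAF \le d_\TBR + 1$) is correct and is the standard argument. Your second direction has the right architecture---peel off a pendant block of the agreement forest and regraft it with one TBR move so that two blocks merge---but, as you yourself flag, the sentence ``regraft $T|B$ onto the edge of the remainder corresponding to the direction in which $T'$ connects $T'[B]$ to $B'$'' hides the real work. Concretely: the attachment point of $T'[B]$ to $T'[B']$ corresponds, via $T|B' = T'|B'$, to a specific edge of the embedding $T[B']$, and one must verify that (i) this edge survives in $T|(X\setminus B)$ (it does, since $T[B]$ is pendant and disjoint from $T[B']$), and (ii) subdividing it to attach $T|B$ does not cause $T''[B\cup B']$ to intersect any other $T''[B_i]$ (it does not, because the new attachment vertex lies strictly inside the old $T[B']$, which was already disjoint from the other embeddings). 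Spelling out (i) and (ii), together with the observation that the regrafted $B$ inherits the correct orientation because $T|B = T'|B$, closes the gap; it is a genuine case analysis but not a deep one. Your suggested alternative via Theorem~\ref{t:tbr-equiv} is circular in this paper's logical order, since that theorem in \cite{van2018unrooted} itself relies on the Allen--Steel characterisation.
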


Again, let $F$ be an agreement forest  for two unrooted binary phylogenetic trees $T$ and $T'$ on $X$, and let $C=(\ell_1,\ell_2,\ldots,\ell_n)$ be a common $n$-chain of $T$ and $T'$. We say that $C$ is \emph{split} in $F$ if there exist (at least) two components, say $B_j$ and $B_{j'}$, with $j\ne j'$ such that $B_j\cap C\ne\emptyset$ and $B_{j'}\cap C\ne\emptyset$.  Furthermore, we say that $C$ is \emph{atomized} in $F$ if each taxon $\ell_i \in C$ with $i\in\{1,2,\ldots,n\}$ occurs as a singleton component $\{\ell_i\}$ in $F$. Lastly, we say that $C$ is \emph{preserved} in $F$ if there exists a component $B_j \in F$ such that $C \subseteq B_j$. \green{Taking} the last three definitions together we have that $C$ is split in $F$ if and only if it is not preserved in $F$.\\

%We say that an agreement forest $F$ is a \emph{chain forest} if every component in $F$ is a chain. Due to the fact that we regard a single taxon as a length-1 chain, it is easy to see that every agreement forest can be transformed into a chain forest (of possibly larger size) by repeatedly splitting non-chain components. Note that the components of an agreement forest are taxa disjoint (by definition: $F$ is a partition of $X$), so Observation \ref{obs:disjoint} holds in particular for all the components of a chain forest.

% We say a chain forest is \emph{maximal} if it is not possible to merge two components to obtain a smaller chain forest. (This definition will require some work).

%\\

\section{A new suite of reduction rules}\label{sec:new-red}

In 2001, Allen and Steel~\cite{AllenSteel2001} showed that computing the TBR distance between two unrooted binary phylogenetic trees $T$ and $T'$ is fixed-parameter tractable, when parameterized by  $k=d_\TBR(T,T')$. To this end, they established a linear kernel of size \fuchsia{at most} $28k$. Recently, this result was improved by Kelk and Linz~\cite{tightkernel} who showed \green{with a new analysis} that the following \green{superior bound actually holds}.
% smaller kernel exists.

\begin{theorem}\label{t:tbr-kernelEARLY}
Let $T$ and $T'$ be two subtree and chain reduced unrooted binary phylogenetic trees on $X$. If $d_\TBR(T,T')\geq 2$, then $|X|\leq 15 d_\TBR(T,T')-9$. 
\end{theorem}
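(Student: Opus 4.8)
The plan is to prove the bound $|X| \leq 15 d_{\TBR}(T,T') - 9$ via the network/generator perspective, following the strategy sketched in the introduction and in \cite{tightkernel}. First I would invoke Theorem~\ref{t:tbr-equiv} to replace $d_{\TBR}(T,T')$ with $h^u(T,T')$, so that the task becomes: take an optimal network $N$ on $X$ that displays both $T$ and $T'$ with $r(N) = k := d_{\TBR}(T,T')$, and bound $|X|$ in terms of $k$. The key structural device is the \emph{unrooted generator} underlying $N$: contracting all pendant subtrees of $N$ and suppressing the resulting degree-2 vertices yields a multigraph $\mathcal{G}$ (the generator) with reticulation number $k$, hence with at most $3k-3$ edges and $2k-2$ vertices (for $k \geq 2$; the degenerate small cases are handled separately). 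The taxa of $X$ must then be distributed either on the ``sides'' (edges) of $\mathcal{G}$ — where they sit in chains hanging off subdivision points — or at pendant subtrees attached to the vertices of $\mathcal{G}$.

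The core of the argument is a counting bound on how many taxa can accumulate in each location, exploiting that $T$ and $T'$ are subtree and chain reduced. For pendant subtrees: subtree-reducedness forces every common pendant subtree to be a single leaf, so at each generator vertex and along each side only a bounded number of ``isolated'' taxa can appear that are not part of a long common structure; and chain-reducedness caps the length of any maximal common chain lying along a side of the generator. The heart of \cite{tightkernel} is showing that, per side of the generator, the number of taxa is bounded by a small constant (this is where the figure ``$15$'' comes from after multiplying through by the edge count $3k-3$ and carefully accounting for the additive corrections, giving the $-9$). So the steps I would carry out are: (i) reduce to the network formulation; (ii) extract the generator and bound its size via $r = k$; (iii) case-analyse the possible locations of taxa (generator vertices versus sides), using subtree-reducedness to bound pendant contributions and chain-reducedness to bound chain contributions along each side; (iv) sum the per-location bounds over all $O(k)$ locations; (v) dispose of the small/degenerate generators ($k=2,3$) by hand, and check the additive constant to land exactly on $15k-9$.

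I expect the main obstacle to be step (iii): getting a \emph{tight} per-side bound rather than a loose one. The subtlety is that a chain lying along a side of the generator interacts with the chains on adjacent sides and with pendant subtrees at the endpoints, and one must argue that if too many taxa piled up on a side then either (a) a longer common pendant subtree or common chain would exist — contradicting reducedness — or (b) the network would not be optimal, i.e. one could find a network with strictly smaller reticulation number displaying $T$ and $T'$, contradicting $r(N) = h^u(T,T')$. Making this dichotomy precise for \emph{every} generator side, uniformly over all generators of reticulation number $k$, is the technical crux; the bookkeeping that converts the per-side constant into the global $15k - 9$ (and in particular pinning down the exact additive term, which is where tightness lives) is delicate but, once the per-side bound is in hand, essentially routine. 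Since the full argument is exactly the content of Theorem~\ref{t:tbr-kernelEARLY} as proved in \cite{tightkernel}, I would at this point simply cite that paper rather than reproduce the lengthy generator case analysis.
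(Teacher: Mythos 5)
Your proposal follows essentially the same route as the paper, which does not reprove this bound but states it as the main result of \cite{tightkernel}, obtained precisely via the network/generator, per-side analysis you describe (the same machinery is reused later in the paper as Lemma~\ref{lem:old-breakpoints} together with a breakpoint-counting argument). One small correction to your bookkeeping: the constant $15$ does not come from a uniform per-side constant multiplied by the $3(k-1)$ sides, but from the weighted count $9q+6\cdot 2(k-q)+3\bigl(3(k-1)-(2k-q)\bigr)=15k-9$, where each side carries at most $3$, $6$ or $9$ leaves according to its number of breakpoints and the total number of breakpoints is $2k$.
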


\noindent Noting that the subtree and chain reduction can be applied in time that is polynomial in the size of $X$, it immediately follows that computing the TBR distance is fixed-parameter tractable when parameterized by $k$. In what follows, we will develop five new reduction rules that complement the subtree and chain reduction as introduced by Allen and Steel~\cite{AllenSteel2001} and further \green{improve the
$15\cdot d_\TBR(T,T') -9$ bound.}\\

Although not directly addressed by \cite{AllenSteel2001}, reducing a chain of length 3 to length 2 can strictly lower the TBR distance, which is why their chain reduction only allows reductions to length 3. An explicit example of this phenomenon are the two phylogenetic trees that are shown in Figure~\ref{fig:trees}. The TBR distance of these two trees is 2. However, if we reduce the common
3-chain $(b,c,d)$ to the 2-chain $(b,c)$, we obtain the two quartets $ab|ce$ and $eb|ca$ whose TBR distance is 1. Nevertheless, as we shall see, there \emph{do} exist  special circumstances when common 3-chains can be further reduced without altering the TBR distance. This is an important building block of our new reduction rules which ultimately will yield a kernel of size at most $11\cdot d_\TBR(T,T')-9$. To obtain this bound, we  combine the generator approach introduced in \cite{tightkernel} with a careful analysis of agreement forests. \\

The next theorem, which is formally established in the appendix to avoid disrupting the flow of the paper, will  repeatedly be used in establishing our new kernel result. 
%\marginpar{SL. Theorem 5 should have its own section :)}
%\marginpar{SK. I couldn't think how to elegantly put it in the main paper...}
%\marginpar{SK. I adjusted the statement of the theorem again}

\begin{theorem}
\label{thm:allchainsintact}
\green{
Let $T$ and $T'$ be two unrooted binary phylogenetic trees on $X$. 
Let $K$ be an (arbitrary) set of mutually taxa-disjoint chains that are common to $T$ and $T'$.
%Let $F$ be an (arbitrary) agreement forest of $T$ and $T'$. 
Then there exists a maximum agreement forest $F'$ of $T$ and $T'$ such that }
\begin{enumerate}
\item \green{every $n$-chain in $K$ with $n\geq 3$
%that is common to $T$ and $T'$ and preserved in $F$
is preserved in $F'$, and}
\item \green{every 2-chain in $K$ 
%that is common to $T$ and $T'$,
that is pendant in at least one of $T$ and $T'$
%and preserved in $F$
is preserved in $F'$.}
\end{enumerate}
%such that, for \emph{every} chain component $C \in F$ that has length 3 or more, or has length 2 and is pendant in at least one of $T$ and $T'$, $C$ is preserved in $M_F$.
\end{theorem}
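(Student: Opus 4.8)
\emph{Sketch of the argument.}
The plan is to take a maximum agreement forest (MAF) $F$ of $T$ and $T'$ that minimizes
$$\sigma(F)=\sum_{C}\bigl(c_T(C,F)+c_{T'}(C,F)\bigr),$$
the sum over all \emph{relevant} chains $C\in K$ --- those with $|C|\ge 3$, or with $|C|=2$ and $C$ pendant in $T$ or $T'$ --- where $c_T(C,F)$ (resp.\ $c_{T'}(C,F)$) is the number of edges of the embedding $T[C]$ (resp.\ $T'[C]$) lying in no component embedding. A short check shows that a relevant chain is preserved in $F$ exactly when both of its contributions to $\sigma$ vanish, so it suffices to prove that whenever $\sigma(F)>0$ there is a MAF $F'$ with $\sigma(F')<\sigma(F)$; the minimizer then has $\sigma=0$ and is the desired forest. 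Chains of $K$ of length $\ge 4$ may alternatively be shortened to length $3$ beforehand --- by Lemma~\ref{t:safe} this does not change $d_{\MAF}$, and since the chains in $K$ are taxa-disjoint it affects neither the structure around the remaining chains nor their pendancy --- and a MAF of the shortened pair preserving each length-$3$ remnant lifts back by re-inserting each deleted tail into the component of its remnant.

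The engine is a structural analysis, based on Observation~\ref{obs:disjoint}, of how the components of a MAF can meet a single chain $C=(\ell_1,\dots,\ell_n)$. The embedding $T[C]$ is a caterpillar whose backbone path carries $\ell_1,\dots,\ell_n$ in order and is vertex-disjoint from $T[C']$ for every other $C'\in K$; likewise in $T'$. If $C$ is split, its backbone is cut at least once in $T$ and at least once in $T'$. Using vertex-disjointness of the component embeddings one shows that distinct components meet $C$ in disjoint intervals of the backbone (apart from individual chain taxa that form singletons of $F$), and this in turn pins down, up to suppression of degree-$2$ vertices, how the rest of each such component hangs off the backbone. For a pendant $2$-chain $C=(\ell_1,\ell_2)$ --- say $\{\ell_1,\ell_2\}$ is a cherry of $T$ with parent $p$ --- this forces, when $C$ is split, that $p$ would lie in both $T[B^{(1)}]$ and $T[B^{(2)}]$ unless one of these components is a singleton; hence one of $\ell_1,\ell_2$ is a singleton of $F$. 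This ``toehold'' is exactly what a non-pendant $2$-chain lacks (whence the hypothesis), while for $|C|\ge 3$ an interior branch vertex supplies an analogous toehold independent of pendancy.

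Given this picture, when $\sigma(F)>0$ choose a split relevant chain $C$ and consecutive taxa $\ell_i\in B$, $\ell_{i+1}\in B'$ with $B\ne B'$, and perform a repair move that ``un-cuts'' the backbone edge of $T[C]$ (and its counterpart in $T'[C]$) between $\ell_i$ and $\ell_{i+1}$: one moves the interval of $C$ lying in $B'$ (or the singleton, as the case may be) into $B$, cuts $B'$ open along the backbone, and re-attaches the freed piece to $B$ or keeps it separate; the structural description guarantees this can be done so that the outcome is again a family of vertex-disjoint embeddings with $T|B_i=T'|B_i$. One verifies that the number of components does not increase (the straight merge alone would otherwise already beat $F$, so a compensating split is forced), so the outcome is again a MAF; that $\sigma$ strictly drops, since backbone edges of $T[C]$ and $T'[C]$ become uncut and the surgery creates no new cut inside a relevant chain embedding; and --- because the entire operation lives inside $T[C]\cup T'[C]$, vertex-disjoint from the embeddings of all other chains of $K$ --- that no other previously preserved relevant chain is split. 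This contradicts the choice of $F$, proving the theorem.

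The crux is the structural classification and the ensuing case analysis in the repair move: one has to track which internal backbone edges are cut, whether the neighbouring components are singletons, and on which side(s) of the backbone they extend, and in each case exhibit a re-attachment that keeps the forest an agreement forest without enlarging it. Here the taxa-disjointness of $K$ (Observation~\ref{obs:disjoint}) is indispensable: it localizes every modification to a bounded region and stops the chains of $K$ from interfering, which is what allows them all to be preserved at once.
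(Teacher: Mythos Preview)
Your overall plan---pick a maximum agreement forest, repair one split eligible chain at a time using a local surgery, and use Observation~\ref{obs:disjoint} to argue that the surgery cannot damage other chains---is exactly the paper's strategy. The potential $\sigma$ is a pleasant refinement, and the equivalence ``$C$ preserved $\Leftrightarrow$ its $\sigma$-contribution vanishes'' is correct. But the proposal stops precisely where the real work begins, and the one concrete mechanism you describe does not survive a case the paper has to handle carefully.

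The gap is the treatment of \emph{bypass components}: components $B$ with $B\cap C=\emptyset$ whose embedding nevertheless runs along the backbone of $T[C]$ (and possibly a second, distinct one through $T'[C]$). Your structural paragraph only discusses components that ``meet $C$'' and how they occupy intervals of the backbone; bypass components are simply absent from the picture. Now take $|C|=3$, all three $\ell_i$ singletons in $F$, and distinct bypass components $B$ in $T$ and $B'$ in $T'$. Your repair move merges $\{\ell_i\}$ and $\{\ell_{i+1}\}$ into one block. But then $T[\{\ell_i,\ell_{i+1}\}]$ shares backbone vertices with $T[B]$, and $T'[\{\ell_i,\ell_{i+1}\}]$ shares vertices with $T'[B']$, so you must split \emph{both} $B$ and $B'$ to restore vertex-disjointness. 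That is two extra components against one saved by the merge: the result has strictly more components than $F$, so it is not a MAF, and your sentence ``a compensating split is forced'' (singular) does not cover this. The paper's proof gets past exactly this point by merging \emph{all} of $C$ at once, using $|C|\ge 3$ so that the $\ge 2$ saved singletons offset the $\le 2$ bypass splits; an edge-by-edge repair cannot do this. A similar issue arises when a single component is a bypass in both trees: splitting it along the $L_T/R_T$ axis can leave a piece that still straddles $C$ in $T'$, forcing a further split.

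So the sketch is not wrong in spirit, but to turn it into a proof you must (i) include bypass components in your structural classification, and (ii) either abandon the one-edge-at-a-time granularity in favour of repairing the whole chain $C$ in one step, or show explicitly how to amortise the extra splits across multiple backbone-edge repairs. Once you commit to whole-chain repair, the remaining case analysis (inside--outside components; whether they straddle $C$; the pendant $|C|=2$ subcase) is exactly what the paper carries out, and you will find there is no shortcut around it.
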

%\noindent Following on from Theorem~\ref{thm:allchainsintact}, observe that any set of mutually taxa-disjoint chains that are common to $T$ and $T'$ can be extended to an agreement forest simply by adding singleton components $\{\ell\}$ for all taxa $\ell\in X$ that are
%not contained in any chain of the set. Hence, Theorem \ref{thm:allchainsintact} goes through if we replace {\it (arbitrary) agreement forest} by {\it (arbitrary) set of mutually taxa-disjoint chains that are common to $T$ and $T'$}.
%\green{and replace \emph{preserved in F} with \emph{contained in F}}.
%Moreover,
\green{Theorem~\ref{thm:allchainsintact} is somewhat more general than
%\marginpar{\fuchsia{\footnotesize{SL. Reordered slightly.}}}
we need in this article - for us, $|K|\leq 2$ is sufficient - but we include full details because we consider it to be of independent interest and anticipate future applications beyond this article.}
\fuchsia{Furthermore, we} remark that a weaker version of Theorem~\ref{thm:allchainsintact} was already presented in \cite{AllenSteel2001}, where it forms the foundation of the proof of Lemma~\ref{t:safe}. However, the authors of \cite{AllenSteel2001} did not prove any results about chains of length 2, and their proof mainly \blueblue{focuses} on the case of a \green{\emph{single}} common chain of length 3 \green{that is pendant in \emph{neither} of the} two trees. \\

%%% DO WE WANT TO DETAIL  ANYTHING OF THE FOLLOWING THREE PARAGRAPHS?

% SK: The last paragraph could be useful to ensure that the reader does not over-interpret the role of Theorem thm:allchainsintact ? I have added this bit back in

%As pointed out earlier, reducing chains to length 2 can cause the TBR distance to decrease. Curiously, in the given $(a,b,d,e), (e,b,d,a)$ example there exists a MAF in which the length-2 chain $(b,d)$ in the reduced trees is preserved, so the decrease in TBR distance in that case cannot be attributed
%to the splitting of the chain: the problem is that it is not clear how to re-introduce the truncated taxa. However, in this article we are primarily interested in whether chains are preserved \emph{per se}, rather than whether these preserved chains can be extended back into longer chains. Unfortunately, it is fairly straightforward to construct examples in which a given 2-chain is split in \emph{every} MAF. One example is to take the two caterpillars $(a, x, y, b, c, d)$ and $(a, b, c, y, x, d)$. A MAF for these two trees contains 3 components. However, if we insist that the chain $(x,y)$ is preserved, at least 4 components are required. 
%
%Fortunately, 2-chains that are pendant in at least one of the two trees, \emph{are} preserved by some MAF, and this will be sufficient for our purposes.  In fact, Theorem \ref{thm:allchainsintact} is actually considerably more general than we need. However, we include it in its current form because we foresee applications beyond this article. We defer the proof to the appendix. 
%

Throughout the next four \fuchsia{subsections}, we detail five new \fuchsia{reduction rules}. We will see that, similar to Lemma~\ref{t:safe}, each of these new reductions either preserves the TBR distance or reduces the parameter by 1. To simplify the exposition, we assume that the new reductions are always applied to two unrooted binary phylogenetic trees that are subtree and chain reduced. Lastly, while the reduction names appear to be cryptic, they will be further explained in Section~\ref{sec:new-kernel}, where we tie the new reductions and a careful unrooted generator analysis together to establish an improved kernel result. A generic example for each of the five new reductions is shown in Figure~\ref{fig:reductions}.

\subsection{$(*,3,*)$-reduction}
\label{red:reduc1}

Let $T$ and $T'$ be two unrooted binary phylogenetic trees on $X$, and let $C = (a,b,c)$ be a 3-chain that is common to $T$ and $T'$. For example $C$ may be the result of a previously applied chain reduction. If  $T$ has cherry $\{b,c\}$ and $T'$ has has cherry $\{a,b\}$, then a {\it $(*,3,*)$-reduction} is the operation of deleting $a$, $b$, and $c$ from $T$ and $T'$. Formally, we set $T_r = T|X \setminus C$ and $T'_r = T'|X \setminus C$.

\begin{lemma}\label{l:reduc1}
Let $T$ and $T'$ be two unrooted binary phylogenetic trees on $X$, and let $T_r$ and $T_r'$ be two trees obtained from $T$ and $T'$, respectively, by a single application of the $(*,3,*)$-reduction. Then $d_\TBR(T_r, T'_r) = d_\TBR(T,T') - 1$.
\end{lemma}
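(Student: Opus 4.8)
The plan is to prove the two inequalities $d_\TBR(T_r, T'_r) \le d_\TBR(T,T') - 1$ and $d_\TBR(T_r, T'_r) \ge d_\TBR(T,T') - 1$ separately, working throughout with agreement forests and the identity $d_\TBR = d_\MAF - 1$. For the first (easier) inequality, I would start from a maximum agreement forest $F$ for $T$ and $T'$ and show that it can be converted into an agreement forest for $T_r$ and $T_r'$ with at most two fewer components; equivalently, it suffices to exhibit an agreement forest for $T_r, T_r'$ of size $d_\MAF(T,T')-2$. Since $T$ has cherry $\{b,c\}$ and $T'$ has cherry $\{a,b\}$, the taxon $b$ is ``absorbed'' differently in the two trees, which is exactly the obstruction that forces an extra component; deleting $\{a,b,c\}$ should remove this obstruction and let two components merge. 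Concretely, I expect to argue that $a$, $b$, $c$ lie in the relevant components in a controlled way (invoking Theorem~\ref{thm:allchainsintact} with $K = \{C\}$, so that $C$ is preserved in some maximum agreement forest $F'$, hence $\{a,b,c\} \subseteq B_j$ for one component $B_j$), and then that restricting $F'$ to $X \setminus C$ — and merging the remainder of $B_j$ with a suitable neighbouring component — yields a valid agreement forest for $T_r, T'_r$ whose size has dropped by (at least) $2$.

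For the reverse inequality $d_\TBR(T_r, T'_r) \ge d_\TBR(T,T') - 1$, equivalently $d_\MAF(T,T') \le d_\MAF(T_r,T_r') + 1$, I would take a maximum agreement forest $F_r$ for $T_r, T_r'$ and show how to reattach the chain $C = (a,b,c)$ while adding at most one new component. The point where $a$ (equivalently, where the restricted subtree sits) attaches in $T_r$ and $T_r'$ need not be the same, so in the worst case we pay one TBR move / one extra MAF component to glue $(a,b,c)$ back; but because $\{b,c\}$ is a cherry in $T$ and $\{a,b\}$ is a cherry in $T'$, I expect one can always place $\{a,b,c\}$ (or $\{b,c\}$ together with $a$ as a separate singleton, or some such split) so that at most one new component is created, never two. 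This is essentially the inverse of the construction in the first half, and the two halves together pin the value at exactly $d_\TBR(T,T')-1$. It is also worth double-checking the edge cases where $|X \setminus C|$ is very small (e.g. $|X|=4$ or $5$), since the restriction operation and the definition of agreement forest behave slightly differently there; given the standing assumption $|X|\ge 4$ and that $T,T'$ are chain reduced, these should be easy to dispatch or rule out.

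The main obstacle I anticipate is the bookkeeping in the reverse direction: showing that gluing the $3$-chain back costs \emph{at most} one extra component, rather than two, and in particular handling the case where the attachment locations of the truncated trees differ in $T_r$ versus $T_r'$. The asymmetry of the hypothesis — cherry $\{b,c\}$ in $T$ but cherry $\{a,b\}$ in $T'$ — is precisely what is being exploited, so the argument must use it essentially (a symmetric configuration, e.g. cherry $\{b,c\}$ in both trees, would merely be TBR-preserving, as the surrounding discussion of Figure~\ref{fig:trees} makes clear), and I would make sure the proof visibly breaks without it. A clean way to organise this may be to phrase both directions as statements about how a single common $3$-chain with this particular cherry pattern contributes exactly one ``forced cut'' to every agreement forest, so that deleting it reduces $d_\MAF$ by exactly one; Theorem~\ref{thm:allchainsintact} is the key tool that lets us assume $C$ is preserved and thus reason about this contribution locally.
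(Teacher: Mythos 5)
Your second direction (reattaching the chain) is essentially the paper's argument and is fine in spirit: one simply adds $\{a,b,c\}$ as a single new component to a maximum agreement forest $F_r$ of $T_r,T'_r$; since the cherries $\{b,c\}$ in $T$ and $\{a,b\}$ in $T'$ make $C$ pendant in \emph{both} trees, its embedding is vertex-disjoint from the lifted embeddings of the other components, so no case analysis about differing attachment points is needed and the cost is always exactly one extra component.

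The first direction, however, contains a genuine gap. First, there is an off-by-one that derails the plan: to get $d_\TBR(T_r,T'_r)\le d_\TBR(T,T')-1$ you only need an agreement forest for $T_r,T'_r$ with $d_\MAF(T,T')-1$ components; a forest of size $d_\MAF(T,T')-2$, which you set as the target (``size has dropped by (at least) $2$''), cannot exist at all, since by your own reverse inequality $d_\MAF(T_r,T'_r)\ge d_\MAF(T,T')-1$ and every agreement forest has at least $d_\MAF(T_r,T'_r)$ components. So the construction you sketch to achieve that drop of two --- restricting to $X\setminus C$ \emph{and} merging the remainder of $B_j$ with a ``suitable neighbouring component'' --- is aimed at an impossible goal, and the merging step is in any case unjustified (merging two components of an agreement forest generally destroys either the $T|B=T'|B$ condition or vertex-disjointness). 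Second, and more importantly, the step that actually does the work is missing: after invoking Theorem~\ref{thm:allchainsintact} to get a component $B_j\supseteq C$, one must show $B_j=C$. This is where the asymmetric cherries are used concretely: if some $x\in B_j\setminus C$ existed, then $bc|ax$ would be a quartet of $T|B_j$ while $ab|cx$ would be a quartet of $T'|B_j$, contradicting $T|B_j=T'|B_j$. With $B_j=C$ established, $F\setminus\{B_j\}$ is already an agreement forest for $T_r,T'_r$ with exactly one fewer component, which is all that is required; no merging is needed. Your proposal gestures at ``$a,b,c$ lying in the relevant components in a controlled way'' but never supplies this quartet argument, and without it the restriction to $X\setminus C$ need not reduce the number of components at all.
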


\begin{proof}
Without loss of generality, we establish the lemma using the same notation as in the definition of a $(*,3,*)$-reduction. \fuchsia{Let} $F_r$ be a maximum agreement forest for $T_r$ and $T_r'$, \green{and let $F$ be a maximum agreement forest for $T$ and $T'$.} Then $F_r\cup \{\{a,b,c\}\}$ is an agreement forest for $T$ and $T'$ \green{which implies that $|F_r| \geq |F|-1$. Hence,}
$$\green{d_\TBR(T_r, T'_r) = |F_r|-1 \geq |F|-2 = d_\TBR(T,T') - 1.}$$
By Theorem \ref{thm:allchainsintact}, we may assume that $C$ is preserved in $F$.
\green{(Formally, we apply the theorem to the set $K=\{ C \}$.)}
%\green{(To be precise: we can apply the theorem by considering an agreement forest in which $C$ is the only non-singleton component.)}
Hence, there exists a component $B_{abc}$ in $F$ such that  $C\subseteq B_{abc}$. Towards a contradiction, assume that $C\subset B_{abc}$ and let $x\in B_{abc} \setminus C$. Then, as $\{b,c\}$ is a cherry in $T$ and $\{a,b\}$ is a cherry of $T'$, it follows that $bc|ax$ is a quartet of $T|B_{abc}$ and $ab|cx$ is a quartet of $T'|B_{abc}$; thereby contradicting that $F$ is an agreement forest for $T$ and $T'$. Now, since $B_{abc}=C$, we have that 
% F_r gets defined twice, which is why I changed this
%$F_r=F\setminus \{B_{abc}\}$
\green{$F\setminus \{B_{abc}\}$}
 is an agreement forest for $T_r$ and $T_r'$. \green{Hence, $|F_r| \leq |F|-1$, which yields}
\green{$$d_\TBR(T, T') -1= |F|-2 \geq |F_r|-1 = d_\TBR(T_r,T_r').$$}\qed
\end{proof}

\begin{figure}[h!]
\center
\scalebox{1}{\input{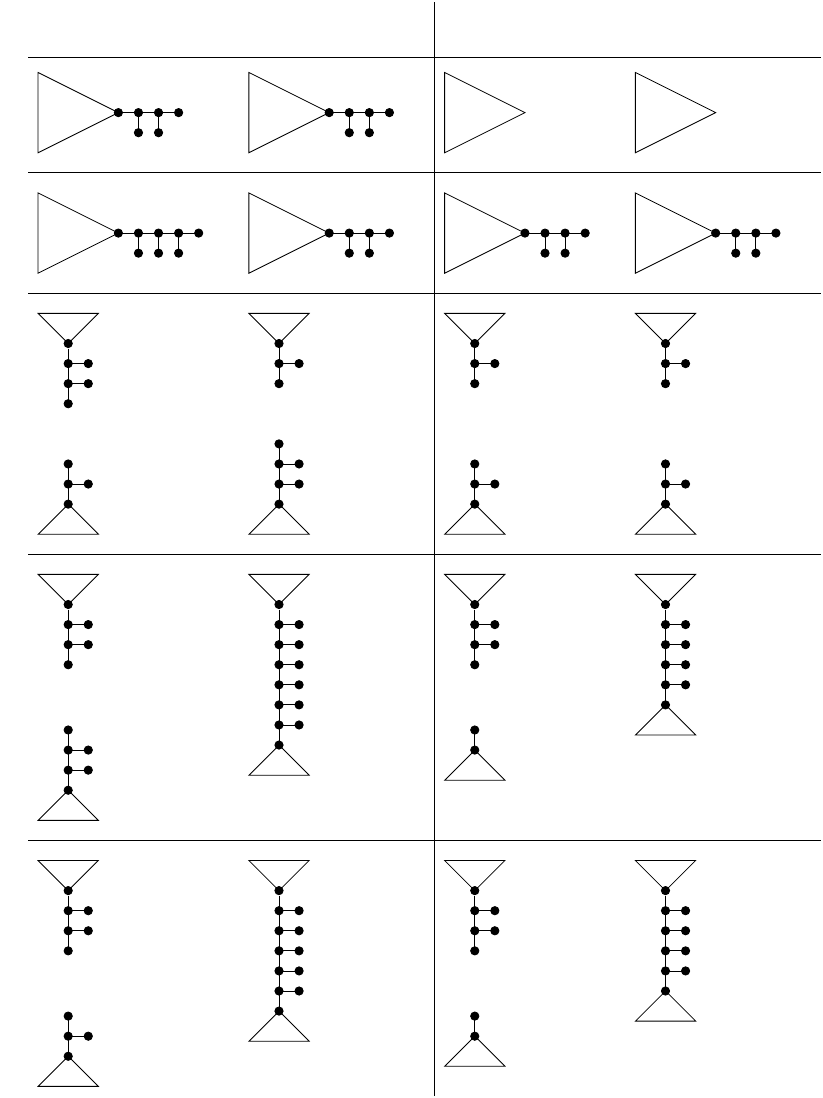_t}}
\caption{The five reductions that are described in Subsections~\ref{red:reduc1}-\ref{red:reduc4}: (i) $(*,3,*)$-reduction, (ii) $(3,1,*)$-reduction, (iii) $(2,1,2)$-reduction, (iv) $(3,3)$-reduction, (v) $(3,2)$-reduction. Triangles indicate subtrees. For (iii)-(v), we have omitted some parts of the trees. \green{Note that the reductions do not require the
sets $P$, $Q$, $P'$, and $Q'$ to all be non-empty.}}
\label{fig:reductions}
\end{figure}

\subsection{$(3,1,*)$-reduction}
\label{red:reduc2}

Let $T$ and $T'$ be two unrooted binary phylogenetic trees on $X$, and let $C = (a,b,c)$ be a 3-chain that is common to $T$ and $T'$. If  $T'$  has cherry $\{b,c\}$ and $T$ has cherry $\{c,x\}$ for some element $x\in X\setminus\fuchsia{C}$, then a {\it $(3,1,*)$-reduction} is the operation of deleting $x$ from $T$ and $T'$, i.e. we set $T_r = T|X \setminus \{x\}$ and $T'_r = T'|X \setminus \{x\}$. Informally, $x$ prevents $C$ from being a common pendant subtree of $T$ and $T'$.

\begin{lemma}\label{l:reduc2}
Let $T$ and $T'$ be two unrooted binary phylogenetic trees on $X$, and let $T_r$ and $T_r'$ be two trees obtained from $T$ and $T'$, respectively, by a single application of the $(3,1,*)$-reduction. Then
$d_\TBR(T_r, T'_r) = d_\TBR(T,T') - 1.$
\end{lemma}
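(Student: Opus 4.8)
The plan is to mirror the structure of the proof of Lemma~\ref{l:reduc1}, establishing the two inequalities $d_\TBR(T_r,T'_r) \geq d_\TBR(T,T') - 1$ and $d_\TBR(T_r,T'_r) \leq d_\TBR(T,T') - 1$ separately, but with the roles of the chain and the extra leaf $x$ carefully tracked. For the first inequality, let $F_r$ be a maximum agreement forest for $T_r$ and $T_r'$. Since $T$ and $T'$ differ from $T_r$ and $T_r'$ only in the presence of the single leaf $x$, adding $x$ back as its own singleton component, i.e. $F_r \cup \{\{x\}\}$, is an agreement forest for $T$ and $T'$. Hence $d_\MAF(T,T') \leq |F_r| + 1$, which rearranges to $d_\TBR(T,T') \leq d_\TBR(T_r,T'_r) + 1$, giving one direction.

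For the reverse inequality I would take a maximum agreement forest $F$ for $T$ and $T'$ and invoke Theorem~\ref{thm:allchainsintact} applied to $K = \{C\}$, so that $C = (a,b,c)$ is preserved in $F$; let $B$ be the component with $C \subseteq B$. The key structural claim is that we may assume $B = C$, i.e. $B$ contains no taxon outside $C$. Suppose for contradiction that there is $y \in B \setminus C$. Because $\{b,c\}$ is a cherry of $T'$, the path from $b$ to $c$ in $T'[B]$ cannot meet the path from $a$ to $y$, so $bc \mid ay$ is a quartet of $T'|B$. On the other hand, in $T$ the chain $(a,b,c)$ sits along a path and $\{c,x\}$ is a cherry, with $x \notin B$; analyzing how $y$ attaches relative to the path $p_a, p_b, p_c$ in $T$ should force a quartet of $T|B$ on $\{a,b,c,y\}$ that is incompatible with $bc \mid ay$ — most naturally $ab \mid cy$ or $ac \mid by$ depending on which side of the chain $y$ hangs off — contradicting that $F$ is an agreement forest. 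Once $B = C$ is established, the component $C$ is a common pendant subtree hanging off the edge to $x$ in $T$ (and pendant in $T'$ as well, since $\{b,c\}$ is a cherry there), so deleting $x$ from both trees simultaneously removes the obstruction: $F \setminus \{C\}$, after restricting each remaining component to $X \setminus (C \cup \{x\})$ — in fact each remaining component already avoids $x$ and $C$ since $x \in$ some other component and $C = B$ — together with the fact that no component of $F$ other than possibly the one containing $x$ changes, is an agreement forest for $T_r$ and $T_r'$. The subtlety is that $x$'s own component may shrink or vanish when $x$ is deleted, but it cannot increase in number, so $d_\MAF(T_r,T'_r) \leq |F| - 1$, i.e. $d_\TBR(T_r,T'_r) \leq d_\TBR(T,T') - 1$.

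The main obstacle I anticipate is the incompatibility argument showing $B = C$: unlike in Lemma~\ref{l:reduc1}, where \emph{both} trees have a cherry inside $C$, here only $T'$ has the cherry $\{b,c\}$, while in $T$ the relevant cherry $\{c,x\}$ involves a taxon \emph{outside} $C$. So one must argue more carefully about the position of a hypothetical extra taxon $y \in B \setminus C$ in $T$: $y$ could in principle attach to the chain path on the $a$-side, between chain elements, or on the $c$-side, and one needs that in every case the induced quartet on $\{a,b,c,y\}$ in $T$ clashes with the forced quartet $bc\mid ay$ from $T'$. The cherry condition $\{c,x\}$ in $T$ with $x \notin C$ is exactly what is needed to pin down that $c$ is at the ``end'' of the chain in $T$ and that nothing from $B$ can sit on $c$'s far side, which is the crucial ingredient; I would spell this out by a short case analysis on where the path from $y$ meets $T[C]$ in $T$. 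Everything else is bookkeeping analogous to the previous lemma.
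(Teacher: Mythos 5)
Your first inequality is fine and matches the paper, but the second direction contains a genuine gap, in two places. First, your ``key structural claim'' that the component $B$ containing the preserved chain satisfies $B=C$ is not provable and is false in general: in $T$ the cherry $\{c,x\}$ means that $c$'s parent is adjacent only to $c$, $x$ and $b$'s parent, so every taxon $y\notin\{a,b,c,x\}$ attaches beyond $a$'s parent, and the quartet of $T$ on $\{a,b,c,y\}$ is $bc|ay$ --- exactly the same quartet that $T'$ induces via its cherry $\{b,c\}$. So for $y\in B\setminus C$ (with $y \neq x$) there is no quartet clash at all, and the case analysis you hope will ``force'' a contradiction cannot succeed; a maximum agreement forest may legitimately put the chain inside a strictly larger component. (This is implicitly acknowledged in the paper right after the lemma, where it is noted that $\{a,b,c\}$ survives in $T_r,T'_r$ as a common pendant subtree to be removed by a later subtree reduction.) The only taxon that can be excluded from $B$ is $x$ itself, via the genuinely conflicting quartets $ab|cx$ in $T$ versus $bc|ax$ in $T'$.

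Second, even granting $B=C$, your bookkeeping removes the wrong component. Deleting the component $C=\{a,b,c\}$ from $F$ does not give an agreement forest for $T_r$ and $T'_r$: the reduced trees are on $X\setminus\{x\}$, so $a,b,c$ are still taxa and would be left uncovered (your restriction to $X\setminus(C\cup\{x\})$ is a restriction to the wrong taxon set). What must be deleted is the component containing $x$, and to gain the crucial $-1$ one must show that this component is the singleton $\{x\}$; your concession that ``$x$'s own component may shrink'' only yields $d_\MAF(T_r,T'_r)\leq |F|$, i.e.\ $d_\TBR(T_r,T'_r)\leq d_\TBR(T,T')$, which is not the statement of Lemma~\ref{l:reduc2}. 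The paper's argument closes exactly this hole: after invoking Theorem~\ref{thm:allchainsintact}, it shows $x\notin B_{abc}$ by the quartet conflict above, and then observes that if $|B_x|\geq 2$ the path in $T$ from $x$ to any other taxon of $B_x$ must pass through $x$'s parent, which equals $c$'s parent and hence lies on $T[B_{abc}]$, violating vertex-disjointness; therefore $B_x=\{x\}$ and $F\setminus\{B_x\}$ is an agreement forest for $T_r,T'_r$ with $|F|-1$ components. Your proposal never establishes this singleton property, so the inequality $d_\TBR(T_r,T'_r)\leq d_\TBR(T,T')-1$ is not proved.
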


\begin{proof}
Again without loss of generality, we establish the lemma using the same notation as in the definition of a $(3,1,*)$-reduction. \fuchsia{Let $F_r$ be a maximum agreement forest for $T_r$ and $T_r'$, and let $F$ be a maximum agreement forest for $T$ and $T'$.} To show that $d_\TBR(T_r, T'_r) \geq d_\TBR(T,T') - 1,$ we apply the same argument as in the first part of the proof of Lemma~\ref{l:reduc1} with the modification of considering $F_r\cup\{\{x\}\}$ (instead of $F_r\cup\{C\}$) as an agreement forest for $T$ and $T'$. \fuchsia{Now, by} Theorem~\ref{thm:allchainsintact}, we may assume that $C$ is preserved in $F$. Let $B_x$ be the component of $F$ that contains $x$, and let $B_{abc}$ be the component of $F$ such that $C\subseteq B_{abc}$. By the choice of $F$, $B_{abc}$ exists. Clearly, $B_x\ne B_{abc}$ since, otherwise, $ab|cx$ is a quartet of $T|B_x$ but not a quartet of $T'|B_x$. Moreover, if $|B_x|\geq 2$, then $T[B_x]$ and $T[B_{abc}]$ are not vertex-disjoint in $T$. It now follows that $B_x=\{x\}$ and that $F\setminus \{B_x\}$ is an agreement forest for  $T_r$ and $T_r'$. \green{Hence, $|F_r| \leq |F|-1$, so we have,}
\green{$$d_\TBR(T, T') -1= |F|-2 \geq |F_r|-1 = d_\TBR(T_r,T_r').$$} \qed
\end{proof}

\noindent Following on from the proof of Lemma~\ref{l:reduc2}, note that $\{a,b,c\}$ is the leaf set of a pendant subtree that is common to $T_r$ and $T_r'$. The two reduced trees can therefore be further reduced by the subtree reduction.  

\subsection{$(2,1,2)$-reduction}
\label{red:reduc3}
Let $T$ and $T'$ be two unrooted binary phylogenetic trees on $X$. Furthermore, let $\{a,b,c,d,x\}\subseteq X$ such that $C_1=(a,b)$ and $C_2=(c,d)$ are two 2-chains that are common to $T$ and $T'$. 
%Without loss of generality, we may assume that $T$ and $T'$ have previously been reduced under the $(3,1,*)$-reduction and, thus, $C_1$ or $C_2$ are common chains of maximal length. 
If $T$ has cherries $\{b,x\}$ and $\{c,d\}$, and if $T'$ has cherries $\{a,b\}$ and $\{d,x\}$, then a {\it $(2,1,2)$-reduction} is the operation of obtaining $T_r$ and $T_r'$ from $T$ and $T'$, respectively, by deleting $x$ from $T$ and $T'$, i.e. $T_r = T|X \setminus \{x\}$ and $T'_r = T'|X \setminus \{x\}$.

\begin{lemma}\label{l:reduc3}
Let $T$ and $T'$ be two unrooted binary phylogenetic trees on $X$, and let $T_r$ and $T_r'$ be  two trees obtained from $T$ and $T'$, respectively, by a single application of the $(2,1,2)$-reduction. Then
$d_\TBR(T_r, T'_r) = d_\TBR(T,T') - 1.$
\end{lemma}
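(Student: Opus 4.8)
The plan is to follow the same two-inequality template used in the proofs of Lemmas~\ref{l:reduc1} and~\ref{l:reduc2}, working entirely with maximum agreement forests. Fix a maximum agreement forest $F_r$ for $T_r$ and $T_r'$ and a maximum agreement forest $F$ for $T$ and $T'$. For the easy direction $d_\TBR(T_r,T'_r)\geq d_\TBR(T,T')-1$, observe that $F_r\cup\{\{x\}\}$ is an agreement forest for $T$ and $T'$: adding back the single taxon $x$ as its own component cannot violate the two agreement-forest conditions, so $|F_r|+1\geq|F|$, giving the inequality after translating sizes into TBR distances via Theorem~\cite[Theorem 2.13]{AllenSteel2001} (the $d_\MAF$ relation). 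The real content is the reverse inequality $d_\TBR(T_r,T'_r)\leq d_\TBR(T,T')-1$, i.e. producing from $F$ an agreement forest for $T_r$ and $T_r'$ with at most $|F|-1$ components.

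For the reverse direction I would invoke Theorem~\ref{thm:allchainsintact} applied to $K=\{C_1,C_2\}$: since $C_1=(a,b)$ and $C_2=(c,d)$ are both 2-chains, and $\{a,b\}$ is a cherry in $T'$ while $\{c,d\}$ is a cherry in $T$, each of them is pendant in at least one of the two trees, so the theorem gives a maximum agreement forest $F$ in which both $C_1$ and $C_2$ are preserved. Let $B_{ab}$ be the component containing $\{a,b\}$, let $B_{cd}$ be the component containing $\{c,d\}$, and let $B_x$ be the component containing $x$. The goal is to show $B_x=\{x\}$, so that $F\setminus\{B_x\}$ is an agreement forest for $T_r$ and $T_r'$ (deleting an isolated-taxon component clearly preserves both agreement-forest conditions on the restricted taxon set), yielding $|F_r|\leq|F|-1$.

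The heart of the argument, and the main obstacle, is the quartet/vertex-disjointness case analysis showing $B_x=\{x\}$. First one rules out $B_x$ coinciding with $B_{ab}$ or with $B_{cd}$. If $B_x=B_{ab}$ then, using that $\{b,x\}$ is a cherry in $T$ and $\{a,b\}$ is a cherry in $T'$, the quartet $bx|ay$ of $T|B_x$ versus the quartet $ab|xy$ of $T'|B_x$ for any further taxon $y$ — or, if $B_x=\{a,b,x\}$ exactly, directly comparing the induced quartets on $\{a,b,x\}$ together with the fact that $T|\{a,b,x\}\neq T'|\{a,b,x\}$ — contradicts that $F$ is an agreement forest; the case $B_x=B_{cd}$ is symmetric, using that $\{c,d\}$ is a cherry in $T$ and $\{d,x\}$ is a cherry in $T'$. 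Next one shows $B_x$ cannot be a component with $|B_x|\geq 2$ disjoint from both $B_{ab}$ and $B_{cd}$: in $T$, the vertex $p_x$ (parent of $x$) is also the parent of $b$, and $b\in B_{ab}$, so any edge of $T[B_x]$ incident to $x$ passes through the cherry-vertex $\{b,x\}$, forcing $T[B_x]$ and $T[B_{ab}]$ to share the vertex $p_x$ — a contradiction to vertex-disjointness in $T$ — unless the entire content of $B_x$ beyond $x$ lies ``past'' $b$, which is impossible since $b\notin B_x$. (Symmetrically, in $T'$ one can instead use that $p_x$ is the parent of $d$ with $d\in B_{cd}$.) Hence $B_x=\{x\}$, and the reverse inequality follows. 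I would close by translating both inequalities through Theorem~\cite[Theorem 2.13]{AllenSteel2001}, exactly as in the displayed equations of the preceding two lemmas, to conclude $d_\TBR(T_r,T'_r)=d_\TBR(T,T')-1$; one should also double-check that $x\notin C_1\cup C_2$ and that the five taxa $a,b,c,d,x$ are genuinely distinct so that all the cherry assertions are about distinct leaves, but this is part of the hypothesis of a $(2,1,2)$-reduction.
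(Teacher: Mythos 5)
Your overall strategy coincides with the paper's: the easy inequality via $F_r\cup\{\{x\}\}$, and for the hard direction an application of Theorem~\ref{thm:allchainsintact} to $K=\{C_1,C_2\}$ (each chain pendant in one of the trees), followed by a case analysis showing that $x$ must form a singleton component of $F$, so that $F\setminus\{\{x\}\}$ is an agreement forest for $T_r$ and $T_r'$. Your treatment of the case where $B_x$ has size at least two and meets $\{a,b,c,d,x\}$ only in $x$ is exactly the paper's argument (the shared parent of $b$ and $x$ in $T$ forces $T[B_x]$ and $T[B_{ab}]$ to intersect).

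There is, however, a genuine flaw in how you exclude $x\in B_{ab}$ (and symmetrically $x\in B_{cd}$). Your quartet argument needs a fourth taxon $y$, and your fallback for the subcase $B_x=B_{ab}=\{a,b,x\}$ rests on the claim that $T|\{a,b,x\}\neq T'|\{a,b,x\}$. That claim is false: there is only one unrooted binary phylogenetic tree on three leaves, so $T|\{a,b,x\}=T'|\{a,b,x\}$ always, condition 1 of the agreement-forest definition yields no contradiction, and ``induced quartets'' on a three-element set do not exist. This subcase must instead be excluded via condition 2: since $\{d,x\}$ is a cherry of $T'$, the embedding $T'[B_{ab}]$ (which must reach $x$ through its parent) contains the common parent of $x$ and $d$, and that vertex also lies on $T'[B_{cd}]$, the path from $c$ to $d$; hence $T'[B_{ab}]$ and $T'[B_{cd}]$ are not vertex-disjoint in $T'$. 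Symmetrically, if $B_{cd}=\{c,d,x\}$ one uses the common parent of $x$ and $b$ in $T$ against $T[B_{ab}]$. This is precisely how the paper organizes the analysis: it reserves the quartet argument for the case $B_{ab}=B_{cd}\ni x$, where a fourth taxon is guaranteed, and uses vertex-disjointness whenever $x$ lies in one chain's component but not the other's. So the gap is localized and repairable with tools you already use elsewhere in your proof, but as written that step fails.
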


\begin{proof}
Without loss of generality, we may assume that the two common 2-chains $C_1$ and $C_2$ and their respective
configurations
in $T$ and $T'$ are exactly as described in the definition of a $(2,1,2)$-reduction. Then, $d_\TBR(T_r, T'_r) \geq d_\TBR(T,T') - 1$  follows as described in the proof of Lemma~\ref{l:reduc2}. We establish the lemma by showing that $d_\TBR(T_r, T'_r) \leq d_\TBR(T,T') - 1$. Let $F$ be a maximum agreement forest for $T$ and $T'$, \fuchsia{and let $F_r$ be a maximum agreement forest for $T_r$ and $T_r'$.} By Theorem~\ref{thm:allchainsintact}, we may assume that $C_1$ and $C_2$ are preserved in $F$. \green{(Formally, we apply the theorem to the set of chains $K=\{ C_1, C_2 \}$, noting that each chain is pendant in one of the two trees.)} Let $B_{ab}$ be the element in $F$ that contains $C_1$ and, similarly,  let $B_{cd}$ be the element in $F$ that contains $C_2$. Towards showing that $\{x\}\in F$, first assume that there exists an element $B_x\in F$ such that $|B_x|\geq 2$ and $B_x\cap\{a,b,c,d,x\}=\{x\}$. Then, it is straightforward to check that $T[B_x]$ and $T[B_{ab}]$ are not vertex-disjoint in $T$; a contradiction. Thus, $x$ is either contained in $B_{ab}$ or $B_{cd}$, or $\{x\}$ is an element in $F$. Now, if $B_{ab}=B_{cd}$ and $x\in B_{ab}$, then $ax|cd$ is a quartet of $T|B_{ab}$ while $ac|dx$ is a quartet of $T'|B_{ab}$; a contradiction. Otherwise, if $B_{ab}\ne B_{cd}$ and $x\in B_{ab}$, then $T'[B_{ab}]$ and $T'[B_{cd}]$ are not vertex-disjoint in $T'$. Symmetrically, if $B_{ab}\ne B_{cd}$ and $x\in B_{cd}$, then $T[B_{ab}]$ and $T[B_{cd}]$ are not vertex-disjoint in $T$. It now follows that $\{x\}\in F$ and that $F\setminus \{\{x\}\}$ is an agreement forest for  $T_r$ and $T_r'$. Hence, we have
\green{$$d_\TBR(T, T') -1= |F|-2 \geq |F_r|-1= d_\TBR(T_r,T_r').$$} 
\qed
\end{proof}
After performing a $(2,1,2)$-reduction, note that the two reduced trees $T_r$ and $T_r'$ have common pendant subtrees on leaf sets $\{a,b\}$ and $\{c,d\}$, respectively, that can be reduced further under the subtree reduction.

\subsection{$(3,3)$- and $(3,2)$-reduction}
\label{red:reduc4}
Let $T$ and $T'$ be two unrooted binary phylogenetic trees on $X$. The next reduction can be applied in two slightly different situations. The first situation considers two 3-chains while the second situation considers one 3-chain and one 2-chain. We start by formally describing the first situation. Let $\{a,b,c,x,y,z\}\subseteq X$ such that $C_1=(a,b,c)$ and $C_2=(x,y,z)$ are two 3-chains that are common to $T$ and $T'$. If $T$ has cherries $\{b,c\}$ and $\{x,y\}$, and if $T'$ has a 6-chain $(a,b,c,x,y,z)$ %\marginpar{SK: Check that everything goes through if the 6-chain (resp 5-chain) is pendant. I assume it does...if so it might be worth noting, to help the reader.}
then a {\it $(3,3)$-reduction} is the operation of obtaining $T_r$ and $T_r'$ from $T$ and $T'$, respectively, by deleting $x$ and $y$ from $T$ and $T'$, i.e. $T_r = T|X \setminus \{x,y\}$ and $T'_r = T'|X \setminus \{x,y\}$. 
%\green{(Note that the reduction still applies if the 6-chain is pendant in $T'$)}.
 We now turn to the second situation. Let $\{a,b,c,y,z\}\subseteq X$ such that $C_1=(a,b,c)$ and $C_2=(y,z)$ are two chains that are common to $T$ and $T'$. If $T$ has cherries $\{b,c\}$ and $\{y,z\}$, and if $T'$ has a 5-chain $(a,b,c,y,z)$, then a {\it $(3,2)$-reduction} is the operation of obtaining $T_r$ and $T_r'$ from $T$ and $T'$, respectively by deleting $y$ from $T$ and $T'$, i.e. $T_r = T|X \setminus \{y\}$ and $T'_r = T'|X \setminus \{y\}$.

\begin{lemma}\label{l:reduc4}
Let $T$ and $T'$ be two unrooted binary phylogenetic trees on $X$, and let $T_r$ and $T_r'$ be two trees obtained from $T$ and $T'$, respectively, by a single application of the $(3,3)$- or the $(3,2)$-reduction. Then $d_\TBR(T_r, T'_r) = d_\TBR(T,T').$
\end{lemma}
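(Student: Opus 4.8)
The plan is to prove the two inequalities $d_\TBR(T_r,T_r')\le d_\TBR(T,T')$ and $d_\TBR(T_r,T_r')\ge d_\TBR(T,T')$ separately, working throughout with (maximum) agreement forests and the identity $d_\TBR=d_\MAF-1$. The first inequality is the routine one: given a maximum agreement forest $F$ for $T$ and $T'$, restrict every component to $X\setminus\{x,y\}$ (for the $(3,3)$-reduction) or to $X\setminus\{y\}$ (for the $(3,2)$-reduction) and discard any empty block; standard arguments show that the result is an agreement forest for $T_r$ and $T_r'$ with at most $|F|$ components, so $d_\MAF(T_r,T_r')\le d_\MAF(T,T')$.

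For the reverse inequality I would, starting from a maximum agreement forest for $T_r$ and $T_r'$, construct an agreement forest for $T$ and $T'$ of the same size. Since $(a,b,c)$ is a common $3$-chain of $T_r$ and $T_r'$ (it is taxa-disjoint from the deleted taxa, hence survives the restriction), Theorem~\ref{thm:allchainsintact} applied to $K=\{(a,b,c)\}$ allows me to pick a maximum agreement forest $F_r$ for $T_r$ and $T_r'$ in which $(a,b,c)$ is preserved; say $\{a,b,c\}\subseteq B_{abc}$. The crucial structural observation is that $z\notin B_{abc}$: in both reductions $T_r$ still contains the cherry $\{b,c\}$, whereas in $T_r'$ the taxa $a,b,c,z$ lie consecutively along a $4$-chain $(a,b,c,z)$ (the remnant of the long chain of $T'$ after deletion), so $T_r|\{a,b,c,z\}$ displays the quartet $bc|az$ while $T_r'|\{a,b,c,z\}$ displays the quartet $ab|cz$, and these differ. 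Let $B_z$ be the component of $F_r$ with $z\in B_z$; then $B_z\cap\{a,b,c\}=\emptyset$. Using that $T_r'[B_{abc}]$ contains the whole path between $a$ and $c$ along the $4$-chain and is vertex-disjoint from $T_r'[B_z]$, one further deduces that $B_z$ is confined, in $T_r'$, to the part of the tree lying on the $z$-side of that $4$-chain (possibly just $\{z\}$), and in particular $T_r'[B_z]$ does not use the chain edge that separates $c$ from $z$. Finally, in $T_r$ the taxon $z$ is itself the remnant of a common pendant subtree: the cherry $\{y,z\}$ collapsed to the leaf $z$ in the $(3,2)$-case, and the pendant subtree on $\{x,y,z\}$ (with $\{x,y\}$ a cherry) collapsed to the leaf $z$ in the $(3,3)$-case.

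The new agreement forest is then $F:=(F_r\setminus\{B_z\})\cup\{B_z\cup D\}$, where $D=\{y\}$, respectively $D=\{x,y\}$, is the set of deleted taxa; this is a partition of $X$ with $|F|=|F_r|$, which gives the desired inequality once $F$ is shown to be an agreement forest for $T$ and $T'$. For the ``same topology'' condition, one observes that reinstating $D$ in $T_r$ merely reattaches a pendant appendage at the leaf position of $z$ --- a sibling leaf $y$, respectively a sibling cherry $\{x,y\}$ --- and that, because in $T'$ the deleted taxa form a contiguous sub-segment of the long chain immediately adjacent to $z$ while $T_r'[B_z]$ avoids the relevant chain edge, reinstating $D$ in $T_r'$ produces the identical local appendage at $z$; hence $T|(B_z\cup D)=T'|(B_z\cup D)$, and $T|B=T_r|B=T_r'|B=T'|B$ for every other $B\in F_r$. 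For vertex-disjointness, one notes that passing from $T_r,T_r'$ to $T,T'$ only enlarges each embedding by the new vertices introduced with $D$, and these attach exclusively to the embedding of $B_z$, so the disjointness required of $F$ is inherited from $F_r$. Combining the two inequalities gives $d_\TBR(T_r,T_r')=d_\TBR(T,T')$.

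I expect the main obstacle to be precisely the bookkeeping around the confinement of $B_z$ and the induced vertex-disjointness: one must rule out that $B_z$ ``wraps around'' past the $4$-chain in $T_r'$ (which would obstruct the reinsertion), and this is exactly where vertex-disjointness of $F_r$'s embeddings --- together with the facts that $T_r[B_{abc}]$ pins down the cherry $\{b,c\}$ and $T_r'[B_{abc}]$ occupies the spine of the $4$-chain --- does the work; one must also handle the small degenerate cases where $B_z=\{z\}$ is forced. The two sub-cases $(3,3)$ and $(3,2)$ run along identical lines, the only difference being whether $D$ is a cherry or a single leaf, so I would write the $(3,3)$-case in full and indicate the (simpler) modifications for $(3,2)$.
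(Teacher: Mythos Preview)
Your proposal is correct and follows essentially the same approach as the paper's proof. Both directions match: for $d_\TBR(T_r,T_r')\le d_\TBR(T,T')$ the paper simply invokes \cite[Lemma~2.11]{AllenSteel2001} while you restrict a maximum agreement forest, which is an equivalent route; for the reverse inequality both proofs apply Theorem~\ref{thm:allchainsintact} to the chain $(a,b,c)$, use the quartet $bc|az$ versus $ab|cz$ to separate $z$ from $B_{abc}$, argue that the edge between the parents of $c$ and $z$ in $T_r'$ lies on no embedding, and then form exactly the forest $(F_r\setminus\{B_z\})\cup\{B_z\cup D\}$. The only cosmetic difference is that the paper pins down the location of any extra taxa of $B_{abc}$ (in the subtree $Q$ on the $a$-side of $T_r'$) via a second quartet argument, whereas you reach the same conclusion implicitly from the vertex-disjointness of $T_r'[B_{abc}]$ and $T_r'[B_z]$; both are valid. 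One small slip in wording: the subtree on $\{x,y,z\}$ (respectively the cherry $\{y,z\}$) is pendant in $T$ but not ``common'' to $T$ and $T'$ as a pendant subtree---in $T'$ these taxa sit inside the long chain---though this does not affect your argument.
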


\begin{proof}
Without loss of generality, we may assume that the two common chains $C_1$ and $C_2$ and their respective configurations in $T$ and $T'$ are exactly as described in the paragraph prior to the statement of this lemma. Let $Y=\{a,b,c,x,y,z\}$ if $T_r$ and $T_r'$ have been obtained from $T$ and $T'$ by a $(3,3)$-reduction and, otherwise, let $Y=\{a,b,c,y,z\}$. \\

% NOT SURE WE NEED THIS ANYMORE
%Furthermore, let $F$ be a maximum agreement forest for $T$ and $T'$. 
%
% I'M NOT SURE THE LINE BELOW IS CORRECT? BUT I'M NOT SURE WE NEED IT, I HAVE REPLACED IT WITH ALLEN AND STEEL LEM 2.11
%
%Then $$F_r=\{B_i\setminus Y: B_i\in F \text{ and } B_i\cap X\setminus Y \ne\emptyset\}$$ is an agreement forest for $T_r$ and $T_r'$ and we have $d_\TBR(T_r, T'_r)\leq %d_\TBR(T,T').$ 
\green{Given that the $T_r$ and $T'_r$ are induced subtrees of $T$ and $T'$ respectively (i.e. obtained from $T$ and $T'$ using the ``$|$'' operator), it follows from Lemma 2.11 of \cite{AllenSteel2001} that} $$\green{d_\TBR(T_r, T'_r)\leq d_\TBR(T,T').}$$ To establish the other direction, let $F_r$ be a maximum agreement forest for $T_r$ and $T_r'$, \fuchsia{and let $F$ be a maximum agreement forest for $T$ and $T'$.} By Theorem~\ref{thm:allchainsintact}, we may assume that $C_1$ is preserved in $F_r$. Let $B_{abc}$ be the element in $F_r$ such that $C_1\subseteq B_{abc}$. Similarly, let $B_z$ be the element in $F_r$ such that $z\in B_z$. We have $B_{abc}\ne B_z$ since, otherwise, $bc|az$ is a quartet of $T_r|B_{abc}$ while $ab|cz$ is a quartet of  $T_r'|B_{abc}$; a contradiction. \green{Next, observe that if $B_{abc}$ contains some taxon $d \not \in \{a,b,c,z\}$, then $d$ \fuchsia{is a leaf in the subtree $Q$  of $T_r'$}, as depicted in
Figure \ref{fig:reductions}(iv) and (v). If this was not so, then $ad|bc$ would be a quartet of $T_r|B_{abc}$ while $cd|ab$ would be a quartet of $T_r'|B_{abc}$; a contradiction. Combining these facts yields the insight that, in $T_r'$, the edge between the parent of $z$ and the parent of $c$ (if such an edge exists) is not on the embedding of any component in $F_r$.} Since $F_r$ is an agreement forest for $T_r$ and $T_r'$, it now follows that $$(F_r\setminus\{B_z\})\cup\{B_z\cup\{x,y\}\}$$ is an agreement forest for $T$ and $T'$ if a $(3,3)$-reduction has been applied and that $$(F_r\setminus\{B_z\})\cup\{B_z\cup\{y\}\}$$ is an agreement forest for $T$ and $T'$ if a $(3,2)$-reduction has been applied. Hence, 

\green{$$d_\TBR(T_r, T'_r) = |F_r|-1 \geq |F|-1 = d_\TBR(T,T').$$}
%\marginpar{SL. More details?}
\qed
\end{proof}

%\begin{figure}[h!]
%\center
%\scalebox{1}{\input{reduction.pdf_t}}
%\caption{The five reductions that are described in Subsections~\ref{red:reduc1}-\ref{red:reduc4}: (i) $(*,3,*)$-reduction, (ii) $(3,1,*)$-reduction, (iii) $(2,1,2)$-reduction, (iv) $(3,3)$-reduction, (v) $(3,2)$-reduction. Triangles indicate subtrees. For (iii)-(v), we have omitted some parts of the trees. \green{Note that the reductions do not require the
%sets $P$, $Q$, $P'$, and $Q'$ to all be non-empty.}}
%\label{fig:reductions}
%\end{figure}

We end this section by noting that it takes \green{$O( \text{poly}(|X|))$} time to test if any of the new reductions presented in Subsections~\ref{red:reduc1}-\ref{red:reduc4} can be applied. While the $(3,2)$- and $(3,3)$-reduction preserves the TBR distance, each of the other three new reductions reduces the parameter by exactly one, i.e. the TBR distance for the unreduced trees can be calculated by computing the TBR distance for the reduced trees and adding one to the result.

\section{A new kernel for computing the TBR distance}\label{sec:new-kernel}

In this section, we establish the main result of this paper. To formally state it, we require a new definition. Let $T$ and $T'$ be two unrooted binary phylogenetic trees on $X$. We say that $T$ and $T'$ are {\it exhaustively reduced} if they are subtree and chain reduced, and none of the five reductions presented in Section~\ref{sec:new-red} can be applied to $T$ and $T'$.

\begin{theorem}\label{t:11kernel}
Let $T$ and $T'$ be two exhaustively reduced unrooted binary phylogenetic trees on $X$. If $d_\TBR(T,T')\geq 2$, then $|X|\leq 11d_\TBR(T,T')-9$.
\end{theorem}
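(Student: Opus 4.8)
### Proof plan for Theorem~\ref{t:11kernel}

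The plan is to follow the unrooted generator methodology of \cite{tightkernel}: fix an optimal phylogenetic network $N$ on $X$ with $r(N)=d_\TBR(T,T')=:k$ that displays both $T$ and $T'$ (using Theorem~\ref{t:tbr-equiv}), and look at the \emph{underlying generator} obtained by deleting all pendant subtrees and suppressing degree-2 vertices. This generator has at most $3k-3$ edges (``sides'') when $k\geq 2$, and each taxon of $X$ either sits on a side of the generator or hangs off it inside a pendant subtree; after the subtree reduction every maximal common pendant subtree is a single leaf, so what we really must bound is, for each side, the number of taxa placed on it. In \cite{tightkernel} the key lemma is that a side of the generator carries a common chain of $T$ and $T'$, and the chain reduction caps each such chain at length $3$, giving at most $3$ taxa per side and hence $\leq 3(3k-3) + \text{(extra)} = 15k - 9$ after bookkeeping of the low-degree generator vertices. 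Our job is to sharpen ``$3$ taxa per side'' to an \emph{amortized} ``$2$ taxa per side'' (roughly), using the five new reductions to kill the configurations in which a side genuinely needs three taxa.

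The core of the argument is a case analysis on a side $s$ of the generator that carries a maximal common chain $C=(\ell_1,\dots,\ell_n)$ of length exactly $3$. Using Theorem~\ref{thm:allchainsintact} applied to $K=\{C\}$ (and, when two short chains sit on adjacent sides, to $K=\{C_1,C_2\}$), fix a maximum agreement forest $F$ in which $C$ is preserved, i.e. $C\subseteq B$ for some component $B$ of $F$. First I would observe that the two endpoints of $s$ in the generator dictate, for each of $T$ and $T'$, whether $\ell_1$'s parent and $\ell_3$'s parent can ``escape'' into a larger subtree; combined with the fact that $C$ is a \emph{maximal} common chain (so the taxon attached just beyond $\ell_1$ in $T$ and the one beyond $\ell_3$ differ between $T$ and $T'$, or terminate the chain), this forces the local cherry structure at the two ends of $C$ in $T$ and $T'$ into a short list of possibilities. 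Each possibility either (a) is exactly the trigger of one of the $(*,3,*)$-, $(3,1,*)$-, $(2,1,2)$-, $(3,3)$-, $(3,2)$-reductions — contradicting that $T,T'$ are exhaustively reduced — or (b) lets us argue, via the preserved-chain component $B$ and the vertex-disjointness axioms of agreement forests (the same quartet arguments used in Lemmas~\ref{l:reduc1}--\ref{l:reduc4}), that the three taxa on $s$ can be ``paid for'' by charging one of them to an adjacent side or to a generator vertex that is otherwise under-loaded. Summing the charges over all $\leq 3k-3$ sides and the $O(k)$ generator vertices, and tracking the additive constants exactly as in \cite{tightkernel}, yields $|X| \leq 11k - 9$.

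The step I expect to be the main obstacle is the global amortization: individually ruling out the ``bad'' local configurations on a single side is a finite check against the five reduction rules, but showing that the leftover length-$3$ sides can \emph{all simultaneously} be charged without double-counting requires a careful discharging scheme. In particular, two length-$3$ chains on sides sharing a generator vertex interact, and it is precisely here that we must invoke Theorem~\ref{thm:allchainsintact} with $|K|=2$ to get a \emph{single} agreement forest preserving both chains at once — only then can the quartet contradictions (e.g. ``$T'[B_1]$ and $T'[B_2]$ would not be vertex-disjoint'') be deployed jointly. A secondary subtlety is the treatment of generator sides that are \emph{pendant} (attached to the rest of $N$ at only one vertex) versus internal sides, and of the degenerate small generators that arise when $k$ is small; these contribute the $-9$ constant and must be handled as explicit base cases, mirroring the tight-example construction promised later in the paper. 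Once the discharging bookkeeping is pinned down, the proof is essentially the union of the five local lemmas already proved plus Theorem~\ref{thm:allchainsintact}.
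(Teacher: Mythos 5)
There is a genuine gap: your proposal is missing the accounting device that actually drives the paper's bound, namely \emph{breakpoints}. In the paper one fixes spanning trees $S$ and $S'$ of the optimal network $N$ obtained by extending subdivisions of $T$ and $T'$; since $r(N)=k$, exactly $k$ edges of $N$ are missing from $S$ and exactly $k$ from $S'$, so the $3(k-1)$ sides of the underlying generator carry $2k$ breakpoints in total, each side having $0$, $1$ or $2$. A side is not a single common chain: its breakpoints cut the leaves attached to it into up to three common chains, which is why the old analysis (Lemma~\ref{lem:old-breakpoints}) allows $3$, $6$ or $9$ leaves per side according to the breakpoint count --- your description of the $15k-9$ bound as ``$3$ taxa per side'' misstates it, and correspondingly your target of sharpening to ``amortized $2$ taxa per side'' is not only different from the paper but false: in the tight examples of Section 5 most generator sides carry $4$ leaves (a $3$-chain plus a $1$-chain, or two $2$-chains, separated by a breakpoint), and no reduction applies to them. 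The correct local statement, Lemma~\ref{lem:new-breakpoints}, is that exhaustive reduction caps the leaves per side at $3$, $4$, $4$ for $0$, $1$, $2$ breakpoints; its proof is a case analysis on the breakpoint positions $i,i'$ along a heavily loaded side (the $(3,3)$/$(3,2)$-reductions kill one-breakpoint sides with $\geq 5$ leaves, and the $(*,3,*)$-, $(3,1,*)$-, $(2,1,2)$-reductions together with the subtree and chain reductions kill two-breakpoint sides with $\geq 5$ leaves). The theorem then follows by the purely local count $|X|\leq 4q+4\cdot 2(k-q)+3\bigl(3(k-1)-(2k-q)\bigr)=11k-9-q$, where $q$ is the number of two-breakpoint sides.

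Consequently the step you flag as the main obstacle --- a global discharging scheme in which length-$3$ sides are charged to adjacent sides or ``under-loaded'' generator vertices, with Theorem~\ref{thm:allchainsintact} invoked for chains on \emph{adjacent} sides --- is not needed and, as argued above, cannot succeed in the form you describe, because the configurations you would have to eliminate (sides with $4$ leaves and a breakpoint, sides with $3$ leaves and no breakpoint) are exactly the ones realized by the tight family. The place where $|K|=2$ is genuinely used is within a \emph{single} side (two $2$-chains separated by the breakpoints, in the correctness proof of the $(2,1,2)$-reduction), not across sides. Without the breakpoint bookkeeping there is no visible route in your proposal to the coefficient $11$, and the quantitative heart of the argument is deferred to an unspecified charging scheme; to repair the plan you should replace the amortization by the per-side breakpoint analysis of Lemma~\ref{lem:new-breakpoints} and the $2k$-breakpoint global count.
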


\noindent To establish this theorem, we analyze the maximum size of two exhaustively reduced phylogenetic trees with the help of \blueblue{an unrooted binary phylogenetic network $N$ that displays the two trees and the unrooted  generator that underlies $N$. Next, we define unrooted generators.} \\

Let $k$ be a positive integer.
For $k\geq 2$, a {\it $k$-generator} (or short {\it generator} when $k$ is clear from the context) is a connected cubic multigraph with edge set $E$ and vertex set $V$ such that $k=|E|-(|V|-1)$. 
%Furthermore, for $k=1$, we define the graph that consists of a single vertex $u$ and a loop edge $\{u,u\}$ to be the unique {\it $1$-generator}. 
The edges of a generator are  called its {\it sides}. Intuitively, given an unrooted binary phylogenetic network $N$ with $r(N)=k$,
%and no pendant subtree with at least two leaves, 
we can obtain a $k$-generator by, repeatedly, deleting all (labeled and unlabeled) leaves and suppressing any resulting degree-2 vertices. We say that \green{the generator obtained \fuchsia{in} this way} {\it underlies} $N$.  Now, let $G$ be a $k$-generator, let $\{u,v\}$ be a side of $G$, and let $Y$ be a set of leaves. The operation of subdividing $\{u,v\}$ with $|Y|$ new vertices and, for each such new vertex $w$, adding a new edge $\{w,\ell\}$, where $\ell\in Y$ and $Y$ bijectively labels the new leaves, is referred to as {\it attaching} $Y$  to  $\{u,v\}$. 
%Additionally, if $G$ is the 1-generator, then the degree-2 vertex $u$ is suppressed after attaching $Y$ to $\{u,u\}$. 
Lastly, if at least one new leaf is attached to each loop and to each pair of parallel edges in $G$, then the resulting graph is an unrooted binary phylogenetic network $N$ with $r(N)=k$. Note that $N$ has no pendant subtree with more than a single leaf. Hence, we have the following observation.
%\marginpar{SK. I got here, 23rd April}
% NOTED
%\marginpar{SL. Left out the stuff about 1-generators.}

\begin{observation}\label{ob:gen}
Let $N$ be an unrooted binary phylogenetic network that has no pendant subtree with at least two leaves, and let $G$ be a generator. 
Then $G$ underlies $N$ if and only if $N$ can be obtained from $G$ by attaching a (possibly empty) set of leaves to each side of $G$.
\end{observation}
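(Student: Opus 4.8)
\textbf{Proof proposal for Observation~\ref{ob:gen}.}

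The statement is a characterisation (an ``if and only if''), so the plan is to prove the two directions separately, and each direction is essentially an unwinding of the definitions of ``underlies'' and ``attaching'' given just above. First I would record the standing hypothesis that $N$ has no pendant subtree with at least two leaves; this is what forces the leaf-deletion process to behave predictably. For the ``if'' direction, suppose $N$ is obtained from a generator $G$ by attaching a (possibly empty) set of leaves to each side of $G$. I would argue that running the leaf-deletion/suppression process described before the observation recovers $G$: each attached leaf $\ell$ together with its subdividing vertex $w$ gets removed when we delete $\ell$ and suppress $w$, and since the leaves attached to a side lie along that side in a path, removing all of them and suppressing the degree-2 vertices simply re-merges the side back into a single edge $\{u,v\}$ of $G$. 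Because $G$ is cubic, once all attached leaves are gone no vertex of $G$ has degree $2$, so the process halts exactly at $G$; hence $G$ underlies $N$. The only mild subtlety to flag here is that $N$ might itself be (part of) a tree-like structure, but the definition of attaching requires at least one leaf on every loop and every parallel pair, which guarantees the result is a valid unrooted binary phylogenetic network, and this was already observed in the paragraph preceding the statement, so I can cite it.

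For the ``only if'' direction, suppose $G$ underlies $N$, i.e. $G$ is the multigraph obtained from $N$ by repeatedly deleting leaves and suppressing degree-2 vertices. I want to show $N$ can be reconstructed from $G$ by attaching leaves to its sides. The key point is to track \emph{where} in $N$ each deleted leaf sat. Since $N$ has no pendant subtree with at least two leaves, every leaf of $N$ is attached (via its parent) directly to a vertex lying on a path between two ``branching'' vertices of degree $\geq 3$; equivalently, after deleting all leaves and suppressing degree-2 vertices the surviving vertices are precisely the degree-$\geq 3$ vertices of $N$, and these all have degree exactly $3$ because $N$ is binary. Consequently the deletion process terminates in one ``round'' in the sense that the set of vertices of $G$ is exactly the set of degree-$3$ vertices of $N$, and each side $\{u,v\}$ of $G$ corresponds to a unique path in $N$ from $u$ to $v$ whose internal vertices are exactly the parents of the leaves that were deleted and suppressed along the way. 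Reading this correspondence backwards: starting from $G$, subdividing each side $\{u,v\}$ with those internal vertices and re-attaching the corresponding leaves reproduces $N$ exactly. I would also note the boundary case where $G$ has a loop or parallel edges incident to a vertex $u$ — here the hypothesis that $N$ has no pendant subtree with $\geq 2$ leaves is again what is needed, together with the fact that $N$ being a simple graph forces at least one leaf to have been attached along each loop/parallel pair (otherwise $N$ would contain the loop or the double edge), so the reconstruction is consistent.

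The main obstacle, such as it is, is bookkeeping rather than depth: one must be careful that the leaf-deletion-and-suppression operation, applied repeatedly, does not ``over-shrink'' $N$ past the degree-3 vertices, and that suppressing degree-2 vertices created by leaf deletions interacts correctly with multiple leaves attached to the same side. The cleanest way to handle this is to prove the auxiliary claim that, under the no-large-pendant-subtree hypothesis, the set of vertices of $V(N)$ that survive the entire process is exactly $\{v \in V(N) : \deg_N(v) = 3\}$, and that two such vertices are joined by a side of $G$ (with appropriate multiplicity) if and only if they are joined in $N$ by a path all of whose internal vertices have degree $2$ after deleting leaves — i.e., are parents of leaves. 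Once this claim is in hand, both directions follow by reading the bijection between sides of $G$ and these ``leaf-decorated paths'' of $N$ in the two directions, and I would keep the write-up at that level of detail rather than formalising the induction on the number of deletion rounds.
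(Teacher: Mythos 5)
Your overall plan is exactly the justification the paper has in mind: Observation~\ref{ob:gen} is stated without any explicit proof, as an immediate unwinding of the definitions of ``underlies'' and ``attaching'' given just before it, and your two directions (leaf deletion plus suppression collapses each decorated side back to a single edge of $G$; conversely, the no-large-pendant-subtree hypothesis guarantees the deletion process removes only the original leaves and their parents, so sides of $G$ correspond bijectively to paths in $N$ whose internal vertices are leaf parents) are the intended argument. Your handling of loops and parallel pairs via simplicity of $N$ is also correct.

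One statement in your write-up is wrong as formulated and should be repaired, since you propose to hang both directions on it: the vertices of $G$ are \emph{not} ``precisely the degree-$3$ vertices of $N$.'' The parent of a leaf has degree $3$ in $N$, but after its leaf is deleted it has degree $2$ and is suppressed, so it does not survive. The correct auxiliary claim is that the surviving vertices are exactly the degree-$3$ vertices of $N$ that are adjacent to no leaf (equivalently, the vertices still of degree $3$ after the single round of leaf deletions, which is the only round needed precisely because no vertex has two leaf neighbours under the pendant-subtree hypothesis). Your very next sentence, which says the internal vertices of each side-path are exactly the leaf parents, already presupposes this corrected version and contradicts the claim as literally stated, so this is a local slip rather than a structural flaw; with the claim restated as above, the rest of your argument goes through.
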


Now let $T$ and $T'$ be two subtree and chain reduced unrooted binary phylogenetic trees on $X$, and let $N$ be an unrooted binary phylogenetic network on $X$ that displays $T$ and $T'$.  \green{Let $S$ and $S'$ be spanning trees of $N$ obtained by greedily extending \fuchsia{a subdivision} of $T$ (respectively, $T'$) to become a spanning tree,
if it is not that already.}
%\marginpar{SK: Because the spanning tree might have unlabelled trees hanging from it. Like in SIDMA I don't think this hurts the analysis?}
%$S$ is a subdivision of $T$ and $S'$ is a subdivision of $T'$. 
Since $N$ displays $T$ and $T'$, $S$ and $S'$ exist. Furthermore, let $G$ be the generator that underlies $N$. Since $T$ and $T'$ are subtree and chain reduced, $N$ does not have a pendant subtree of size at least two. Hence, by Observation~\ref{ob:gen}, we can obtain $N$ from $G$ by attaching leaves to $G$. Let $s=\{u,w\}$ be a side of $G$. Let $Y=\{\ell_1,\ell_2,\ldots,\ell_m\}$ be the set of leaves that are  attached to $s$ in obtaining $N$ from $G$. Recall that $m \geq 0$. Then there exists a path $$u=v_0,v_1,v_2,\ldots,v_m,v_{m+1}=w$$ of vertices in $N$ such that, for each $i\in\{1,2,\ldots,m\}$, $v_i$ is the unique \fuchsia{parent} of $\ell_i$. We refer to this path as the {\it path associated with $s$} and denote it by $P_s$. Importantly, for a path $P_s$ in $N$ that is associated with a side $s$ of $G$, there is at most one edge in $P_s$ that is not contained in $S$, and there is at most one (not necessarily distinct) edge in $P_s$ that is not contained in $S'$. We  make this precise in the following definition and say that $s$ has {\it \green{$b$} breakpoints \fuchsia{relative} to $S$ and $S'$}, 
%\marginpar{SK. Switched $i$ to $b$ to avoid conflict with use of $i$ for breakpoint location}
where 
\begin{enumerate}
\item \green{$b=0$} if $S$ and $S'$ both contain all edges of $P_s$,
\item \green{$b=1$} if one element in $\{S,S'\}$ contains all edges of $P_s$ while the other element contains all but one edge of $P_s$, and
\item \green{$b=2$} if each of $S$ and $S'$ contains all but one edge of $P_s$.
\end{enumerate}
Since $S$ and $S'$ span $N$, note that $s$ cannot have more than 2 breakpoints \fuchsia{relative} to $S$ and $S'$. \\

In the language of this paper, Kelk and Linz~\cite{tightkernel} have established the following result.

\begin{lemma}\label{lem:old-breakpoints}
Let $N$ be an unrooted binary phylogenetic network on $X$ that displays two subtree and chain reduced unrooted binary phylogenetic trees $T$ and $T'$. Let $S$ (resp. $S'$) be a spanning tree of $N$ \green{obtained by extending a} subdivision of $T$ (resp. $T'$). Furthermore, let $G$ be the generator that underlies $N$, and let $s$ be a side of $G$. Suppose that $s$ has \green{$b$} breakpoints \fuchsia{relative} to $S$ and $S'$ for some $\fuchsia{b}\in\{0,1,2\}$. Then,
% SK: this sentence fragment seems superfluous
% Then, one of the following statements holds.
\begin{enumerate}[(i)]
\item \fuchsia{if} \green{$b=0$}, then $N$ can be obtained from $G$ by attaching at most 3 leaves to $s$;
\item if \green{$b=1$}, then $N$ can be obtained from $G$ by attaching at most 6 leaves to $s$; and
\item if \green{$b=2$}, then $N$ can be obtained from $G$ by attaching at most 9 leaves to $s$. 
\end{enumerate}
\end{lemma}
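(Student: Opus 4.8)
The plan is to show that, for a fixed side $s$ of $G$ with $b$ breakpoints relative to $S$ and $S'$, the set $Y$ of leaves attached to $s$ when obtaining $N$ from $G$ (see Observation~\ref{ob:gen}) splits into at most $b+1$ consecutive blocks, each the taxon set of a common chain of $T$ and $T'$. Since $T$ and $T'$ are chain reduced, a common chain has at most $3$ taxa, so $|Y|\leq 3(b+1)$, which is exactly the bounds $3,6,9$ claimed for $b=0,1,2$. To set up, write $P_s\colon u=v_0,v_1,\ldots,v_m,v_{m+1}=w$ for the path associated with $s$, with $\ell_i$ the leaf attached at $v_i$, so $Y=\{\ell_1,\ldots,\ell_m\}$. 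Since $N$ displays $T$ and $S$ contains a subdivision of $T$, fix a subtree $T^{*}\subseteq S$ of $N$ that is a subdivision of $T$, and similarly a subtree $R\subseteq S'$ that is a subdivision of $T'$. Each $\ell_i$ is a taxon, hence a leaf of $T^{*}$ and of $R$, and since $\ell_i$ has degree $1$ in $N$ its unique incident edge $\{v_i,\ell_i\}$ lies in both $T^{*}$ and $R$.

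The heart of the argument is the following claim: if $1\leq a\leq c\leq m$ and every path edge $\{v_i,v_{i+1}\}$ with $a\leq i\leq c-1$ lies in $S$, then $(\ell_a,\ell_{a+1},\ldots,\ell_c)$ is a chain of $T$; the analogue holds for $R$ and $T'$. I would prove it by a short cycle argument: for each such $i$, the edge $\{v_i,v_{i+1}\}$ is the unique $v_i$--$v_{i+1}$ path in the tree $S$, while the path between $\ell_i$ and $\ell_{i+1}$ in the tree $T^{*}\subseteq S$ necessarily leaves $\ell_i$ through $v_i$ and enters $\ell_{i+1}$ through $v_{i+1}$; hence its portion between $v_i$ and $v_{i+1}$ equals that unique path, so $\{v_i,v_{i+1}\}\in T^{*}$. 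Thus $T^{*}$ contains the sub-path $v_a v_{a+1}\cdots v_c$ together with all pendant edges $\{v_i,\ell_i\}$, $a\leq i\leq c$. Passing from $T^{*}$ to $T$ by suppressing degree-$2$ vertices, the parents of $\ell_a,\ldots,\ell_c$ in $T$ then form a walk, in which at most the first two and the last two parents coincide (exactly when $v_a$, resp.\ $v_c$, has degree $2$ in $T^{*}$, in which case the chain is pendant in $T$); since the definition of a chain permits this, $(\ell_a,\ldots,\ell_c)$ is a $(c-a+1)$-chain of $T$.

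It then remains to combine the claim over the three cases. By the definition of breakpoints, $S$ omits at most one edge of $P_s$, say $\{v_j,v_{j+1}\}$, and $S'$ at most one, say $\{v_{j'},v_{j'+1}\}$; moreover a breakpoint at a boundary edge ($j$ or $j'$ in $\{0,m\}$) leaves all of $\ell_1,\ldots,\ell_m$ on a common sub-path of $P_s$ in that spanning tree, so it only strengthens the conclusion and I may assume every breakpoint is internal. If $b=0$, applying the claim with $a=1$, $c=m$ to $S$ and to $S'$ shows that $(\ell_1,\ldots,\ell_m)$ is a common chain, so $m\leq 3$. If $b=1$, say with $S'$ having its breakpoint at $j'$, then $(\ell_1,\ldots,\ell_{j'})$ and $(\ell_{j'+1},\ldots,\ell_m)$ are chains of $T'$, and being contiguous sub-sequences of the chain $(\ell_1,\ldots,\ell_m)$ of $T$ they are chains of $T$ as well; so both are common chains of length $\leq 3$ and $m\leq 6$. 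If $b=2$, with internal breakpoints at $j\leq j'$, the claim gives chains $(\ell_1,\ldots,\ell_j)$ and $(\ell_{j+1},\ldots,\ell_m)$ in $T$ and chains $(\ell_1,\ldots,\ell_{j'})$ and $(\ell_{j'+1},\ldots,\ell_m)$ in $T'$; intersecting these two splits, the three blocks $(\ell_1,\ldots,\ell_j)$, $(\ell_{j+1},\ldots,\ell_{j'})$ and $(\ell_{j'+1},\ldots,\ell_m)$ are each common chains (a contiguous sub-sequence of a chain being again a chain), so each has length $\leq 3$ and $m\leq 9$ (with $m\leq 6$ when $j=j'$). Since $|Y|=m$, this is exactly the statement of the lemma.

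The main obstacle I anticipate is the structural claim: transporting the path/caterpillar structure of $N$ and its spanning trees down to the displayed trees $T$ and $T'$, and, inside that argument, carefully handling the degenerate situations --- chains that become pendant in $T$ or $T'$ (a sub-path endpoint having degree $2$ in the subdivision), breakpoints at boundary edges, and generator sides that are loops or pairs of parallel edges, where the endpoints $u$ and $w$ of $s$ need not be distinct. These cases are routine but fiddly; beyond them the proof is only bookkeeping of breakpoint positions.
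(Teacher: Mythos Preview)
The paper does not actually prove Lemma~\ref{lem:old-breakpoints}; it is quoted as a result already established in \cite{tightkernel} and is only \emph{used} here (in the proof of Lemma~\ref{lem:new-breakpoints}). That said, your argument is correct and is precisely the intended one: the breakpoints partition the leaves on a side into at most $b+1$ consecutive blocks, each of which is a common chain of $T$ and $T'$ and therefore has length at most~$3$ because the trees are chain reduced. This is exactly the mechanism that the present paper exploits (and tightens) in the proof of Lemma~\ref{lem:new-breakpoints}, where sentences such as ``If $e\ne\{v_3,v_4\}$, then $T$ and $T'$ have a common chain of length at least~4; a contradiction since $T$ and $T'$ are chain reduced'' make the same inference you formalize.

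Your structural claim --- that a maximal sub-path of $P_s$ lying in $S$ is automatically contained in the chosen subdivision $T^{*}\subseteq S$ of $T$ --- is correctly justified via uniqueness of paths in the spanning tree $S$, and the passage from $T^{*}$ to $T$ (with the possible degree-$2$ suppressions at the block ends yielding a pendant chain) matches the chain definition used in the paper. The degenerate cases you flag (boundary breakpoints, loops/parallel sides) are indeed routine and do not affect the bound. In short: nothing to fix; your write-up is a faithful reconstruction of the cited argument.
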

Since Lemma~\ref{lem:old-breakpoints} only considers the subtree and chain reduction, a natural question is whether or not the  five reductions presented in Section~\ref{sec:new-red}  improve the bounds on the number of leaves that  are attached to a side of a generator. We answer this question positively  in the next lemma.

\begin{lemma}\label{lem:new-breakpoints}
Let $N$ be an unrooted binary phylogenetic network on $X$ that displays two exhaustively reduced unrooted binary phylogenetic trees $T$ and $T'$. Let $S$ (resp. $S'$) be a spanning tree of $N$ \green{obtained by extending a} subdivision of $T$ (resp. $T'$). Furthermore, let $G$ be the generator that underlies $N$, and let $s=\{u,v\}$ be a side of $G$. Suppose that $s$ has \green{$b$} breakpoints \fuchsia{relative} to $S$ and $S'$ for some $\fuchsia{b}\in\{0,1,2\}$. Then,
% one of the following statements holds.
\begin{enumerate}[(i)]
\item \fuchsia{if} \green{$b=0$}, then $N$ can be obtained from $G$ by attaching at most 3 leaves to $s$;
\item if \green{$b=1$}, then $N$ can be obtained from $G$ by attaching at most 4 leaves to $s$; and
\item if \green{$b=2$}, then $N$ can be obtained from $G$ by attaching at most 4 leaves to $s$. 
\end{enumerate}
\end{lemma}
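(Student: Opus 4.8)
The plan is to prove Lemma~\ref{lem:new-breakpoints} by contradiction: assume that more leaves are attached to $s$ than the stated bound, and then exhibit one of the five new reductions (or the subtree/chain reduction) that could still be applied to $T$ and $T'$, contradicting the assumption that the trees are exhaustively reduced. The starting point is Lemma~\ref{lem:old-breakpoints}, which already gives bounds of $3$, $6$, $9$ for $b=0,1,2$. So case (i) is immediate and nothing needs to be done there; the work is in improving $6 \rightarrow 4$ for $b=1$ and $9 \rightarrow 4$ for $b=2$.

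First I would fix notation: write $P_s = u=v_0, v_1, \ldots, v_m, v_{m+1}=w$ for the path in $N$ associated with $s$, with $\ell_i$ the leaf hanging off $v_i$, so $Y=\{\ell_1,\ldots,\ell_m\}$ is a common chain of $T$ and $T'$ (this is essentially the observation underpinning Lemma~\ref{lem:old-breakpoints}: because $S$ and $S'$ are subdivisions of $T$ and $T'$ extended to spanning trees, the leaves attached along $s$, read in order, form a chain of both trees, possibly after accounting for the breakpoint edges). The key structural fact I would extract is: a breakpoint on $s$ relative to $S$ (resp. $S'$) corresponds to the single edge of $P_s$ missing from $S$ (resp. $S'$), and "cutting" $P_s$ at that edge splits the leaf sequence $Y$ into two sub-chains, each of which is pendant in $T$ (resp. $T'$) on the side where it sits. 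Conversely, if there is \emph{no} breakpoint relative to $S$, then the whole of $Y$ appears as an honest sub-path inside the subdivision of $T$, i.e. $Y$ together with whatever lies beyond $u$ and $w$ forms a long chain of $T$ — and since $T$ is chain reduced this already limits things. The heart of the matter is then a case analysis on how the (at most one) breakpoint edge relative to $S$ and the (at most one) breakpoint edge relative to $S'$ sit inside $P_s$, together with what the cherries at the ends of the resulting sub-chains look like in $T$ and in $T'$.

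For $b=1$: WLOG $S$ contains all of $P_s$ and $S'$ misses one edge $e'$ of $P_s$. Then $Y$ is a common chain of $T$ and $T'$; in $T'$ it is actually split by $e'$ into two pendant pieces. If $|Y| \geq 5$ the chain reduction applies, contradiction — so in fact $|Y|\leq 4$ is \emph{almost} forced by chain-reducedness alone, and the real content is ruling out $|Y|=5,6$ via the fact that in $T$ the chain $Y$ is un-split. I would argue that if $|Y|\in\{5,6\}$ then inside $Y$ one finds a common $3$-chain whose two end-cherries, in $T$ and $T'$ respectively, trigger either the $(*,3,*)$-, $(3,1,*)$-, or one of the $(3,3)/(3,2)$-reductions, depending on where the breakpoint $e'$ cuts; the different positions of $e'$ give the different reduction types, which is exactly why the reduction names encode which "slots" are cherries. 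For $b=2$: $S$ misses one edge $e$ and $S'$ misses one edge $e'$ (possibly $e=e'$), so $Y$ is split into (at most) two pendant pieces in $T$ and into (at most) two pendant pieces in $T'$. Now no long common chain need survive, so chain-reducedness gives nothing directly; instead I would show that if $|Y|\geq 5$ one of the pendant sub-pieces is long enough that, combined with its partner piece in the other tree, it sets up a $(2,1,2)$-reduction (two common $2$-chains pendant in opposite trees, separated by one extra taxon) or one of the $(3,\cdot)$-reductions — again according to the relative positions of $e$ and $e'$.

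The main obstacle I expect is the bookkeeping in the $b=2$ case: there are several sub-configurations depending on (a) whether $e=e'$ or $e\neq e'$, (b) the order of $e,e'$ along $P_s$, (c) how many leaves lie in each of the (up to three) segments $P_s$ is cut into, and (d) what the structures $P,Q,P',Q'$ hanging off the ends of $s$ contribute to making the end-vertices into cherries in $T$ versus $T'$. For each sub-configuration I must verify that, once $|Y|$ exceeds $4$, the relevant cherry conditions defining one of the five reductions are genuinely met — in particular that the "extra" taxon $x$ in the $(3,1,*)$ or $(2,1,2)$ case is available and not itself part of the chain, and that the $6$-chain / $5$-chain condition in $T'$ in the $(3,3)/(3,2)$ case holds. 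I would organize this as a short lemma-internal case split keyed to the five reduction names, pointing at Figure~\ref{fig:reductions} for each matching configuration, and note that the $b=1$ analysis is the degenerate case of the $b=2$ analysis with one of the breakpoint edges pushed off the end of $P_s$.
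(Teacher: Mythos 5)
Your overall strategy coincides with the paper's: assume too many leaves are attached to $s$, locate the breakpoint edge(s) along $P_s$, and exhibit a configuration to which one of the five new rules (or the subtree/chain reduction) still applies, contradicting exhaustive reducedness; case (i) is indeed immediate from Lemma~\ref{lem:old-breakpoints}. However, your sketch of the $b=1$ case contains a step that fails. When $S'$ misses the edge $e'$ of $P_s$, the set $Y$ is \emph{not} a common chain of $T$ and $T'$: in $T'$ the two pieces of $Y$ on either side of $e'$ are no longer consecutive, so ``if $|Y|\geq 5$ the chain reduction applies'' is false (if it were true, the bound $4$ would follow trivially for $b=2$ as well, and the old bounds $6$ and $9$ could never be attained). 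What chain-reducedness actually gives is that each \emph{piece} of $Y$ adjacent to the breakpoint has length at most $3$; this forces $e'$ into the middle of $P_s$ when $m\in\{5,6\}$, and the breakpoint then creates (after suppressing the now degree-2 parents) the two cherries of $T'$ that, together with the unbroken $5$- or $6$-chain in $T$, are exactly the $(3,2)$- resp.\ $(3,3)$-configuration. Your suggestion that the $(*,3,*)$- or $(3,1,*)$-reduction might fire here is also off: with no breakpoint relative to $S$, no cherry of $T$ contains any leaf of $Y$, and both of those rules require a cherry on the chain in each tree; moreover the two cherries needed for $(3,3)/(3,2)$ both live in $T'$, not ``in $T$ and $T'$ respectively''.

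The decisive part, case (iii), is left as ``bookkeeping'' without the structural constraints that make it work, and your predicted matching of configurations to rules is not the right one. Writing $e=\{v_i,v_{i+1}\}$ (missing from $S$) and $f=\{v_{i'},v_{i'+1}\}$ (missing from $S'$) with $i\leq i'$, the argument for $m\geq 5$ must establish: (a) $i'-i=1$, by excluding $i'-i\geq 4$ (chain reduction), $i'-i=3$ ($(*,3,*)$-reduction), and $i'-i\in\{0,2\}$ (subtree reduction --- a rule your $b=2$ plan never invokes, though two of the subcases need it); and (b) $i\leq 2$ and, symmetrically, $i'\geq m-2$, by excluding $i\geq 4$ (chain reduction) and $i=3$ ($(3,1,*)$-reduction with $\ell_4$ in the role of $x$). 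These three inequalities give $m\leq 5$, and $m=5$ forces $e=\{v_2,v_3\}$, $f=\{v_3,v_4\}$, which is precisely the $(2,1,2)$-configuration with $\ell_3$ as $x$. No $(3,3)$- or $(3,2)$-reduction is used when $b=2$. Until (a) and (b) are proved, the bound of $4$ in case (iii) --- the whole improvement from $9$ --- is not established, so as it stands the proposal has the right skeleton but a genuine gap where the actual argument should be.
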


\begin{proof}
By Lemma~\ref{lem:old-breakpoints}(i), (i) follows immediately. \\

To establish (ii), we show that neither 5 nor 6 leaves are attached to $s$ and note that, by Lemma~\ref{lem:old-breakpoints}(ii), no more than 6 leaves are attached to $s$. Without loss of generality, we may assume that $S$ contains all edges of $P_s$ and that $S'$ contains all but one edge of $P_s$. Let $e$ be the edge of $P_s$ that $S'$ does not contain. First, assume that 6 leaves are attached to $s$. Let $P_s=v_0,v_1,v_2,\ldots,v_{6},v_7$.  Recall that $u=v_0$ and $v=v_7$.  For each $i\in\{1,2,\ldots,6\}$, let $\ell_i$ be the leaf adjacent to $v_i$ in $N$. If $e\ne\{v_3,v_4\}$, then $T$ and $T'$ have  a common chain of length at least 4; a contradiction since $T$ and $T'$ are chain reduced. On the other hand, if $e=\{v_3,v_4\}$, then $T$ and $T'$ have two common 3-chains $(\ell_1,\ell_2,\ell_3)$ and $(\ell_4,\ell_5,\ell_6)$ such that $(\ell_1,\ell_2,\ldots,\ell_6)$ is a chain of $T$, and  both of $\{\ell_2,\ell_3\}$ and $\{\ell_4,\ell_5\}$ are cherries of $T'$. Hence, $T$ and $T'$ can be further reduced under a $(3,3)$-reduction; again a contradiction. Second, assume that 5 leaves are attached to $s$. Let $P_s=v_0,v_1,v_2,\ldots,v_{5},v_6$. Since $T$ and $T'$ are chain reduced, we use an argument analogous to the previous 6-leaf case to show that $e\in\{\{v_2,v_3\},\{v_3,v_4\}\}$. If $e=\{v_2,v_3\}$, then $T$ and $T'$ have common chains $(\ell_1,\ell_2)$ and $(\ell_3,\ell_4,\ell_5)$ and \green{\fuchsia{$T$} has a chain $(\ell_1, \ell_2, \ell_3,\ell_4,\ell_5)$}, where $\ell_i$ is again the leaf adjacent to $v_i$ in $N$ for each $i\in\{1,2,\ldots,5\}$. Furthermore, \fuchsia{$T'$} has cherries \green{$\{\ell_1,\ell_2\}$ and $\{\ell_3,\ell_4\}$}. It now follows that  $T$ and $T'$ can be further reduced under a $(3,2)$-reduction; a contradiction to the fact that both trees are exhaustively reduced. If $e=\{v_3,v_4\}$, we use an \green{symmetric} argument to get the same contradiction; thereby establishing (ii). \\

We complete the proof by showing that (iii) holds. Throughout this part of the proof, suppose that at least 5 leaves get attached to $s$ in the process of obtaining $N$ from $G$ since, otherwise, (iii) follows without proof. 
%By Lemma ~\ref{lem:old-breakpoints}(iii), we already have that no more than 9 leaves are attached to $s$. 
Again, consider the path $$P_s=v_0,v_1,v_2,\ldots,v_{m+1}$$ that is associated with $s$ in $N$. Recall that $m$ is the number of leaves that are attached to $s$.	Hence $m\geq 5$. Let $\ell_i$ be the leaf adjacent to $v_i$ in $N$ for each $i\in\{1,2,\ldots,m\}$. Furthermore, let $e=\{v_i,v_j\}$  be the edge of $P_s$ that is not contained in $S$, and let $f=\{v_{i'},v_{j'}\}$ be the edge of $P_s$ that is not contained in $S'$. Without loss of generality, we may assume that
%$i<j$, $i'<j'$,
\green{$i=j-1$, $i' = j'-1$,}
and $i\leq i'$. Moreover, note that \green{if $i < i'$ then} $C=(\ell_{i+1},\ell_{i+2},\ldots,\ell_{i'})$ is an $(i'-i)$-chain that is common to $T$ and $T'$. Considering four cases and deriving a contradiction for each, we next show that $i'-i=1$. 
\begin{description}
\item [Case 1.] If $i'-i>3$, then $C$ has length at least 4 and $T$ and $T'$ are not chain reduced. 
\item [Case 2.] If $i'-i=3$, then $C=(\ell_{i+1},\ell_{i+2},\ell_{i+3})$ is a maximal common 3-chain of $T$ and $T'$. Moreover, as $\{v_{i+1},v_{i+2}\}$ is a cherry of $T$ and $\{v_{i+2},v_{i+3}\}$ is a cherry of $T'$, it  follows that a $(*,3,*)$-reduction can be applied to $T$ and $T'$; thereby contradicting that $T$ and $T'$ are exhaustively reduced.
\item [Case 3.] If $i'-i=2$, then $C=(\ell_{i+1},\ell_{i+2})$ is a maximal common 2-chain of $T$ and $T'$. In particular \fuchsia{$C$} is the leaf set of a pendant subtree that is common to $T$ and $T'$ that can be further reduced under the subtree reduction.
\item [Case 4.] If $i'-i=0$, then $\{\ell_1,\ell_2,\ldots,\ell_i\}$ and $\{\ell_j,\ell_{j+1},\ldots,\ell_{m-1}\}$ are the leaf sets of two pendant subtrees that are common to $T$ and $T'$. Since $m\geq 5$ one of these subtrees has size at least two and, so, $T$ and $T'$ can be further reduced under the subtree reduction.
\end{description}
All four cases contradict the fact that $T$ and $T'$ are exhaustively reduced. Thus, we may assume for the remainder of the proof that, if $m\geq 5$, then  $i'-i=1$.\\

We next establish a maximum for $i$ and  minimum for $i'$. Clearly, if $i>3$, then $(\ell_1,\ell_2,\ldots,\ell_i)$ is a chain of length at least 4 that is common to $T$ and $T'$ that can be reduced  by applying a chain reduction. Moreover, if $i=3$, first recall that $i'=i+1$. It then follows that $(\ell_1,\ell_2,\ell_3)$ is a chain that is common to $T$ and $T'$, $\{\ell_2,\ell_3\}$ is a cherry of $T$, and $\{\ell_3,\ell_4\}$ is a cherry of $T'$. Hence, $T$ and $T'$ can be further reduced by applying a $(3,1,*)$-reduction, where $\ell_4$ takes on the role of $x$ in the definition of this reduction. Hence $i\leq 2$. By symmetry and applying an analogous argument, we derive that $i'\geq m-2$. In summary, under the assumption that $m\geq 5$, we have established the following three restrictions on the indices $i$ and $i'$:
$$i\leq 2;$$
$$i'= i+1, \text{ and}$$
$$i'\geq m-2.$$
Taken all three together, it follows that $m\leq 5$. So suppose that $m=5$. Then, \fuchsia{by the aforementioned three restrictions}, this implies that $e=\{v_2,v_3\}$ and $f=\{v_3,v_4\}$. Furthermore, $(\ell_1,\ell_2)$ and $(\ell_4,\ell_5)$ are two  2-chains that are common to $T$ and $T'$ such that $T$ has cherries $\{\ell_1,\ell_2\}$ and $\{\ell_3,\ell_4\}$, and $T'$ has cherries $\{\ell_2,\ell_3\}$ and $\{\ell_4,\ell_5\}$. With $\ell_3$ taking on the role of $x$ in the definition of a $(2,1,2)$-reduction, it now follows that $T$ and $T'$ can be further reduced under this reduction.  This contradicts our initial assumption that $m\geq 5$; thereby establishing (iii).
\qed
\end{proof}

\green{\fuchsia{We can now} clarify the rather cryptic names of the new
reduction rules. From the proof of Lemma \ref{lem:new-breakpoints} we can see that a side $s$ that has \fuchsia{$2$} breakpoints, indexed by $i$ and $i'$ respectively (where $i < i'$), induces three common chains of length $i$, $i'-i$ and $m-i'$. We can summarize these three lengths in a vector $(i, i'-i, m-i')$. \fuchsia{Then the $(*,3,*)$-reduction} can be applied
when $i'-i = 3$, irrespective of the values of $i$ and $m-i'$, and we denote this indifference using wildcard symbols. The same idea applies to the \fuchsia{$(3,1,*)$- and the $(2,1,2)$-reduction}. For sides with a single breakpoint at position $i$, the vector of common chain lengths induced is given by $(i, m-i)$. \fuchsia{Then essentially, the $(3,3)$- and the $(3,2)$-reduction} capture the situation when a 6-chain (resp. 5-chain) in $T'$ is split into two shorter chains in $T$ by a breakpoint at $i=3$.}
\\

We are now in a position to establish Theorem~\ref{t:11kernel}.\\

\noindent {\it Proof of Theorem~\ref{t:11kernel}.}
Let $N$ be an unrooted binary phylogenetic network on $X$ that displays $T$ and $T'$ such that $$r(N)=h^u(T,T')=d_\TBR(T,T')=k\geq 2,$$ where the second equality follows from Theorem~\ref{t:tbr-equiv}. Let $S$ and $S'$ be spanning trees of $N$ that are \green{obtained by extending} subdivisions of $T$ and $T'$, respectively. Furthermore, let $G$ be the generator that underlies $N$. To establish the theorem, we use Lemma~\ref{lem:new-breakpoints} to bound from above the number of leaves that can collectively be attached to $G$ over all of its sides. The following approach is similar to the one used in~\cite[Lemma 3]{tightkernel}. 
%\marginpar{SL. Do we want to say this?}\marginpar{SK. I think so, I tried to make it a bit more self-contained.} 
\green{By ~\cite[\fuchsia{Lemma 1}]{tightkernel}, $G$ has $3(k-1)$ sides. \fuchsia{Furthermore} $N$ contains exactly $k$ edges that are not contained in $S$, and exactly $k$ edges that are not contained in $S'$. Each of these edges induces a breakpoint on a corresponding side of $G$, so each side of $G$ can have 0, 1 or 2 breakpoints and there are $2k$ breakpoints in total. Let $q$ be the number of sides in $G$ that have two breakpoints relative to $S$ and $S'$. Noting that $0\leq q\leq k$, it follows that there are $2(k-q)$ sides in $G$ whose number of breakpoints is one relative to $S$ and $S'$.} Hence, there are $3(k-1)-(2k-q)$ sides in $G$ that have zero breakpoints relative to $S$ and $S'$. Since $T$ and $T'$ are exhaustively reduced, we now apply Lemma~\ref{lem:new-breakpoints} to derive the following inequality: $$|X|\leq 4q+4(2(k-q))+3(3(k-1)-(2k-q))=-q+11k-9.$$ Clearly, $-q+11k-9$ is maximum for $q=0$ and, so, we have $$|X|\leq -q+11k-9\leq 11k-9=11d_\TBR(T,T')-9$$ which establishes the theorem.
\qed

\mbox{}\\
%\begin{theorem}
%Assuming the whole ensemble of reduction rules have been applied to exhaustion, and $k \geq 2$, the kernel has size at most $11k-9$.
%\end{theorem}

We finish the section with an additional kernel result that establishes an even smaller kernel for particular trees.\\

\begin{corollary}
Let $T$ and $T'$ be two unrooted binary phylogenetic trees on $X$. If $d_\TBR(T,T')\geq 2$,  $T$ and $T'$ are subtree reduced and do not have any common $n$-chain with $n\geq 2$, then  $|X|\leq 5d_\TBR(T,T') - 3$.
\end{corollary}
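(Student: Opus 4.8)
The plan is to re-run the generator-counting argument used to prove Theorem~\ref{t:11kernel}, but to replace the per-side leaf bounds of Lemma~\ref{lem:new-breakpoints} with the much stronger bounds that the hypothesis ``no common $n$-chain with $n\ge 2$'' makes available. First I would set up exactly the same objects as in the proof of Theorem~\ref{t:11kernel}: an unrooted binary phylogenetic network $N$ on $X$ that displays $T$ and $T'$ with $r(N)=h^u(T,T')=d_\TBR(T,T')=k\ge 2$ (using Theorem~\ref{t:tbr-equiv}), spanning trees $S$ and $S'$ obtained by extending subdivisions of $T$ and $T'$ respectively, and the generator $G$ underlying $N$. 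Since $T$ and $T'$ are subtree reduced, $N$ has no pendant subtree with at least two leaves, so by Observation~\ref{ob:gen} the network $N$ is obtained from $G$ by attaching leaves to its sides; as in the proof of Theorem~\ref{t:11kernel}, $G$ has $3(k-1)$ sides and there are $2k$ breakpoints in total across all sides, each side carrying $0$, $1$ or $2$ of them.

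The next step is to establish the sharpened per-side bounds. Fix a side $s$, let $P_s=v_0,v_1,\ldots,v_{m+1}$ be its associated path, let $\ell_i$ be the leaf adjacent to $v_i$, and suppose $s$ has $b$ breakpoints. The key point is the chain bookkeeping already carried out inside the proof of Lemma~\ref{lem:new-breakpoints}: if $b=0$ then $(\ell_1,\ldots,\ell_m)$ is a common $m$-chain of $T$ and $T'$, which forces $m\le 1$; if $b=1$, with the unique breakpoint at index $i$, then $(\ell_1,\ldots,\ell_i)$ and $(\ell_{i+1},\ldots,\ell_m)$ are common chains of $T$ and $T'$, hence $i\le 1$ and $m-i\le 1$, so $m\le 2$; and if $b=2$, with breakpoints at indices $i\le i'$, then $(\ell_1,\ldots,\ell_i)$, $(\ell_{i+1},\ldots,\ell_{i'})$ and $(\ell_{i'+1},\ldots,\ell_m)$ are common chains (the middle one empty when $i=i'$), each therefore of length at most $1$, so $m\le 3$. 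These are precisely the conclusions of the relevant cases in the proof of Lemma~\ref{lem:new-breakpoints}, specialised to the situation where no common chain has length more than $1$, so they can be invoked rather than reproved.

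Finally I would perform the same global count as in the proof of Theorem~\ref{t:11kernel}. Let $q$ be the number of sides of $G$ with two breakpoints; then there are $2(k-q)$ sides with exactly one breakpoint and $3(k-1)-(2k-q)=k+q-3$ sides with no breakpoint, and summing the per-side bounds above over all sides gives
$$|X|\ \le\ 3q+2\cdot 2(k-q)+(k+q-3)\ =\ 5k-3\ =\ 5\,d_\TBR(T,T')-3,$$
with the right-hand side independent of $q$, which is what the corollary asserts.

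I do not expect any genuine obstacle here: the only substantive ingredient beyond routine arithmetic is the claim that, in each breakpoint configuration, the indicated leaf blocks really are common chains of $T$ and $T'$ of the stated lengths, and this is exactly the reasoning already used in the proof of Lemma~\ref{lem:new-breakpoints} (where, for instance, the absence of a common $4$-chain is what rules out certain breakpoint positions). In effect the corollary is just that lemma's analysis read off in the regime ``every common chain has length at most $1$'', together with the observation that the coefficient of $q$ in the final sum vanishes.
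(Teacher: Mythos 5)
Your proposal is correct and follows essentially the same route as the paper's own proof: the identical network/generator/spanning-tree setup, the per-side bounds of $1$, $2$, $3$ leaves for sides with $0$, $1$, $2$ breakpoints (obtained exactly as you say, by reading the chain bookkeeping of Lemma~\ref{lem:new-breakpoints} under the hypothesis that no common chain has length at least $2$), and the same $3(k-1)$-sides, $2k$-breakpoints count giving $3q+4(k-q)+(k+q-3)=5k-3$. The paper merely states these per-side bounds without spelling out the chain argument, so your write-up is a slightly more explicit version of the same proof.
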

\begin{proof}
As previously, let $N$ be an unrooted binary phylogenetic network on $X$ that displays $T$ and $T'$ and has \fuchsia{the property that} $r(N)=h^u(T,T')$. Let $S$ and $S'$ be spanning trees of $N$ \green{obtained by extending} subdivisions of $T$ and $T'$, respectively, and let $G$ be the generator underlying $N$. For a side $s$ of $G$, it follows that we can attach at most 3 leaves to $s$ if $s$ has two breakpoints relative to $S$ and $S'$. Similarly, we can attach at most 2 leaves (resp. 1 leaf) to $s$ if $s$ has \green{one} breakpoint (resp. \green{zero} breakpoints) relative to $S$ and $S'$. Interestingly, and in comparison with the proof of Lemma~\ref{lem:new-breakpoints}, these upper bounds can be easily established using arguments that only rely on the (ordinary) subtree and chain reduction, but make no use of the five reductions presented in Section~\ref{sec:new-red}. Now, using the same counting argument as in the proof of Theorem~\ref{t:11kernel}, we have $$|X|\leq 5k-3=5d_\TBR(T,T')-3.$$\qed
 \end{proof}

\begin{figure}[h!]
\center
\scalebox{1}{\input{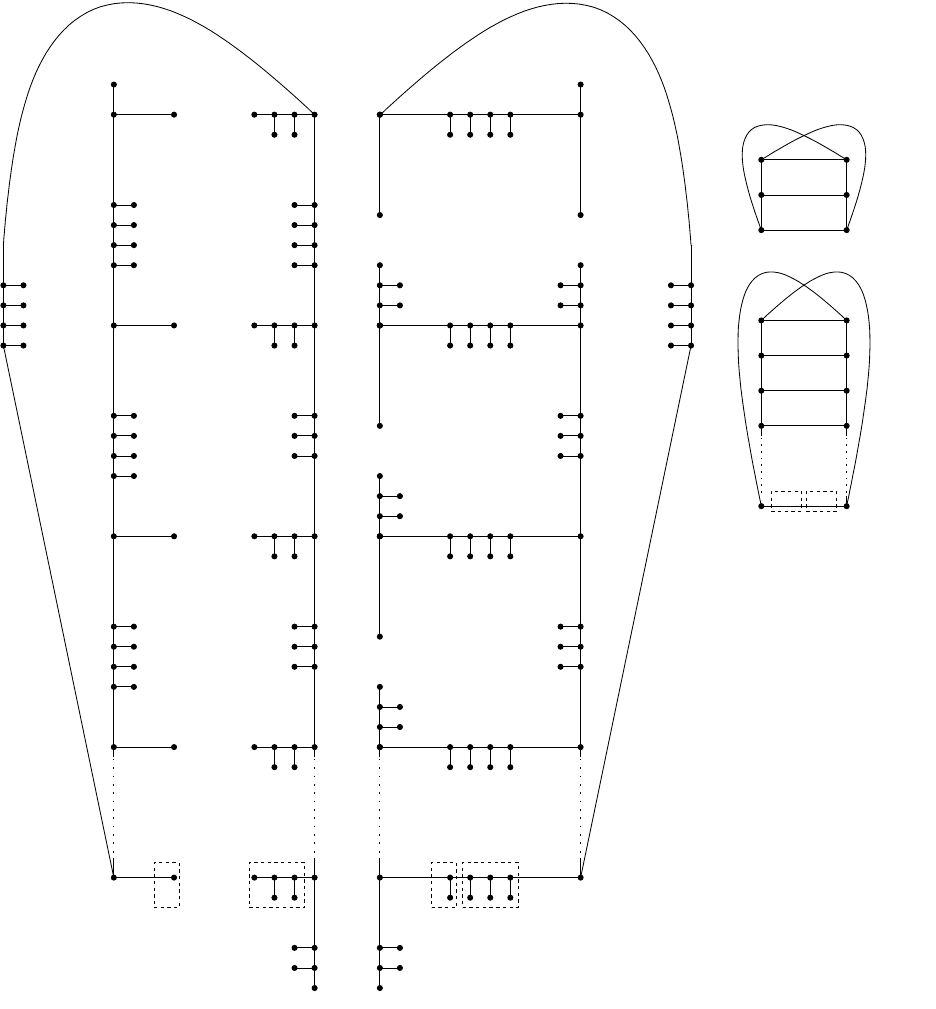_t}}
\caption{Two exhaustively reduced unrooted binary phylogenetic trees $T_k$ and $T'_k$ as well as the generator $G_k$ (and $G_4$) that provide a family of trees to show that the kernel presented in Theorem~\ref{t:11kernel} is tight for each $k=d_\TBR(T_k,T_k')\geq 4$. Blocks $A$ and $B$ indicate a (common) 1-chain and 3-chain, respectively. For details see \green{the main} text and~\cite[Section 4]{tightkernel}. }
\label{fig:tight}
\end{figure}

\section{Tightness of the kernel under the new reductions}
In this section, we show that, for two exhaustively reduced trees, the kernel result presented in Theorem~\ref{t:11kernel} is tight. For each $k\geq 4$, we do this by providing two \green{exhaustively reduced} unrooted binary phylogenetic trees $T_k$ and $T_k'$ whose leaf sets have size $11k-9$, such that  \green{$d_\TBR(T_k,T_k')=k$}. To illustrate, $T_k$ and $T_k'$ are shown in Figure~\ref{fig:tight}. It is straightforward to check that $T_k$ and $T_k'$ are exhaustively reduced. While we do not go into detail about justifying that $T_k$ and $T_k'$ indeed provide a tight example, i.e. $d_\TBR(T_k,T_k')=k$, we point the interested reader to~\cite[Section 4]{tightkernel}, where a very similar family of constructions is given to show that the kernel result presented in~\cite{tightkernel} is tight for phylogenetic trees that are subtree and chain reduced, and do not contain any common so-called cluster. The approach taken there, which uses unrooted generators to argue that 
$d_\TBR(T_k,T_k') \leq k$ and \emph{maximum parsimony distance} \cite{fischer2014} to prove $d_\TBR(T_k,T_k') \geq k$, can be easily adapted to establish the following proposition from which tightness of the kernel presented in Theorem~\ref{t:11kernel} immediately follows\footnote{\green{In fact, \fuchsia{up to relabeling of the leaves} the trees shown here are obtained by repeatedly applying the $(3,3)$-reduction to the trees shown
in~\cite[Figure 2]{tightkernel}, whose TBR distance is there proven to be exactly $k$. Given that the $(3,3)$-reduction is TBR-preserving, the claim follows.}
}.
%\marginpar{SK. I believe that they are exhaustively reduced. I don't have the energy to check cluster-reduced. It's fine with me just be silent about that!)}

\begin{proposition}
For $k\geq 4$, let $T_k$ and $T_k'$, be the two exhaustively reduced unrooted binary phylogenetic trees on $X$ that are shown in Figure~\ref{fig:tight}. Then $d_\TBR(T_k,T_k')=k$.
\end{proposition}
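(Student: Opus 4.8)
The plan is to avoid re-deriving a lower bound for $d_\TBR$ from scratch and instead to build on the tightness construction of \cite{tightkernel}. In \cite[Figure 2, Section 4]{tightkernel} a family of subtree- and chain-reduced tree pairs $(\hat T_k,\hat T_k')$ on $15k-9$ taxa is exhibited, together with a proof that $d_\TBR(\hat T_k,\hat T_k')=k$ for all $k\geq 4$ (the upper bound via an explicit sequence of $k$ TBR moves, the lower bound via an agreement-forest/generator argument showing that every agreement forest for the pair has at least $k+1$ components). First I would recall this pair verbatim.

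Next I would identify inside $(\hat T_k,\hat T_k')$ the sides of its underlying generator that carry a common $6$-chain realised with a single breakpoint at index $3$: on such a side one of the two trees contains the whole common $6$-chain $(a,b,c,x,y,z)$, while the other contains the two common $3$-chains $(a,b,c)$ and $(x,y,z)$ with $\{b,c\}$ and $\{x,y\}$ as cherries --- precisely the hypothesis of a $(3,3)$-reduction. The configurations sitting on distinct sides are taxa-disjoint, so the corresponding $(3,3)$-reductions can be carried out one after another without any of them destroying the applicability of the others, and each deletes exactly two taxa. Performing all of them drops the number of taxa from $15k-9$ to $11k-9$. By Lemma~\ref{l:reduc4} every $(3,3)$-reduction is TBR-preserving, so the resulting pair $(U_k,U_k')$ satisfies $d_\TBR(U_k,U_k')=d_\TBR(\hat T_k,\hat T_k')=k$.

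Finally I would verify, by directly comparing the two figures, that $(U_k,U_k')$ coincides up to a relabelling of the leaves with the pair $(T_k,T_k')$ of Figure~\ref{fig:tight}; since $d_\TBR$ is invariant under relabelling, this gives $d_\TBR(T_k,T_k')=k$. (Exhaustive reducedness of $T_k$ and $T_k'$, asserted in the surrounding text, is then a separate finite check: one inspects every leaf, every cherry, and every common $2$- and $3$-chain of the two trees and confirms that no subtree, chain, $(*,3,*)$-, $(3,1,*)$-, $(2,1,2)$-, $(3,3)$- or $(3,2)$-reduction applies.)

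The main obstacle is the combinatorial bookkeeping in the middle step: making precise that the $(3,3)$-reductions are mutually independent, and that performing them exactly exhausts the relevant $6$-chains, so that the count lands on $11k-9$ rather than something smaller and no new reduction becomes available along the way. If one wanted a fully self-contained proof instead, the genuinely hard part would be re-running the \cite{tightkernel} lower-bound argument directly on $T_k,T_k'$ --- building $k$ explicit TBR moves for the upper bound and arguing, via the generator structure together with Theorem~\ref{thm:allchainsintact}, that any agreement forest is forced to have at least $k+1$ components --- but in the presence of Lemma~\ref{l:reduc4} the reduction route above is considerably shorter.
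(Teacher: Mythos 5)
Your proposal is correct and follows essentially the same route as the paper: the authors likewise justify the proposition by observing (in a footnote) that $T_k$ and $T_k'$ are, up to relabelling, obtained from the tight family of \cite[Figure 2]{tightkernel}, whose TBR distance is proven there to be exactly $k$, by repeatedly applying the TBR-preserving $(3,3)$-reduction of Lemma~\ref{l:reduc4}. Your extra remarks on the taxa-disjointness of the reduced configurations and the separate finite check of exhaustive reducedness simply make explicit what the paper leaves as ``straightforward to check.''
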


%\newrules{Can we find a tight kernel? Is it worth adding it? It depends on which journal we are targetting and how compact we want the article to be.}

%\section{Miscellaneous observations}
%
%\begin{observation}
%If $T$ and $T'$ have no common subtrees and no common 2-chains, and $k \geq 2$, then they have at most $5k - 3$ taxa.
%\end{observation}
%\begin{proof}
%Sides with 2,1,0 breakpoints can have at most 3,2,1 taxa respectively, if they have no common 2-chains, so the best you can do is $2*2k + 1(k-3) = 5k-3$. Note that this insight does not need the new reduction rules. \qed
% \end{proof}

\section{Discussion and future work}

%\newrules{We should think carefully how much of this to include, because it gives away many of our secrets about potential future work...}
%\marginpar{SK. Notation here needs updating}
\green{To further lower the $11k-9$ bound using the approach described in this article requires reduction rules to prohibit generator sides from having 4 leaves (and 1 or 2 breakpoints), or 3 leaves (and 0 breakpoints). However in such situations it is neither clear how to reduce the TBR distance by 1, or reduce the number of taxa without altering the TBR distance. Hence, new techniques are required which do not just look ``locally'' at individual sides of the generator, but at the way multiple sides of the generator interact. We hope to return to this issue in future work.  Interestingly, although it is not yet clear how to eliminate these cases in the context of kernelization, the analysis in our paper does convey additional structural information.
For example, the argument behind \fuchsia{the $(3,3)$- and $(3,2)$-reduction} directly identifies an edge, in one of the trees, that can safely be deleted if we wish to progressively transform that tree into a maximum agreement forest. These edges can sometimes still be identified even in situations when our new reduction rules do not apply.
Such insights, together with Theorem \ref{thm:allchainsintact}, can potentially be used by FPT branching algorithms that compute the TBR distance by iteratively deleting edges to obtain agreement forests (see e.g. \cite{chen2015parameterized}).
%Relatedly, it would be particularly interesting to extend the unrooted generator approach to the slightly more general situation $T, F'$ where $F'$ is a forest obtained from $T'$ obtained by a set of edge deletion. This abstraction is at the heart of FPT branching algorithms for computing TBR distance.
Could the unrooted generator approach, coupled with the reduction rules described in this article, be used to reduce the branching factor of such algorithms?}

\appendix

\begin{figure}[t]
\center
\scalebox{1.2}{\input{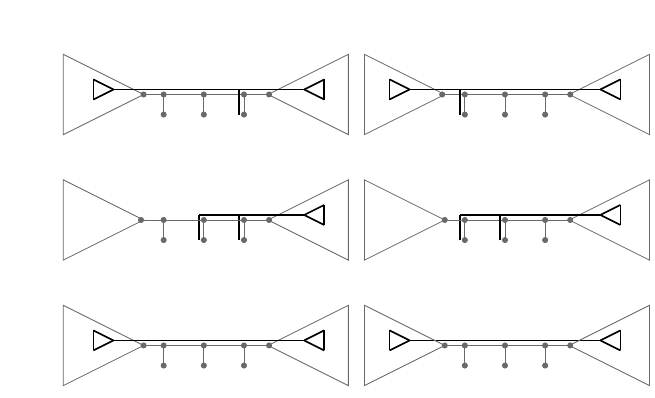_t}}
\caption{\fuchsia{Two unrooted binary phylogenetic trees $T$ and $T'$ on $X$ that have a common 3-chain $C=(a,b,c)$ that is not pendant in $T$ or $T'$. The leaf sets of the subtrees indicated by the left and right solid \redred{grey} triangle of $T$ (resp. $T'$) are $L_T(C)$ and $R_T(C)$ (resp. $L_{T'}(C)$ and $R_{T'}(C)$). (i) An inside-outside component $B$ with respect to $C$ that straddles $C$ in $T$ and $T'$, and with $c\in B$. (ii) An inside-outside component $B'$ with respect to $C$ that does not straddle $C$ in $T$ or $T'$, and with $b,c\in B'$. (iii) A bypass component $B''$ in $T$ and $T'$ with respect to $C$. The components $B$, $B'$, and $B''$ are indicated by their embeddings in $T$ and $T'$ (\blueblue{thick black lines}). 
%\redred{Note that...}
%I think it's better to leave the line below out, and let the reader study the consequences of Observation \ref{obs:useful} for themselves
%
%Note that no agreement forest for $T$ and $T'$ contains two elements in $\{B,B',B''\}$.
 }}
\label{fig:chain-prooft}
\end{figure}
%\marginpar{SK. I removed the last line of the caption of figure \ref{fig:chain-prooft}, see the LaTeX}

\section{Proof of Theorem \ref{thm:allchainsintact}}
%\marginpar{SK. At few figures would be nice, perhaps only enough to communicate the general intuition. A lot of work :(}
% DONE
%\marginpar{SK. To do: change F to S to ensure consistency with theorem}
\begin{proof}
Let $F$ be an arbitrary maximum agreement forest of $T$ and $T'$. Let $C \in K$ be a chain as described in the statement of the theorem (i.e. it has length at length 3, or it has length 2 and it is pendant in at least one of $T$ and $T'$.) For shorthand we call these \emph{eligible} chains. Suppose that $C$ is split in $F$. We will show how to transform $F$ into a new agreement forest $F'$, without increasing the number of components, such that $C$ is preserved in $F'$ and such that all eligible chains that were preserved in $F$ are also preserved in $F'$. Iterating this process will eventually bring us to a maximum agreement forest $F'$ with the desired property, and the proof will be complete. 
%We distinguish several cases.
%Some of these cases are redundant when combined with other reductions but are included to make the
%proof as general as possible. 
It is helpful to recall that
%, because an agreement forest is a partition of $X$,
all the chains in $K$ are mutually taxa disjoint, and thus (by Observation \ref{obs:disjoint}) their embeddings are mutually vertex disjoint in both $T$ and $T'$.  \\

Let $J = \{B \in F : C \cap B \neq \emptyset \}$.
%\marginpar{\footnotesize{SL. Changed $|$ to $:$ to avoid confusion with the restriction operator.}}
We have assumed that $C$ is split, so $|J| \geq 2$. If $B \in J$ and, additionally, $B \cap (X \setminus C) \neq \emptyset$, we call $B$ an \emph{inside-outside component with respect to $C$}. There can be at most 2 such components because a chain connects to the surrounding tree in (at most) 2 places.
%In particular, if $J$ contains distinct inside-outside components $B_1, B_2, B_3$ then their embeddings in $T$ (and $T'$) cannot be mutually disjoint, because all embeddings need to exit the chain $C$ (with which they intersect), but only have at most two grafting points where this can happen.
If $C$ is not pendant in $T$, then deleting $C$ from $T$ naturally partitions $X \setminus C$ into two disjoint non-empty
sets $L_T(C)$ and $R_T(C)$. Informally these are the taxa in $T$ that are to the ``left'' and ``right'' of $C$. For the purpose of this proof it does not matter which side we designate as left and right. \newblue{If $C$ is not pendant in $T$, then} we say that a component $B \in F$ \emph{straddles $C$ in $T$} if $L_T(C) \cap B \neq \emptyset$ and $R_T(C) \cap B \neq \emptyset$. \newblue{The straddling relation only applies to non-pendant chains: if $C$ is pendant in $T$ then, by definition, it is not possible
for a component of any agreement forest to straddle it in $T$.}\\
%\marginpar{SK. I changed `span' to `straddle' everywhere to clarify the idea.}

We say that a component  $B \in F$ is a \emph{bypass component in $T$ with respect to $C$}, if $B \not \in J$ (i.e. $B \cap C = \emptyset$) and $B$ straddles $C$ in $T$. Informally, $B$ passes ``through'' $C$ in $T$ without including any of its taxa. If in a given context it does not matter whether $B$ is a bypass component in $T$ and/or $T'$, we simply say that $B$ is a bypass component with respect to $C$. \newblue{Figures \ref{fig:chain-prooft} and \ref{fig:chain-prooft2} illustrate a number of these concepts.}  Note that, in the main proof below, we will need to consider the possibility that $B$ (\fuchsia{resp.} $B'$) is a bypass component in $T$ (\fuchsia{resp.} $T'$) with respect to $C$, but that $B \neq B'$, or that a component $B$ is a bypass component in $T$ \emph{and} $T'$ with respect to $C$.   
 The \fuchsia{following observations} will
also
 be useful, for which we omit proofs.

\begin{observation}
\label{obs:useful}

Let $C$, $F$ and $J$ be as defined above.
\begin{enumerate}
\item[(a)] $F$ can contain at most two bypass components with respect to $C$ i.e. one per tree. If $B \in F$ is a bypass component in $T$ \underline{and} $T'$ with respect to $C$, then $B$ is the only bypass component in \fuchsia{$F$} with respect to $C$.
\item[(b)] If $F$ contains at least one bypass component with respect to $C$, then $C$ is atomized in $F$.
\item[(c)] If $F$ contains at least one inside-outside component with respect to $C$, it cannot contain any bypass components with respect to $C$.
\item[(d)] A component $B \in J$ that is not an inside-out component with respect to $C$, has the property $B \subseteq C$.
\item[(e)] If $C$ is pendant in $T$ and/or $T'$, then $F$ contains at most one bypass component with respect to $C$, and at most one inside-outside component with respect to $C$.
\end{enumerate}
\end{observation}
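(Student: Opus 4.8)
The plan is to reduce all five claims to a single geometric fact about embeddings, so the first step is to record that fact. Write the \emph{spine} of $C$ in $T$ as the path $p_1,p_2,\ldots,p_n$ of parents of $\ell_1,\ldots,\ell_n$, and note that for any taxon $\ell_i\in C$ and any component $B$ with $\ell_i\in B$ and $|B|\geq 2$, the embedding $T[B]$ must contain $p_i$, since the path from $\ell_i$ to any other leaf of $B$ leaves $\ell_i$ through $p_i$. From this I would derive two separation statements. First, if $C$ is not pendant in $T$, then $p_1\neq p_2$ and $p_{n-1}\neq p_n$, so the spine is a genuine simple path that separates $T$ into the $L_T(C)$-side and the $R_T(C)$-side; consequently any component $B$ that straddles $C$ in $T$ has an embedding $T[B]$ containing the \emph{entire} spine $p_1,\ldots,p_n$, because the unique path joining an $L_T(C)$-leaf of $B$ to an $R_T(C)$-leaf of $B$ runs along the whole spine. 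Second, if $C$ is pendant in $T$, then $T[C]$ is separated from the rest of $T$ by a single cut edge $e$, and any component meeting both $C$ and $X\setminus C$ must use $e$ in its $T$-embedding. Throughout I use that distinct components of an agreement forest have vertex-disjoint (hence edge-disjoint) embeddings in each tree, which is condition~2 of the definition.

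Given these facts, parts (a)--(d) are short disjointness arguments. For the ``one per tree'' half of (a): two distinct bypass components in $T$ would both contain the full spine in their $T$-embeddings and hence share $p_1$, which is impossible; the same holds in $T'$, giving at most two bypass components in total. If moreover some $B$ is a bypass component in both trees, then any other bypass component $B'$ would, in whichever tree it bypasses, duplicate the full spine together with $B$, a contradiction, so $B$ is unique. For (b) I would take a bypass component $B$, say in $T$: its $T$-embedding contains every spine vertex $p_i$, so the component containing $\ell_i$ cannot also contain $p_i$, forcing it to be the singleton $\{\ell_i\}$; hence $C$ is atomized. For (c), an inside-outside component $B_{io}$ contains some $\ell_i\in C$ together with an outside taxon, so $p_i\in T[B_{io}]$; a bypass component in $T$ contains the whole $T$-spine and hence $p_i$, contradicting vertex-disjointness, and the identical argument carried out in $T'$ rules out a bypass component in $T'$. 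Part (d) is immediate from the definitions: a component of $J$ that is not inside-outside meets $C$ but not $X\setminus C$, so it is contained in $C$.

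Part (e) is where the pendant hypothesis is used twice. Suppose $C$ is pendant in $T$. Then no component can straddle $C$ in $T$, so there are no bypass components in $T$; combined with the ``at most one per tree'' bound from (a), this leaves at most one bypass component overall. For the inside-outside bound, every inside-outside component crosses the unique cut edge $e$ of $T[C]$ in its $T$-embedding; since edge-disjointness of embeddings allows at most one component to use $e$, there is at most one inside-outside component.

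The main obstacle I expect is making the spine lemma fully rigorous: one must verify that for a non-pendant chain the parents $p_1,\ldots,p_n$ really do form a simple \emph{separating} path (using that $T$ is binary, that $p_1\neq p_2$ and $p_{n-1}\neq p_n$ exactly because $C$ is non-pendant, and Observation~\ref{obs:disjoint} to prevent the chain's embedding from colliding with itself), and that ``deleting $C$'' splits off precisely the two sides $L_T(C)$ and $R_T(C)$. Once this separation statement, and the single-cut-edge statement for the pendant case, are nailed down, each of (a)--(e) collapses to a one-line vertex- or edge-disjointness argument. A secondary point to treat carefully is that an eligible $2$-chain may be pendant in one tree but not the other, so the spine argument must always be applied in the tree where the chain is genuinely non-pendant, and the pendant argument in the tree where it is pendant.
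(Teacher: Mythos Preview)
The paper explicitly omits the proof of this observation (``for which we omit proofs''), so there is nothing to compare against; your proposal supplies exactly the kind of routine verification the authors chose to suppress, and it is correct.

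Two minor points of polish. First, in part~(e) you appeal to \emph{edge}-disjointness of embeddings, but the agreement-forest definition in the paper only stipulates \emph{vertex}-disjointness; of course vertex-disjointness implies edge-disjointness, but it is cleaner to argue directly that any inside-outside component must contain the spine endpoint $p_n$ (or $p_1$, depending on which end is the cherry) in its $T$-embedding, and then invoke vertex-disjointness. Second, your remark that Observation~\ref{obs:disjoint} is needed ``to prevent the chain's embedding from colliding with itself'' is not quite right: Observation~\ref{obs:disjoint} concerns \emph{distinct} chains, whereas the fact that $p_1,\ldots,p_n$ are pairwise distinct when $C$ is non-pendant follows directly from the degree-3 constraint on internal vertices together with the chain definition (the middle parents $p_2,\ldots,p_{n-1}$ are distinct by definition, and $p_1,p_n$ cannot coincide with any other $p_i$ without forcing degree $\geq 4$). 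With these tweaks your spine/separation lemma is airtight, and each of (a)--(e) is indeed a one-line disjointness consequence.
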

\noindent

 \begin{figure}[t]
\center
\scalebox{1.2}{\input{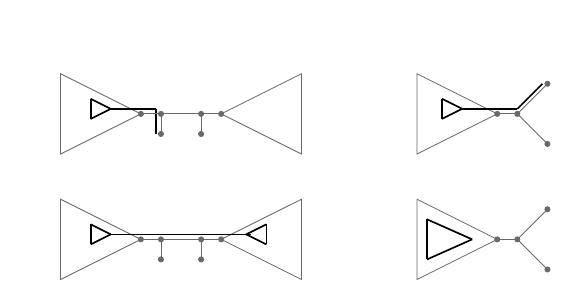_t}}
\caption{\fuchsia{Two unrooted binary phylogenetic trees $T$ and $T'$ on $X$ that have a common 2-chain $C=(a,b)$ that is pendant in $T'$. The leaf set of the subtree indicated by the left and right \redred{grey} solid triangle of $T$ is $L_T(C)$ and $R_T(C)$, respectively, whereas the leaf set of the subtree indicated by the solid \redred{grey} triangle of $T'$ is $X\setminus\{a,b\}$.
(i) An inside-outside component $B$ with respect to $C$ that does not straddle $C$ in $T$, and with \blueblue{$a\in B$}. (ii) A bypass component $B'$ in $T$ with respect to $C$, where the leaf sets of the subtrees indicated by the \blueblue{thick black} triangles are so that $Q=P\cup P'$. The components $B$ and $B'$ are indicated by their embeddings in $T$ and $T'$ (\blueblue{thick black lines}).}}
%\newblue{SK: Can you re-do this figure and the caption? In the proof the L and R sets are not defined on pendant chains, and pendant chains cannot be straddled, because my definition of straddling insists on non-chain taxa on both the left and right of the chain. I checked the proof of Theorem \ref{thm:allchainsintact} and we never use L or R sets on pendant chains, so it's fine to leave those sets as non-defined on those chains.}
%}}
\label{fig:chain-prooft2}
\end{figure}

We now start with the main proof. We distinguish several (sub)cases.

\begin{enumerate}
\item[1.] \textbf{$C$ is pendant in neither $T$ nor $T'$.} In this case, $|C| \geq 3$, because chains of length 2 are assumed to be pendant in at least one tree. 

\begin{enumerate}
\item[1.1.] Suppose that $F$ contains at least one bypass component with respect to $C$.
%\blueblue{(this part of the proof is illustrated in Figures~\ref{fig:chain-prooft}(iii) and~\ref{fig:3chain-forests}(iii))}.
Then $C$ is atomized, by Observation \ref{obs:useful}(b). Now, recall Observation \ref{obs:useful}(a). We start by splitting the bypass component(s), as follows. If $F$ contains a bypass component $B$ such that $B$ is a bypass component
in $T$ (but not in $T'$) with respect to $C$, we replace $B$ by two  components $L_T(C) \cap B$ and $R_T(C) \cap B$.
 %\redred{these two components correspond to the two dotted triangles in the left part of Figure~\ref{fig:chain-prooft}(iii).}
Next, if $F$ contains a bypass component $B'$ such that $B'$ is a bypass component
in $T'$ (but not in $T$) with respect to $C$, we replace $B'$ by two  components $L_{T'}(C) \cap B'$ and $R_{T'}(C) \cap B'$.
A third possibility (which can only hold if neither of the two previous possibilities holds -- see the second part of Observation \ref{obs:useful}(a)) is that $F$ contains a bypass component $B$ that is a bypass component in both $T$ \emph{and} $T'$ with respect to $C$. In this case, we
replace $B$ with non-empty components from the following list.
% OK
%\marginpar{SL. Slightly reordered -- moved list up}
\begin{itemize}
\item $(L_T(C) \cap B) \cap (L_T'(C) \cap B)$,
\item $(L_T(C) \cap B) \cap (R_T'(C) \cap B)$,
\item $(R_T(C) \cap B) \cap (L_T'(C) \cap B)$,
\item $(R_T(C) \cap B) \cap (R_T'(C) \cap B)$.
\end{itemize}
This captures the situation when we split the same component twice, because it bypasses $C$ in both trees, rather than splitting two distinct components each once. Crucially, at most 3 of these sets can be non-empty. (If all four were non-empty, then this would contradict the $T|B = T'|B$ property of agreement forests\footnote{Essentially we are deleting two edges in $T|B = T'|B$. These two edges induce what in standard phylogenetic terminology are called ``compatible splits''; they are compatible because the two edges are drawn from the same tree. Two splits are compatible if and only if at most 3 of the 4 described intersections are non-empty \cite{SempleSteel2003}.}).
Having split the bypass component(s), we next remove all components $\{x\}$ where $x \in C$, and introduce $C$ as a single component.
Splitting the bypass component(s) increases the number of components by at most 2, but replacing the singleton components with $C$ reduces the number of components by at least 2, because $|C| \geq 3$, so we still have an optimal agreement forest. 

Now, suppose for the sake of \fuchsia{a} contradiction that a previously preserved eligible chain $D$ is split by the modifications described above. Then, at least one of the following holds: (i) $D \cap L_T(C) \neq \emptyset$ and $D \cap R_T(C) \neq \emptyset$,
(ii) $D \cap L_{T'}(C) \neq \emptyset$ and $D \cap R_{T'}(C) \neq \emptyset$. However, if (i) holds then
$T[D] \cap T[C] \neq \emptyset$, and if (ii) holds then $T'[D] \cap T'[C] \neq \emptyset$, both of which contradict
Observation \ref{obs:disjoint}.

\begin{figure}[t]
\center
\scalebox{1}{\input{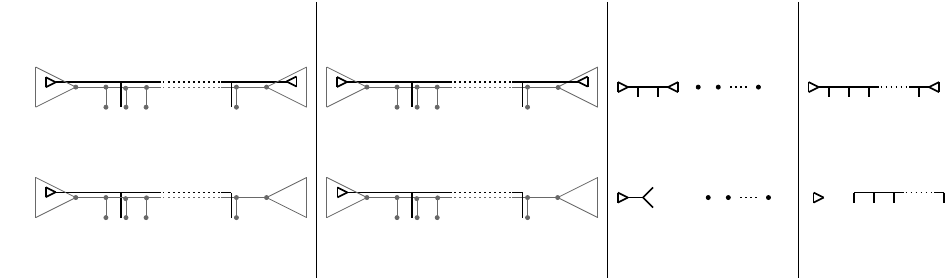_t}}
\caption{\blueblue{Setting as described in Case 1.2.1 in the proof of Theorem~\ref{thm:allchainsintact}, where we consider a chain $C=(1,2,\ldots,n)$ with $n\geq 3$ that is not pendant in $T$ or $T'$, and an agreement forest $F$ for $T$ and $T'$ that does not contain a bypass component and does contain exactly one inside-outside component $B_1$ with respect to $C$ such that $|B_1\cup C|\geq 2$: (i) $B_1$ straddles $C$ in $T$ or $T'$ (and hence $B_1$ straddles $C$ in \redred{both} $T$ and $T'$); (ii) $B_1$ does not straddle $C$ in $T$ or $T'$. The last column shows an agreement forest $F^*$ for $T$ and $T'$ with $|F^*|\leq |F|$ such that $C$ is preserved in $F^*$. Note that only those elements of $F$ and $F^*$ are shown that contain leaves labeled by elements in $C$ while all other elements are omitted since they are the same in $F^*$ and $F$. Furthermore, thick black lines in $T$ and $T'$ indicate the embedding of $B_1$.}}
\label{fig:1pt2pt1}
\end{figure}

 \item[1.2.] Suppose that $F$ does not contain any bypass components with respect to $C$. We now look at the number of inside-outside components with respect to $C$. If there are 0 inside-outside components, then by  Observation \ref{obs:useful}(d) all $B \in J$ have the property $B \subseteq C$. We remove all the $\geq 2$ components in $J$, and replace them with $C$, yielding a valid agreement forest with strictly fewer components than $F$, and thus a contradiction. If there are exactly 2 inside-outside components $B_1, B_2$, then we do the following to $F$: remove $B_1$ \fuchsia{and} $B_2$, discard any components
$B_i$ such that $B_i \subseteq C$ and then finally add the single component $B_1 \cup B_2 \cup C$. This yields a smaller agreement forest and thus also a contradiction.

%\marginpar{SK. I removed a ``Furthermore, by construction...''' line you had added here, see LaTeX}
% I don't think we need this next line, because all paths of the proof here give a contradiction.
%\fuchsia{Furthermore, by construction, no previously preserved eligible chain is split.}

The only subcase that remains is that there is exactly one inside-outside component $B_1 \in J$.  \redred{We will illustrate
this subcase with a number of additional figures}.

\begin{enumerate}
\item[1.2.1.] Suppose \fuchsia{that} $|B_1 \cap C| \geq 2$.
%\blueblue{(this part of the proof is illustrated in Figures~\ref{fig:chain-prooft}(ii) and~\ref{fig:3chain-forests}(ii))}.
Observe that if $B_1$ straddles $C$ in at least one of $T$ and $T'$, then (i)
the taxa in $C \setminus B_1$ are all singleton components in $F$ and (ii) $B_1$ actually straddles $C$ in \emph{both} $T$ and $T'$ (because $B_1 \cap C$ is not pendant in $B_1$). \redred{This situation is illustrated in Figure \ref{fig:1pt2pt1}(i).} We remove $B_1$, discard all the singleton components $\{x\}$ such that $x \in C \setminus B_1$, then add the component $B_1 \cup C$. This is a valid agreement forest because the (at least) two taxa in $B_1 \cap C$, combined with the fact that $B_1$ straddles $C$ in both trees, ensure that $T |  (B_1 \cup C) = T' | (B_1 \cup C)$. Noting that $| C \setminus B_1 | \geq 1$ (because $|J| \geq 2$), we thus obtain a smaller agreement forest and thus a contradiction on the optimality \fuchsia{of $F$}. Continuing, suppose that $B_1$ straddles $C$ in neither $T$ nor $T'$; \redred{this situation is illustrated in Figure \ref{fig:1pt2pt1}(ii)}. Informally this means that in both $T$ and $T'$ the inside-outside component $B_1$ enters the chain from only one side. We replace $B_1$ with $B_1 \setminus C$,
delete all components $B_i \subseteq C$ (there is at least one such component, because $|J| \geq 2$ and $B_1$ is the only inside-outside component), and introduce component $C$. The overall size of the agreement forest does not increase. This cannot split any previously preserved eligible chain $D$, because any such chain $D$ would have taxa in both $C$ and $B_1 \setminus C$, yielding $T[C] \cap T[D] \neq \emptyset$ and a contradiction to Observation \ref{obs:disjoint}.

\begin{figure}[t]
\center
\scalebox{1}{\input{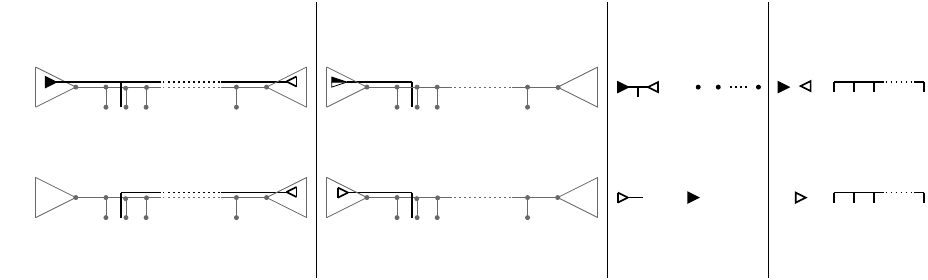_t}}
\caption{\blueblue{Setting as described in Case 1.2.2 in the proof of Theorem~\ref{thm:allchainsintact}, where we consider a chain $C=(1,2,\ldots,n)$ with $n\geq 3$ that is not pendant in $T$ or $T'$, and an agreement forest $F$ for $T$ and $T'$ that does not contain a bypass component and does contain exactly one inside-outside component $B_1$ with respect to $C$ such that $|B_1\cap C|= 1$:  (i) $B_1$ straddles $C$ in one of $T$ and $T'$, say $T$; (ii) $B_1$ does not straddle $C$ in $T$ or $T'$ which implies that there exists at least one component $B_i$ with $B_i\subseteq C$. The last column shows an agreement forest $F^*$ for $T$ and $T'$ with $|F^*|\leq |F|$ such that $C$ is preserved in $F^*$. Note that only those elements of $F$ and $F^*$ are shown that contain leaves labeled by elements in $C$ while all other elements are omitted since they are the same in $F^*$ and $F$. Furthermore, thick black lines in $T$ and $T'$ indicate the embedding of $B_1$.}}
\label{fig:1pt2pt2}
\end{figure}
 
\item[1.2.2.] Suppose \fuchsia{that} $|B_1 \cap C| = 1$.
%\blueblue{(this part of the proof is illustrated in Figures~\ref{fig:chain-prooft}(i) and~\ref{fig:3chain-forests}(i))}. 
Let $x$ be the unique taxon in $B_1 \cap C$. Suppose $B_1$ straddles $C$ in at least one of $T$ and $T'$; assume without loss of generality that it is in $T$ (\redred{see Figure \ref{fig:1pt2pt2}(i)}). Then the $\geq 2$ taxa in $C \setminus \{x\}$ must be singleton components in $F$. We delete $B_1$, delete the at least 2 singleton components formed by taxa in $C \setminus \{x\}$, and introduce components $L_T(C), R_T(C), C$. This does not increase the number
of components in the agreement forest, so it is still a maximum agreement forest. As usual, the only way that a previously preserved eligible chain could be split is if it contains taxa from both $L_T(C)$ and $R_T(C)$ which 
contradicts Observation \ref{obs:disjoint}. Finally, suppose \fuchsia{that} $B_1$ straddles $C$ in neither $T$ nor $T'$ (\redred{see Figure \ref{fig:1pt2pt2}(ii)}). We
%delete $B_1$
\newblue{replace $B_1$ with $B_1 \setminus C$}, delete all components $B_i \subseteq C$ (there will be at least one such component \blueblue{because $B_1$ is the only inside-outside component with respect to $C$}), and introduce
%$B_1 \cup C$,
\newblue{$C$. The overall size of the agreement forest does not increase. This cannot split any previously preserved eligible chain $D$, because then $T[D] \cap T[C] \neq \emptyset$, contradicting Observation \ref{obs:disjoint}.}
%reducing the number of components by at least 1, and thus obtaining a contradiction on optimality \fuchsia{of $F$}. 
\end{enumerate}
% \begin{figure}[t]
%\center
%\scalebox{1.2}{\input{3-chain-proof-forests.pdf_t}}
%\caption{\blueblue{Left: A subset of the components of an agreement forest $F$ for the two unrooted binary phylogenetic trees $T$ and $T'$ shown in Figure~\ref{fig:chain-prooft}. The components $B$, $B'$ and $B''$ refer to their namesake components in the same figure. Right: A subset of the components of an  agreement forest $F'$ for $T$ and $T'$ such $|F|\leq|F'|$ and there exists a component that preserves the 3-chain $(a,b,c)$.} }
%\label{fig:3chain-forests}
%\end{figure}
 
\end{enumerate}

\begin{figure}[t]
\center
\scalebox{1}{\input{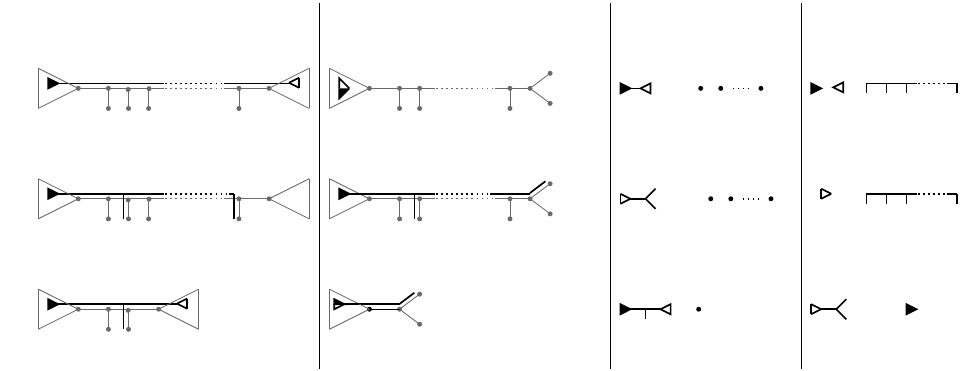_t}}
\caption{\blueblue{Setting as described in Case 2 in the proof of Theorem~\ref{thm:allchainsintact}, where we consider a chain $C=(1,2,\ldots,n)$ with $n\geq 2$ that is pendant in one of $T$ or $T'$, say $T'$, and an agreement forest $F$ for $T$ and $T'$: (i) $F$ contains a bypass component $B$ in $T$ with respect to $C$; (ii) $F$ contains no bypass component with respect to $C$ but a unique inside-outside component $B$ with respect to $C$ such that $|B\cap C|\geq 2$ (and hence $n\geq 3$) and $B$ does not straddle $C$ in $T$ or $T'$; (iii) $F$ contains no bypass component with respect to $C$ but a unique inside-outside component $B$ with respect to $C$ such that $|B\cap C|=1$ (and hence $C=(1,2)$)  and $B$ straddles $C$ in $T$.
The last column shows an agreement forest $F^*$ for $T$ and $T'$ with $|F^*|\leq |F|$ such that $C$ is preserved in $F^*$. Note that only those elements of $F$ and $F^*$ are shown that contain leaves labeled by elements in $C$ while all other elements are omitted since they are the same in $F^*$ and $F$. Furthermore, thick black lines in $T$ and $T'$ indicate the embedding of $B$.}}
\label{fig:case2}
\end{figure}

\item[2.] \textbf{$C$ is pendant in at least one of $T$ and $T'$.} \fuchsia{In this case, we have $|C|\geq 2$}. Assume without loss of generality that $C$ is pendant in $T'$. Recall from Observation \ref{obs:useful}(e) that $F$ contains at most one inside-outside component with respect to $C$, at most one bypass component with respect to $C$, and that from Observations \ref{obs:useful}(b) and (c) at most one of these two situations can hold. Suppose that $B \in F$ is a bypass component with respect to $C$;
% \blueblue{(this part of the proof is illustrated in Figures~\ref{fig:chain-prooft2}(ii) and~\ref{fig:2chain-forests}(ii))};
this is necessarily in $T$, since $B$ cannot be a bypass component in $T'$ with respect to $C$ (due to pendancy). \redred{Figure \ref{fig:case2}(i) illustrates this situation.} By Observation \ref{obs:useful}(b), $C$ is atomized. Now, consider the construction in \fuchsia{Case} 1.1. Here we only have one
bypass component to split, but on the other hand we are only allowed to use the weaker assumption $|C| \geq 2$. These
two cancel each other out, so Case 1.1 still goes through. So henceforth we can assume that there are no bypass components with respect to $C$.
If there are no inside-outside components with respect to $C$ then replacing the components in $J$ (which are all subsets of $C$) with $C$ reduces the overall number of components, because $|J| \geq 2$, immediately yielding a contradiction to the optimality \fuchsia{of $F$}. So let $B \in F$ be the unique inside-outside component in $J$. If $|B \cap C| \geq 2$ (note that if $|C|=2$ then this cannot happen, because it would imply $|J|=1$) then, due to the pendancy of $C$ in $T'$, $B$ straddles $C$ in neither $T$ nor $T'$; \redred{this is the situation shown in Figure \ref{fig:case2}(ii).}  (\newblue{The fact that $B$ does not straddle $C$ in $T'$ is automatic, since pendant chains cannot be straddled, by definition. The same holds if $C$ is pendant in $T$. If $C$ is not pendant in $T$, observe that if $B$ straddled $C$ in $T$, then $B \cap C$ would not be pendant in $T|B$. However, the fact that
$C$ is pendant in $T'$ means that $B \cap C$ must be pendant in $T'|B$. Taken together we would have $T|B \neq T'|B$, contradicting the assumption
that $B$ is a component of an agreement forest of $T$ and $T'$.)}
% is not pendant in
%\newblue{$T|B$}
% I really think it has to be restricted to B here, because I'm arguing that B has different topology in both trees
%\fuchsia{$T'$},
%but the fact that
%$C$ is pendant in $T'$ means that
%$B \cap C$
%must be pendant in
%\newblue{$T'|B$}.
% same comment as above
%\fuchsia{$T'$}
%\marginpar{SL. Please double-check the two purple $T'$ here.}\marginpar{SK. I changed them back and clarified the idea, see what you think!} 

% \begin{figure}[t]
%\center
%\scalebox{1.2}{\input{_t}}
%\caption{\blueblue{Left: A subset of the components of an agreement forest $F$ for the two unrooted binary phylogenetic trees $T$ and $T'$ shown in Figure~\ref{fig:chain-prooft2}. The components $B$ and $B'$ refer to their namesake components in the same figure. Right: A subset of the components of an  agreement forest $F'$ for $T$ and $T'$ such $|F|=|F'|$ and there exists a component that preserves the 2-chain $(a,b)$.} }
%\label{fig:2chain-forests}
%\end{figure}
We replace $B$ with $B \setminus C$, delete all components $B_i \subseteq C$ (there must be at least one such component), and introduce $C$ as a component. Thus, the overall size of the agreement forest does not increase. This cannot split any previously preserved eligible chain $D$, because (by the usual argument) any such chain $D$ would have taxa in both $C$ and $B\setminus C$, and this contradicts Observation \ref{obs:disjoint}. So assume that $|B \cap C| = 1$. If $|C| \geq 3$, Case 1.2.2 goes through unchanged. If $|C|=2$,
%\blueblue{(this part of the proof is illustrated in Figures~\ref{fig:chain-prooft2}(i) and~\ref{fig:2chain-forests}(i))},
then Case 1.2.2 mostly still holds, except for one situation: when $B_1$ straddles $C$ in (say) $T$. \redred{This is illustrated in Figure \ref{fig:case2}(iii).} The problem here is that $C \setminus \{x\}$ (where $x$ is as 
defined as in that case) contains only 1 taxon, so we introduce more components than we delete. However, we can modify the argument as follows. Let $y$ be the unique taxon in $C$ that is not equal to $x$. We delete \fuchsia{$B$ and} $\{y\}$, and introduce components $L_T(C) \cup \{x,y\}, R_T(C)$. Hence, the agreement forest does not increase in size. A previously preserved eligible chain $D$ cannot be split by this modification, since it would imply that $D$ contains taxa from both $L_T(C)$ and $R_T(C)$, yielding the usual contradiction to Observation \ref{obs:disjoint}.
\end{enumerate}
\qed
\end{proof}

\noindent{\bf Acknowledgements.}
\fuchsia{The second author was supported by the New Zealand Marsden Fund. Both authors would also like to thank the reviewers for their insightful comments.}

% DONE ALREADY :)
%\marginpar{SK. The ref to our 15k article in the bibliography needs a bit more detail}

\bibliography{NewRulesForTBR-Accepted_Version}{}
\bibliographystyle{plain}

\end{document}